\newtheorem{theorem}{Theorem}[section]
\newtheorem{lemma}[theorem]{Lemma}
\newtheorem{corollary}[theorem]{Corollary}
\theoremstyle{definition}
\newtheorem{definition}[theorem]{Definition}
\let\oldalgorithm\algorithm
\renewcommand{\algorithm}{\oldalgorithm\normalfont}
\newcommand{\termasm}[1]{\mathcal{A}_{\Box}[{#1}]}
\newcommand{\prodasm}[1]{\mathcal{A}[{#1}]}
\newcommand{\dom}[1]{{\rm dom}(#1)}
\newcommand{\glue}{\mathcal{G}}
\newcommand{\seed}{\sigma}%
\newcommand{\rstake}{\mathcal{R}^s}
\newcommand{\rdanger}{\mathcal{R}^d}
\newcommand{\cbound}{\mathcal{B}^c}
\newcommand{\fbound}{\mathcal{B}^f}
\newcommand{\bbound}{\mathcal{B}^b}
\newcommand{\ebound}{\mathcal{B}^e}
\newcommand{\dbound}{\mathcal{B}^d}
\newcommand{\seedbound}{\mathcal{B}^\seed}
\newcommand{\sbound}{\mathcal{B}^s}
\newcommand{\fband}{f^b}
\newcommand{\vect}[1]{{\protect\overrightarrow{#1}}}
\newcommand{\resp}{respectively\xspace}
\newcommand{\ignore}[1]{}
\renewcommand{\d}[1]{{#1}^\ast}
\newcommand{\zs}{\mathbb{Z}^2}
\title{A pumping lemma for non-cooperative self-assembly}
\author{Pierre-Étienne Meunier\\
  Aalto University \\
\href{mailto:pierre-etienne.meunier@aalto.fi}{pierre-etienne.meunier@aalto.fi}
\thanks{Aalto University, Helsinki, Finland and Aix Marseille Université, CNRS, LIF UMR 7279, 13288, Marseille, France. Supported in part by National Science Foundation Grant CCF-1219274.}
\and
Damien Regnault\\
Université d'Évry Val-d'Essonne\\
\href{mailto:damien.regnault@ibisc.fr}{damien.regnault@ibisc.fr}
\thanks{Université d'Évry Val-d'Essonne, IBISC  EA 4526, 91037, Évry, France. Supported in part by ANR project Quasicool (ANR-12-JS02-011-01)}}
\date{}
\begin{document}

\maketitle

\let\oldlabel=\label
\let\oldref=\ref
\renewcommand\label[1]{\oldlabel{ext-#1}}
\renewcommand\ref[1]{\oldref{ext-#1}}
\let\oldcite=\cite
\renewcommand\cite[1]{\oldcite{ext-#1}}

\newcommand\g{}%

\newcommand\var[1]{\ocwlowerid{#1}}
\renewcommand\label[1]{\oldlabel{#1}}
\renewcommand\ref[1]{\oldref{#1}}
\renewcommand\cite[1]{\oldcite{#1}}
\newcommand\prog{\href{http://users.ics.aalto.fi/meunier/pumpability.html}{\tt http://users.ics.aalto.fi/meunier/pumpability.html}}
\begin{abstract}
We prove the computational weakness of a model of tile assembly that has so far resisted many attempts of formal analysis or positive constructions. Specifically, we prove that, in Winfree's abstract Tile Assembly Model, when restricted to use only noncooperative bindings, any long enough path that can grow in all terminal assemblies is \emph{pumpable}, meaning that this path can be extended into an infinite, ultimately periodic path.

This result can be seen as a geometric generalization of the pumping lemma of finite state automata, and closes the question of what can be computed deterministically in this model.  Moreover, this question has motivated the development of a new method called \emph{visible glues}. We believe that this method can also be used to tackle other long-standing problems in computational geometry, in relation for instance with self-avoiding paths.

Tile assembly (including non-cooperative tile assembly) was originally introduced by Winfree and Rothemund in STOC 2000 to understand how to \emph{program shapes}.
The \emph{non-cooperative} variant, also known as \emph{temperature 1 tile assembly}, is the model where tiles are allowed to bind as soon as they match on one side, whereas in cooperative tile assembly, some tiles need to match on several sides in order to bind.
In this work, we prove that only very simple shapes can indeed be programmed, whereas exactly one known result (SODA 2014) showed a restriction on the assemblies general non-cooperative self-assembly could achieve, without any implication on its computational expressiveness.
With non-square tiles (like polyominos, SODA 2015), other recent works have shown that the model quickly becomes computationally powerful.

\end{abstract}
\clearpage
\section{Introduction}

A possible approach to natural sciences is to try and write programs using the same kind of programming language as we think nature uses. If we can implement our theoretical algorithms in the actual natural systems, we will know the theory is meaningful to the systems studied. Through this process, we can learn from theorems \emph{reasons why} something is true, yielding insights beyond the \emph{modeling} of observed phenomena. This approach contrasts with other approaches where natural scientists test hypotheses against experiments to understand \emph{what happens}.

Although present since Turing~\cite{Turing1936}'s and Von Neumann's~\cite{vonNeumann1966} works, this idea has really been able to develop and extend into physical realizations only in recent years.
One of these realizations is the first implementation of Shor's algorithm in 2001~\cite{vandersypen2001}, providing a precious link between techniques for programming qubits devised by computer scientists, and the bricks actually used by nature.
Another achievement is the connection observed by Winfree in 1998~\cite{Winf98} between core concepts from theoretical computer science (computing machines, tilings) and a kind of building bricks devised by Seeman~\cite{Seem82} using DNA. One of the main models used in this connection, called the \emph{abstract Tile Assembly Model}, has yielded an impressive number of experimental demonstrations~\cite{WinLiuWenSee98,yurke2000dna,RothOrigami}.

Although they use different concepts and tools, these works use the same approach: trying to write programs using the language of nature (of physics in the former case, of chemistry in the latter), and confront these programs to the physical world by implementing them.

In this work, we close a long-standing open problem from the second approach, by proving that a simple version of the programming language of tile assembly, although ubiquitous in many systems, is not sufficient to perform general purpose computation. More precisely, in the abstract Tile Assembly Model, we are interested in the interactions and bindings of grounds of matter represented by square tiles, with glues of a certain \emph{color} and integer \emph{strength} on each of their four borders.
The dynamics start from an initial assembly called the \emph{seed}, and proceeds asynchronously and nondeterministically, one tile at a time, according to the following rule: a tile may attach to the current assembly if the sum of glue strengths on its sides that match the colors of adjacent tiles sum up to at least a parameter of the model called the temperature $\tau=1,2,3\ldots$. In particular, this means that unlike in Wang tilings (one inspiration of this model), adjacent tiles may have a \emph{mismatch}, i.e. disagree on the glue types of their common border.

This model is an abstraction of a simple chemical fact: when the temperature of a solution is increased, so is molecular agitation; for a tile to stay stably attached to an assembly, it needs then either stronger bonds to that assembly, or bonds to a larger neighborhood.

\paragraph{Temperature 1} This work will exclusively focus on \emph{temperature 1} tile assembly, also called \emph{non-cooperative} self-assembly. At higher temperatures, fewer assemblies are stable, allowing more control over producible assemblies: indeed, temperature~2 self-assembly is able to simulate arbitrary Turing machines \cite{Winf98,RotWin00,jCCSA}, and produce arbitrary connected shapes with a number of tile types within a log factor of their Kolmogorov complexity \cite{SolWin07}.  More surprisingly, this model has even been shown intrinsically universal \cite{IUSA}, meaning that there is a single tileset capable of simulating arbitrary tile assembly systems, modulo rescaling. In generalizations of this model, a single tile can even be sufficient to simulate all tile assembly systems and therefore all Turing machines~\cite{Demaine2014}.

In all known generalizations of the model, non-cooperative self-assembly is capable of arbitrary Turing computation: 3D cubic tiles~\cite{Cook-2011}, stochastic assembly sequences~\cite{Cook-2011}, hierarchical self-assembly~\cite{Versus}, polyominoes~\cite{Fekete2014}, duples~\cite{Hendricks-2014}, tiles with signals~\cite{Jonoska2014}, geometric tiles~\cite{geotiles} or negative glues~\cite{Patitz-2011}. Moreover, the synchronous version of this model can simulate arbitrary cellular automata.

The framework of intrinsic simulations (i.e. simulations up to rescaling) has recently yielded the first proof of a qualitative (and indeed geometric) difference between non-cooperative tile assembly and the more general model~\cite{Meunier-2014}. However, that result had no computational implications: indeed, it also holds in the three-dimensional generalization of temperature~1, known to be Turing-universal~\cite{Cook-2011}.

Therefore, an absolute requirement to understand this model seems to be an intuition on the role of planarity, and the shape of tiles.
Here, we introduce a new framework to study how information can be communicated in a planar space, via geometric interactions.
This framework will then (in the end of our proof) allow us to abstract geometric considerations away and reason on large boxes in which paths are forced to grow. This is a significant progress in the field, since the ``low-level geometry'' of paths producible at temperature~1 has been notoriously difficult to understand.

\paragraph{Relation to other works}
As a corollary of our main result, we prove a conjecture by Doty, Patitz and Summers~\cite{Doty-2011}: \emph{for all temperature 1 directed tile assembly systems, there is a constant $c$ such that all paths longer than $c$ producible by the system are pumpable}. By one of their results (conditioned upon the conjecture we are proving here), the set of producible assemblies of temperature~1 directed tile assembly systems is therefore semi-linear, and hence computationally simple.

Moreover, counting and sampling self-avoiding walks in the plane is an old problem at the intersection polymer chemistry and computer science, introduced by Flory~\cite{Flory53}; an early attempt to solve it was made by Knuth~\cite{knuth:math}, and this field has remained active today~\cite{Bousquet2010}. Another interpretation of our questions is the following problem, related to these works: starting from any point in $\zs$, start a self-avoiding walk deterministically (with memory). How far can you go without ever entering a cycle, if you only have $n$ bits of memory?

\subsection{Main result}

Our result can be seen as a two-dimensional equivalent of the pumping lemma on deterministic finite automata \cite{Sipser}: we prove that if a non-cooperative tile assembly system can \emph{always} grow assemblies over a certain size (depending only on the size of their seed and on the number of tile types used), then these paths can be extended into ultimately periodic paths.

However, remark that non-cooperative systems can grow at least the same assemblies as cooperative ones: intuitively, their growth is ``harder to control'', resulting in more possible assemblies.  This is why our result is specific to patterns that can grow \emph{in all assemblies} producible by the system:

\begin{theorem}
\label{def:main:graph}
  Let $\mathcal{T}=(T,\sigma,1)$ be a tile assembly system such that the seed assembly $\sigma$ is finite and connected.  There is a constant $c(|T|,|\dom\sigma|)$ such that any path $P$, that can grow in all assemblies of $\mathcal{T}$ and reaches a point at a distance more than $c(|T|,\sigma)$ from $\sigma$, is \emph{pumpable}.
\end{theorem}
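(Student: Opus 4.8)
The plan is to argue by contradiction: suppose a path $P$ is \emph{not} pumpable, yet reaches very far from the seed, and derive a bound on how far it can go, thereby forcing $c(|T|,|\dom\sigma|)$. The fundamental resource is that along $P$ there are many repeated \emph{tile types}, hence many candidate ``pumping windows'' — segments of $P$ that start and end with the same tile type and whose endpoints differ by some vector $\vec v$. For such a window, translating the portion of $P$ after the second occurrence by $\vec v$ would produce a periodic continuation; the \emph{only} obstacle to this being a legal producible path is a \emph{geometric collision}, i.e. the translated copy bumps into $P$ itself (or into some other forced part of the assembly). So non-pumpability means: every one of the (many) repeated-tile windows is blocked by a collision. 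The heart of the proof is to show that too many simultaneous collisions cannot be packed into the plane without forcing some window to actually be free — a counting/geometry argument reminiscent of, but much more delicate than, the one-dimensional pigeonhole in the automata pumping lemma.

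The first technical step is to set up the right notion of ``forced'' growth: since $P$ can grow in \emph{all} terminal assemblies, any finite prefix of $P$ is present in every terminal assembly, so collisions we need to worry about are collisions with things that \emph{must} be there — this is where we invoke that the system is effectively directed along $P$ and where the ``visible glues'' method is deployed. The idea of visible glues is to record, for each point the path passes through, which glue on its boundary is ``exposed'' to the outside of the region enclosed by $P$, and to use a discrete Jordan-curve / winding-number bookkeeping to track how the path wraps around obstacles. Using this, I would classify the blocked windows by the \emph{type of obstacle} they collide with and by the rough \emph{direction} of the collision, obtaining a bounded palette of ``collision patterns'' (here the bound depends polynomially, or at worst exponentially, on $|T|$).

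Next comes the main combinatorial engine. Partition $P$ into consecutive segments, each long enough to guarantee (by pigeonhole on $|T|$) an internal repeated tile and hence at least one candidate window. If every such window is blocked, each segment ``spends'' a collision against some earlier or later part of $P$; I would then build an auxiliary directed graph on these segments (an edge $i\to j$ when segment $i$'s pumping is blocked by segment $j$) and argue that the collision geometry forces this graph to be highly structured — e.g. nested or laminar, because two collisions witnessed by the path cannot cross without the path self-intersecting. A laminar family of collisions on a path of length $L$ has depth $O(\log L)$ or is otherwise severely constrained; combined with the fact that each ``level'' can only reuse $O(|T|)$ distinct glue/displacement signatures before \emph{some} window's displacement vector $\vec v$ repeats in a compatible orientation (yielding a genuinely free pump), this caps $L$ by a function $c(|T|,|\dom\sigma|)$. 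Contradiction with ``reaches distance $> c$''.

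The step I expect to be the main obstacle is precisely this last one: showing that the web of collisions blocking all potential pumps is itself geometrically impossible beyond a bounded size. It is easy to block \emph{one} pump with \emph{one} obstacle; the difficulty is that blocking obstacles are themselves made of path and are themselves subject to being pumped, so one must run a simultaneous induction over all scales — essentially a ``pump or branch'' dichotomy (the \branch\ operation in the paper's notation) that either extracts a periodic path outright or produces a strictly simpler configuration in a well-founded measure. Making that measure decrease while controlling the geometry via visible glues, and ensuring the constants stay finite (and ideally explicit in $|T|$ and $|\dom\sigma|$), is where the real work lies; everything before it is setup and bookkeeping.
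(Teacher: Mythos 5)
Your high-level framing (repeated tile types give candidate pumping windows; non-pumpability forces a collision for each; geometry must eventually make one window free) matches the paper's starting intuition, and you correctly identify visible glues as the device for tracking which side of the path obstacles lie on. But the combinatorial engine you propose in the middle of your proof is not the paper's, and I don't think it works as stated.

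The central gap is the ``laminar family / auxiliary directed graph'' step. You assert that two blocking collisions ``cannot cross without the path self-intersecting'' and hence form a laminar structure of logarithmic depth. Neither claim is justified, and the second is almost certainly false: the paper's bound $c(|T|,|\dom\sigma|)$ is built from iterated factorials of exponentials coming from the Window Movie Lemma of Meunier et al.\ (2014), far larger than anything a log-depth laminar argument would produce, which is strong evidence that the collision structure is \emph{not} laminar in the way you hope. More importantly, you skip the hard case that the paper spends most of its effort on: a translated window can land on $P$ and \emph{agree} with it (intersect without conflicting). Then you get neither a pump nor a block, and your ``each segment spends a collision'' accounting does not decrease any obvious measure. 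The paper handles this by an explicit two-mode algorithm (forward/backward translation by $\vect{P_iP_j}$ and $\vect{P_jP_i}$) that, on each agreement, extends a \emph{stake path} --- a concatenation of segments of $P$ and translates of segments of $P$ --- and proves invariants (Lemma~\ref{lem:ih}) showing the stake path always advances. This stake path is then used, in the conclusion, to \emph{translate the position of a conflict}: one grows the stake path shifted by $\vect{P_iP_j}$ so that the pumping gets one extra iteration and occupies a position $P$ needed, making $P$ fragile. Nothing in your proposal plays the role of this ``move the conflict'' step, and without it a collision does not by itself produce the assembly that blocks $P$.

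Two further missing ingredients. First, you never invoke the Window Movie Lemma, which is indispensable in three places in the paper: to break ``diet'' paths that stay in a thin horizontal stripe (Lemma~\ref{lem:band}, giving the width of the danger zone), to handle ``cage-free'' paths whose translate by $\vect{P_iP_j}$ never re-intersects $P$ (Lemma~\ref{lem:cagefree}), and to lift pumping from assemblies back to paths (Lemma~\ref{lem:adaptwml}, a genuinely delicate step since the WML's periodic assembly need not be a pumping of $P$). Second, the Reset Lemma (Lemma~\ref{lem:banana}): your plan implicitly assumes that once you have a candidate window high above the seed, the only obstacles are other parts of $P$ far from the seed; but a priori the path could descend back into the seed region arbitrarily often, and any pumping attempt would immediately crash into the seed. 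The paper rules this out by showing that each return to the bounded ``danger zone'' around $\sigma$ after reaching a certain height forces a \emph{nice U-turn}, which is itself pumpable-or-fragile by the algorithm. This ``you cannot re-read the seed too many times'' step has no counterpart in your sketch. Finally, note that the $\branch$ macro you gesture at is defined but never used in the paper; the ``pump or branch'' dichotomy is not the actual structure --- the algorithm is forward/backward, and its termination is argued via dominating tiles, not a well-founded measure of the kind you describe.
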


The term \emph{pumpable} will be defined in Section \ref{def:main}. Intuitively, we say that $P$ is pumpable if one of its subpaths $P_{i,i+1,\ldots,j}$ can be repeated infinitely many times immediately after $P_{1,2,\ldots,j-1}$, without \emph{conflicting} with $\sigma$ nor with $P_{1,2,\ldots,j-1]}$, nor with any other repetition.

This theorem implies the longstanding conjecture~\cite{RotWin00, Roth01, Cook-2011,  Doty-2011, Reif-2012, Manuch-2010, Patitz-2011,Meunier-2014, Fekete2014} that no Turing computation can be done in this model in a deterministic way: indeed, an algorithm can characterize producible assemblies by first growing the initial assembly of a constant radius (depending only on the size of the tileset and size of the seed) around the seed, and from there start all possible paths that can go past this radius (there is a finite number of them). By our main theorem, they become periodic after this radius, and hence can be completely characterized algorithmically.

\section{Definitions and basic properties}
\label{def:main}

We first define non-cooperative tile assembly, using standard formalism.

Let $G$ be the grid graph of $\mathbb{Z}^2$, i.e. the undirected graph whose vertices are the points of $\mathbb{Z}^2$, and for any two points $A,B\in\mathbb{Z}^2$, there is an edge between $A$ and $B$ if and only if $\max(|x_B-x_A|,|y_B-y_A|)=1$.%
The \emph{Manhattan distance} between two vertices $u$ and $v$, also written as $||\vect{uv}||_1$, is the length of the shortest path in $G$ between $u$ and $v$. The \emph{Manhattan diameter} of a connected subgraph $G'$ of $G$ is the maximal Manhattan distance of two vertices in $G'$.

Moreover, for any two integers $i$ and $j$ such that $i<j$, we write $[i,j]=\{i,i+1,\ldots,j\}$. When used in the subscript of a sequence, as in $P_{[i,j]}$, this notation means ``the subsequence of $P$ between indices $i$ and $j$ (inclusive)'', or more precisely the sequence $P_i,P_{i+1},\ldots,P_j$.

Finally, paths and sequences in our construction are indexed from 1: the last tile of a path $P$ is written $P_{|P|}$, where $|P|$ means ``the length of $P$''.

\subsection{Cutting the plane with lines}
Sometimes in our proof, we will define cuts of the grid graph of $G$ using lines, which typically live in $\mathbb{R}^2$ and seem to require a planar embedding of the grid graph, as well as ``non-integer half-square-tiles''.

Although our drawing use a planar embedding, our proofs do not. Instead, let $A\in\mathbb{Z}^2$ and let $\vect{v}$ be a vector of $\mathbb{Z}^2$: the \emph{cut of $\mathbb{Z}^2$ by the line of vector $\vect{v}$ passing through $A$} is the following two connected components of $G$:
\begin{eqnarray*}
  U&=&\{X\in\mathbb{Z}^2 | \det(\vect{v},\vect{AX}) \geq 0\}\\
  V&=&\{X\in\mathbb{Z}^2 | \det(\vect{v},\vect{AX}) < 0\}
\end{eqnarray*}

Remark that, in this case, all determinants are integers.

\subsection{The abstract Tile Assembly Model}

A \emph{tile type} is a unit square with four sides, each consisting of a \emph{glue label} and a nonnegative integer strength. Formally, a tile $t=(n,e,s,o)$ is an element  of $(\glue\times\mathbb{N})^4$, where $\glue$ is a finite set of glue labels. Moreover, $n$ is called its \emph{north glue}, $e$ its \emph{east glue}, $s$ its \emph{south glue} and $w$ its \emph{west glue}.

Let $T$ be a finite set of tile types.
An \emph{assembly over $T$} is a partial function of $\mathbb{Z}^2\dashrightarrow T$, whose domain is a connected component of $\mathbb{Z}^2$. Intuitively, an assembly is a positioning of tile types at some positions in the plane.

We say that two neighboring tiles of an assembly \emph{interact} if the glue labels on their abutting side are equal, and have positive strength. An assembly $\alpha$ induces a weighted \emph{binding graph} $G_\alpha=(V_\alpha,E_\alpha)$, where $V_\alpha=\dom\alpha$, and there is an edge $(a,b)\in E$ if and only if $a$ and $b$ interact, with weight the glue strength between $a$ and $b$.
An assembly $\alpha$ is said to be \emph{$\tau$-stable} if any cut of $G_\alpha$ has weight at least $\tau$.

A \emph{tile assembly system} is a triple $\mathcal{T}=(T,\sigma,\tau)$, where $T$ is a finite tile set, $\sigma$ is a $\tau$-stable assembly called the seed, and $\tau\in\mathbb{N}$ is the temperature.
In this paper, $\tau$ will always be equal to~$1$.

Given two $\tau$-stable assemblies $\alpha$ and $\beta$, we say that $\alpha$ is a \emph{subassembly} of $\beta$, and write $\alpha\sqsubseteq\beta$ if $\dom\alpha\subseteq\dom\beta$, and for all position $p\in\dom\alpha$, $\alpha(p)=\beta(p)$. We also write $\alpha\rightarrow_1^{\mathcal T}\beta$ if $\alpha\sqsubseteq\beta$ and $|\dom\beta\setminus\dom\alpha|=1$ (i.e. if we can obtain $\beta$ from $\alpha$ by a single tile attachment).

We say that $\beta$ is \emph{producible} from $\alpha$, and write $\alpha\rightarrow^{\mathcal T}\beta$ (or simply $\alpha\rightarrow\beta$ if there is no ambiguity), if there is a (possibly empty) sequence $\alpha=\alpha_0\rightarrow_1^{\mathcal T}\alpha_1\rightarrow_1^{\mathcal T}\ldots\rightarrow_1^{\mathcal T}\alpha_{n-1}=\beta$.
The set of \emph{productions} of a tile assembly system $\mathcal{T}=(T,\sigma,\tau)$ is $\prodasm {\mathcal T}=\{\alpha | \sigma\rightarrow^{\mathcal T}\alpha\}$. Moreover, an assembly $\alpha$ is called \emph{terminal} if there is $\beta$ such that $\alpha\rightarrow_1^{\mathcal{T}}\beta$, and the set of productions of a tile assembly system $\mathcal T$, that are terminal assemblies, is written $\termasm{\mathcal T}$.

\subsection{Concurrency and conflicts}

The main arguments of this proof take advantage of conflicting assemblies to build an assembly that blocks a path. This involves a subtle technical difficulty, that can be easily dealt with using proper vocabulary. When assemblies overlap, two different things can happen: either the assemblies disagree on the tile types they place at their common positions, or they agree.

In particular, in our lemmas about U-turns (Section~\ref{sec:algo}), one might be tempted to ignore the case where assemblies overlap and agree. Similarly, in some definitions of pumping for a path, the result of the pumping has sometimes been assumed to also be a path, whereas consecutive iterations might simply intersect and agree.

If two assemblies $\alpha$ and $\beta$ agree on all their common positions, i.e. if there is an assembly $\gamma$ of domain $\dom\alpha\cup\dom\beta$ such that $\alpha\sqsubseteq\gamma$ and $\beta\sqsubseteq\gamma$, we write $\alpha\cup\beta$ for this assembly $\gamma$.

If two assemblies $\alpha$ and $\beta$ are such that there is a position $p\in\dom\alpha\cap\dom\beta$, but $\alpha(p)\neq \beta(p)$, we say that $\alpha$ and $\beta$ \emph{conflict} at position $p$. On the other hand, on a position $p\in\dom\alpha\cap\dom\beta$, we say that $\alpha$ and $\beta$ \emph{intersect without conflicting} (or simply \emph{intersect}) if $\alpha(p)=\beta(p)$.

Finally, we say that a tile assembly system is \emph{deterministic} if it has exactly one (potentially infinite) terminal assembly.

\subsection{Paths and path assemblies}
\newcommand\pos{\mathrm{pos}}
\newcommand\type{\mathrm{type}}
An important point about temperature 1 tile assembly, is that any path in the binding graph of an assembly can start to grow, independent from anything else. More precisely, if $\mathcal{T}=(T,\sigma,1)$ is a tile assembly system, then for any $\alpha\in\prodasm{\mathcal{T}}$, and any path $P$ in the binding graph of $\alpha$ such that $P_0$ is in $\sigma$, an immediate induction on the length of $P$ shows that the restriction of $\alpha$ to $\sigma\cup P$ is in $\prodasm{\mathcal{T}}$.

Since assemblies following paths are particularly important in our proof, we define them now using \emph{sequences} instead of the more general formalism of \emph{assemblies}: first, for any element $a=(p,t)\in\mathbb{Z}^2\times T$, we call $p$ the \emph{position} of $a$, written $\pos(a)$, and $t$ the \emph{type} of $a$, written as $\type(a)$. We also write the position of $a$ as $(x_a,y_a)$.
Let then $P$ be any sequence of $\mathbb{Z}^2\times T$. If for all $i,j$, $\pos(P_i)=\pos(P_j) \Rightarrow\type(P_i)=\type(P_j)$ ($P$ might not be simple, but this condition means that any two tiles of $P$ at the same position have the same type), we define the assembly \emph{induced} by $P$ as the assembly  $\alpha_P$ such that $\dom{\alpha_P}=\{\pos(P_i) | i\in\{1,2,\ldots,|P|-1\}\}$ and for all $i$, $\alpha(\pos(P_i))=\type(P_i)$.

Moreover, if the sequence of positions of $P$, i.e. $(\pos(P_i))_{i\in\{1,2,\ldots,|P|\}}$, is a path of $\mathbb{Z}^2$, and if $P$ induces an assembly $\alpha$ such that for all $i\in\{1,2,\ldots,|P|-1\}$, the tiles at positions $(x_i,y_i)$ and $(x_{i+1},y_{i+1})$ in $\alpha$ interact, we call $P$ a \emph{path assembly} (even though not formally an assembly, since a path assembly is a sequence of $\mathbb{Z}^2\times T$, and an assembly is a function of $\mathbb{Z}^2\rightarrow T$).

\subsection{Cutting the plane with paths and lines}

In this proof, we will use quite extensively cuts of $\mathbb{Z}^2$ delimited by path assemblies and horizontal rays. This is not formally correct, since ``horizontal rays'' are formally sequences of edges of the grid graph of $\mathbb{Z}^2$, and path assemblies are sequences of vertices.

To solve this issue, we adopt the following convention: every time a cut is defined in this way in our proof, it will be defined using a sequence of paths, either in the grid graph of $\mathbb{Z}^2$ (defined by path assemblies), or in the dual of the grid graph of $\mathbb{Z}^2$ (defined by rays).

We first disconnect the grid graph of $\mathbb{Z}^2$ by removing all edges from the rays, and all edges connected to the vertices. This produces at least one infinite connected component $A$, and a number of other connected components (in this paper, at most one of these other components is infinite). We then let $B$ be the union of all those connected components. The formal cut of $\mathbb{Z}^2$ defined by this construction is then $(A,B)$.

\subsection{Pumpable paths}

We are now going to define a particular kind of path assemblies, made by repeating a part of a path assembly periodically.
Let $\mathcal{T}=(T,\sigma,1)$ be a temperature 1 tile assembly system, and $P$ be a path assembly producible by $\mathcal{T}$, of length at least 2, and $i,j\in\{1,2,\ldots,|P|-1\}$ two integers such that $i< j$.

The \emph{pumping of $P$ between $i$ and $j$} (also written as ``the pumping of $P_{[u,v]}$'' is the sequence $Q$ of $\mathbb{Z}^2\times T$ defined for all integer $k\in\mathbb{N}$ by $Q_k=P_k$ if $k<i$, and $Q_k=P_{i+((k-i)\mod (j-i))} + \left\lfloor\frac{k-i}{j-i}\right\rfloor\vect{P_iP_j}$.

This definition does not imply that $\sigma\cup Q$ is a producible assembly. However, when $P_i$ and $P_j$ are of the same type, and $\sigma\cup Q$ is not producible, then some tile of $Q$ must be at the same position as another tile of $P_{[1,j]}$.

A path assembly is said to be \emph{pumpable} when one of its pumpings induces an assembly producible by $\mathcal{T}$. Remark that the pumping of a path is a path assembly (i.e. its positions follow a single path in the binding graph of some producible assembly), although not necessarily a \emph{simple} path assembly.

Also, by this definition, any path with a pumpable prefix is also pumpable.

\subsection{Fragility}
\label{def:fragile}

  Let $\mathcal{T}=(T,\sigma,1)$ be a tile assembly system. We say that a path $P$ is \emph{fragile} when there is at least one terminal assembly $\alpha\in\termasm{\mathcal{T}}$ of which $P$ is not a subassembly. Or equivalently, if the assembly induced by $P$ conflicts with $\alpha$.

According to this definition, if an assembly admits a non-fragile path assembly, then it can always be produced from any assembly. It also implies that if different producible assemblies of a tile assembly system with the same domain are all fragile.

\subsection{Visibility}

One of the key insights of this proof is the notion of \emph{visibility}, which allows us to reason about the side of a path enclosed by potential collisions between paths:

\begin{definition}
\label{def:visible}
Let $P$ be a path assembly. For any $i\in\{1,2,\ldots,|P|-1\}$, we say that the glue on the output side of $P_i$ is \emph{visible from the east} if it is to the left of both:
\begin{itemize}
\item all other glues between two consecutive tiles in $P$ that are on the same row as the edge between $P_i$ and $P_{i+1}$.
\item all edges of the grid graph of $\mathbb{Z}^2$, that are between two adjacent tiles of $\sigma$ and on the same row as the edge between $P_i$ and $P_{i+1}$.
\end{itemize}
Moreover, we call the \emph{east-visibility ray of glue $(P_i,P_{i+1})$} (or, when it is clear from the context, the \emph{visibility ray} of that glue) a horizontal ray to the east from that glue.
\end{definition}

Remark that this definition is not completely intuitive: in particular, in the case where two adjacent tiles of $P$ interact, are the leftmost of their rows, but are not consecutive in $P$, the glue between them is not visible, and another glue is visible on that row, as shown on Figure~\ref{fig:opposite}.

\begin{figure}[ht]
  \centering
  \includegraphics{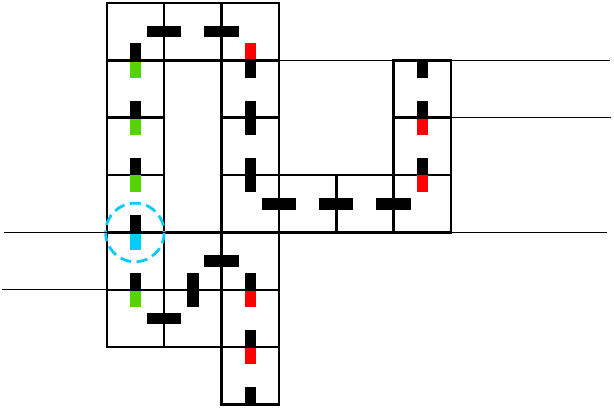}
  \caption{On this drawing, all output glues of tiles, that are visible from the west or the east are colored: in red, output glues that are visible from the west only; in green, output glues that are visible from the east only; in blue (and circled), one output glue is visible from both. the horizontal lines drawn are rays in $\d\zs$, defined in Definition~\ref{def:visible}.
    Here, all output glues but one (the top red glue) are north glues; the characterization of when this happens will be given in Lemma~\ref{lem:watershed0}}
  \label{fig:opposite}
\end{figure}

Our first lemma about visible glues, Lemma~\ref{lem:watershed0}, is exemplified on Figure~\ref{fig:visabove}.
  \begin{figure}[ht]
    \begin{center}
      \includegraphics{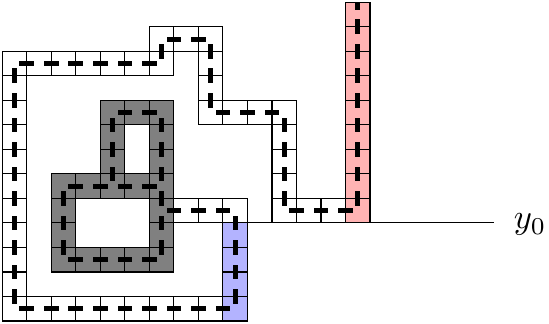}
    \end{center}
    \caption{The visible glue on tiles strictly above $l$ is the north one, and
      below or on $l$ is the south one. On this figure, the seed is in
      gray. Tiles whose north glue is visible from a ray to the East are in red, and
      tiles whose south glue is visible from a ray to the East are in blue.}
    \label{fig:visabove}
  \end{figure}

\begin{lemma}
  \label{lem:watershed0}
  Let $\mathcal{T}=(T,\sigma,1)$ be a tile assembly system such that $\sigma$ is finite and connected, and let $P$ be a path producible by $\mathcal{T}$, whose last point is a highest point of $P$.

  Then there is an integer $y_0$, such that:
  \begin{itemize}
  \item for all $y$ such that $y_{0}\leq y$, all glues of $P$ between rows $y$ and $y+1$ that are visible from the east (if any) are north glues.
  \item for all $y<y_{0}$, all glues of $P$ between rows $y$ and $y+1$ that are visible from the east (if any) are south glues.
  \end{itemize}
\end{lemma}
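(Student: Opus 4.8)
My plan is to first reduce the lemma to a single clean statement and then prove that statement by a planarity (cut) argument, the hypothesis on $z := P_{|P|}$ entering through a proof by contradiction.

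\emph{Reduction.} The two bullet points together are equivalent to the single assertion: \emph{no east-visible glue of $P$ that is a north glue lies on a row at or below a row carrying an east-visible glue of $P$ that is a south glue} (here ``row'' is the horizontal position between rows $y$ and $y+1$, and ``at or below'' is compared by height). Given this, put $y_0 := 1 + \max\{\, y : \text{the row between rows } y \text{ and } y+1 \text{ carries an east-visible south glue of } P \,\}$, which is finite since $P$ is finite (and take $y_0$ to be any integer below all rows met by $P$ if that set is empty). Then for $y\ge y_0$ there is no east-visible south glue between rows $y$ and $y+1$, so any east-visible glue there is a north glue; and for $y<y_0$ there is an east-visible south glue on some row of height $\ge y+\tfrac12$, so by the assertion any east-visible glue between rows $y$ and $y+1$ must itself be a south glue. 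Observe also that the seed $\sigma$ plays a purely auxiliary role: a $\sigma$-edge can only make a glue of $P$ \emph{fail} to be east-visible (by lying further west on the same row), never turn a north glue into a south glue or conversely. So the real content is a statement about $P$ and how it crosses horizontal lines, with $\sigma$ entering only to guarantee that, on the rows where the relevant glue is east-visible, nothing of $\sigma$ straddles that row strictly west of that glue.

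\emph{Main step.} Suppose for contradiction that $g_a$ is an east-visible north glue, the output glue of some $P_p$, on the row between rows $y_a$ and $y_a+1$ at column $x_a$, and $g_b$ is an east-visible south glue, the output glue of some $P_q$, on the row between rows $y_b$ and $y_b+1$ at column $x_b$, with $y_a\le y_b$ (the case $y_a=y_b$ covering the within-row part). Let $L_a$, $L_b$ be the horizontal lines at heights $y_a+\tfrac12$, $y_b+\tfrac12$. Since $g_b$ is a south glue, $P_q$ sits on row $y_b+1$, so the highest point $z$ lies strictly above $L_b$, hence strictly above $L_a$. Because $g_a$ is east-visible, $P$ crosses $L_a$ only at columns $\ge x_a$ and $\sigma$ straddles $L_a$ only at columns $\ge x_a$, and similarly for $g_b$, $L_b$. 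I would then form a cut of $\mathbb{Z}^2$, in the sense of the cut-by-paths-and-rays convention set up earlier, built from $\sigma$, a carefully chosen sub-path of $P$ having one endpoint adjacent to $g_a$ (resp.\ to $g_b$), and the east-visibility ray of $g_a$ (resp.\ of $g_b$): the ray supplies the horizontal wall for all columns $\ge x_a$, while the sub-path together with $\sigma$ closes that wall off to the west. The contradiction is to be extracted by tracking the portion of $P$ that runs from the downward crossing $g_b$ (which pins $P$ below $L_b$) forward to $z$ (which pins it back above $L_b$, and above $L_a$): since $g_a$ and $g_b$ are the \emph{leftmost} crossings of their lines, planarity should force this portion of $P$ to lie on a side of the cut from which its endpoint cannot be a highest point of $P$ — i.e. after $g_b$ the path is confined so that it cannot climb back as high as $z$, contradicting $z=P_{|P|}$.

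\emph{Main obstacle.} The step I expect to be genuinely hard is getting the cut to disconnect $\mathbb{Z}^2$ in the useful way: a single east-visibility ray, or even the ray together with $\sigma$ and a short prefix of $P$, does not separate the plane — one can always go the long way around the finite part of the barrier — so the barrier must incorporate enough of $P$ itself, and one must then locate the remaining pieces of $P$ component by component; the delicate point is ruling out that the suffix of $P$ from $g_b$ to $z$ escapes eastward through the ray, and this is exactly where leftmost-ness of both $g_a$ and $g_b$ and the fact that $z$ is the \emph{last} point of $P$ have to be used together. The rest is routine bookkeeping: $P$ need not be a simple path (two tiles of $P$ at one position carry the same type, so self-overlaps are harmless but must be argued around, e.g.\ by passing to first visits), $P$ might re-enter $\dom\sigma$, and $z$ might itself lie in $\dom\sigma$ when the seed is tall — each of these is handled directly using that $\sigma$ is finite and connected.
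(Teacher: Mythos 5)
Your reduction to the single assertion ``no east-visible north glue of $P$ lies at or below an east-visible south glue of $P$'' is correct and is essentially the same move the paper makes; the contradiction hypothesis $y_a \le y_b$ translated into tile coordinates is exactly the paper's $y_{P_i}<y_{P_j}$. The role you assign to the highest last point is also right.

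The genuine gap is precisely the one you flag yourself: your cut, as sketched, does not disconnect $\mathbb{Z}^2$. You write of using ``the east-visibility ray of $g_a$ (resp.\ of $g_b$)'' plus $\sigma$ plus a sub-path, but a single semi-infinite ray together with a finite piece ($\sigma$ and a finite sub-path of $P$) leaves a path around the finite end, and $\sigma$ is of no help here since it is finite and bears no structural relation to the ray endpoints. The paper's fix is to use \emph{both} rays at once: take the sub-path of $P$ between the two tiles $P_p$ and $P_q$ carrying $g_a$ and $g_b$ (that is, $P_{[\min(p,q),\max(p,q)]}$), and cap it at one end with the east-visibility ray of $g_a$ and at the other with the east-visibility ray of $g_b$. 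This concatenation is a bi-infinite barrier and genuinely separates the plane; $\sigma$ plays no role in building the barrier at all. Then, because $y_a\le y_b$ and $z$ is highest, the last tile sits on one side of the barrier while $P_{\max(p,q)+1}$ sits on the other, so $P_{[\max(p,q)+1,|P|]}$ must cross the barrier --- it cannot cross $P_{[\min(p,q),\max(p,q)]}$ since $P$ is a path, so it crosses one of the two rays, contradicting east-visibility of the corresponding glue. Note also that the paper splits into the two cases $p<q$ and $p>q$: in each, it is the suffix of $P$ starting \emph{after the later} of the two tiles that gets trapped, not specifically the suffix starting at $g_b$; your phrase ``from $g_b$ forward to $z$'' covers only one of the two cases.

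Your closing ``routine bookkeeping'' concerns (self-overlaps of $P$, re-entering $\dom\sigma$, $z\in\dom\sigma$) are not actually load-bearing here: $P$ is a path assembly (so its positions form a simple path in the grid graph), and the cut argument is purely about positions, so none of these cases needs special handling.
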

\begin{proof}
  Let $y_1=y_{P_{|P|}}$. We consider the set $G$ of all north glues of $P$, that are visible from the east, between the lowest glue of $P$ and $y_1$.
  Now, let $i$ and $j$ be two distinct indices, such that $P_i$'s output side is the north, and $P_i$'s north glue is visible from a horizontal line to the east, and $P_j$'s output side is the south, and visible from a horizontal line to the east.

  Suppose, for the sake of contradiction, that $y_{P_i}<y_{P_j}$, and let $l_i$ and $l_j$ be the visibility rays of $P_i$ and $P_j$ respectively.
  There are two (similar) cases, summarized on Figure~\ref{fig:above}:
  \begin{itemize}
  \item Either $i<j$, in which case $P$ must cross either $l_i$ or $l_j$ after $P_j$ before reaching its last point: indeed, that last point is above both lines, and above $P_{i,i+1,\ldots,j}$, since we considered only glues below $y_1$ and since $P_{|P|}$ is a highest point of $P$.

    However, this means that glues $(P_i,P_{i+1})$ and $(P_j,P_{j+1})$ cannot be both visible from a horizontal ray to the east, which is a contradiction.

  \item Or $i>j$, in which case $P$ must also cross either $l_i$ or $l_j$ after $P_i$, for the same reason, also leading to a contradiction.

  \end{itemize}

  \begin{figure}[ht]
    \centering
    \begin{subfigure}[b]{0.45\textwidth}
      \begin{center}
        \includegraphics{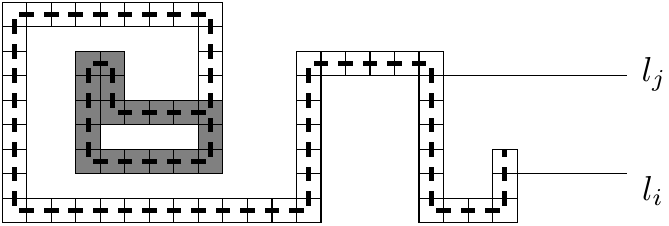}
        \caption{Case $i<j$}
        \label{fig:l1}
      \end{center}
    \end{subfigure}
    \begin{subfigure}[b]{0.45\textwidth}
      \begin{center}
        \includegraphics{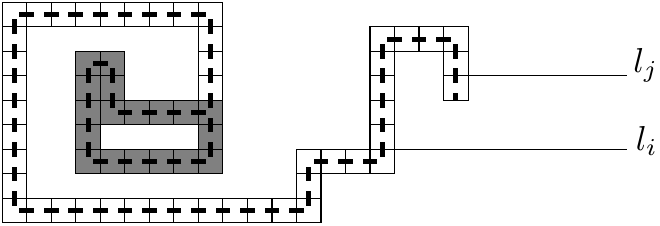}
        \caption{Case $j>i$}
        \label{fig:l2}
      \end{center}
    \end{subfigure}
    \caption{Lemma \ref{lem:watershed0} in action. On both figures, the seed is in gray, the path is uncolored. $P_i$ is the lower tile, and $P_j$ is the higher tile, and the last tile of $P$ is a highest tile. Therefore, in either case, $P_j$ cannot possibly have a visible south output glue, since $P_{[j,|P|]}$ needs to cross one of the visibility rays before reaching its last tile.}
    \label{fig:above}
  \end{figure}

\end{proof}

\begin{lemma}
  \label{lem:order}
  Let $\mathcal{T}=(T,\sigma,1)$ be a tile assembly system such that $\sigma$ is finite and connected, and $P$ be a path producible by $\mathcal{T}$, whose last point is a highest point of $P$.

  If $i$ and $j$ are two integers such that $i<j$, and the north (\resp south) glues of $P_i$ and $P_j$ are both visible from the east, then $y_{P_i}<y_{P_j}$ (\resp $y_{P_i}>y_{P_j}$).

\begin{proof}
  We prove this only for the case where the north glues of $P_i$ and $P_j$ are visible from the east. The proof for their south glues being visible from the east is symmetric.
  Assume, for the sake of contradiction, that $P_j$ is above $P_i$ (see Figure \ref{fig:visibleBefore}).

  We draw a horizontal ray $l_0$ from $P_i$ to the east in $\zs$, and a horizontal ray $l_1$ from $P_j$ to the east in $\zs$. Formally, $l_0$ (\resp $l_1$) is the set of all vertices of the grid graph of $\mathbb{Z}^2$ that are on the same row and to the right of $P_i$ (\resp of $P_j$).

  Now, $P_{i,i+1,\ldots,j}$, $l_0$ and $l_1$ define a cut of $\zs$ into exactly two connected components (indeed, they are disjoint, by construction). Let $\mathcal C$ be the component connected to the points between $l_0$ and $l_1$. Since the north glue of $P_j$ is visible from the east, $P_{j+1}$ is inside $C$. Now, since the last point of $P$ is a highest point of $P$, that last point cannot be inside $\mathcal C$. Therefore, $P$ has to place at least one tile on either $l_0$ or $l_1$, but the only way to do so is by using a vertex to the right of either $P_{i}$ or $P_j$. However, this contradicts our hypothesis that the north glues of both $P_i$ and $P_j$ are visible from a ray to the east.

  \begin{figure}[ht]
    \begin{center}
      \includegraphics[scale=1]{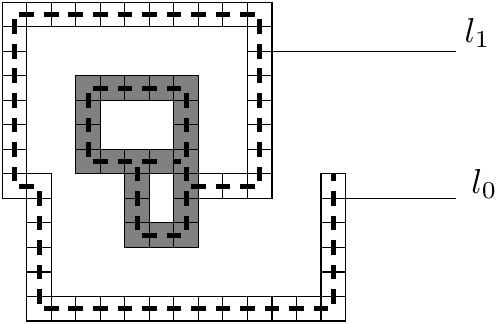}
    \end{center}
    \caption{This figure shows why a path whose last point is its highest has its visible glues built in the same order as the order of their y-coordinates: indeed, if the last tile of $P$ is a highest tile, then $P$ needs to cross at least one of $l_0$ or $l_1$ to reach its last tile, contradicting the visibility of the corresponding glues assumed in Lemma~\ref{lem:order}. On this figure, the seed is in gray, and $P$ is uncolored.}
    \label{fig:visibleBefore}
  \end{figure}
\end{proof}
\end{lemma}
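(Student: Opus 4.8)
The statement says: if the last point of $P$ is a highest point, and the north glues of $P_i$ and $P_j$ ($i<j$) are both east-visible, then $y_{P_i} < y_{P_j}$ (with the symmetric claim for south glues). The plan is a contradiction argument by cutting the plane with the two visibility rays and the subpath between $P_i$ and $P_j$, very much in the spirit of Lemma~\ref{lem:watershed0}. I would assume the negation: $y_{P_j} \ge y_{P_i}$. The case $y_{P_j} = y_{P_i}$ should be handled first and is quick — two east-visible north glues on the same row would have to be the leftmost glue on that row by definition of visibility, so they would be the same glue, contradicting $i \ne j$ (one must be careful that "leftmost" is literally forced by the visibility definition; since each is to the left of all other glues of $P$ on that row, neither can be strictly left of the other). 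So we may assume $y_{P_j} > y_{P_i}$.

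Now I would set up the cut. Let $l_0$ be the horizontal ray in $\zs$ going east from $P_i$ and $l_1$ the horizontal ray east from $P_j$; these lie on distinct rows, so they are disjoint, and since the north glues of $P_i$ and $P_j$ are east-visible, neither ray meets any tile of $P_{[i,j]}$ (a tile on $l_0$ to the right of $P_i$, say, would contradict east-visibility of $(P_i, P_{i+1})$ — here one uses that $P_{[i,j]}$ is part of $P$, so its glues count among "all other glues between two consecutive tiles in $P$"). Hence $l_0$, $l_1$ and $P_{[i,j]}$ together cut $\zs$ into exactly two connected components, using the path-and-ray cut convention from the earlier subsection. Let $\mathcal{C}$ be the component containing the vertices strictly between the two rows and to the east of $P_{[i,j]}$. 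Because the north glue of $P_i$ is visible from the east, $P_{i}$'s output direction is north, so $P_{i+1}$ sits directly above $P_i$, hence inside $\mathcal{C}$; this anchors the continuation of the path inside $\mathcal{C}$.

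The crux is then: the remaining path $P_{[i+1,|P|]}$ starts inside $\mathcal{C}$ and must reach $P_{|P|}$, a highest point of $P$. But every vertex of $\mathcal{C}$ has $y$-coordinate at most $y_{P_j}$ (the boundary $l_1$ caps it from above), while $y_{P_j}$ itself might not be the maximum — more carefully, a highest point of $P$ cannot lie in the open region $\mathcal{C}$ because $\mathcal{C}$ is bounded above by row $y_{P_j}$ and the only vertices of $\overline{\mathcal{C}}$ on that row are on $l_1$, which the path cannot enter. So to escape $\mathcal{C}$, the path $P_{[i+1,|P|]}$ must place a tile on $l_0$, on $l_1$, or on $P_{[i,j]}$ itself. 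It cannot revisit a position of $P_{[i,j]}$ with a different type, and placing a tile on $l_0$ or $l_1$ strictly to the east of $P_i$ or $P_j$ immediately contradicts east-visibility of the corresponding glue. The only subtlety is whether the path could "escape" by passing through $P_i$ or $P_j$ themselves (the endpoints of the rays), but those are already on $P_{[i,j]}$, so this is subsumed. This yields the contradiction. The main obstacle, as always in these arguments, is getting the cut-convention bookkeeping exactly right — in particular verifying that $l_0$, $l_1$, and $P_{[i,j]}$ really are pairwise disjoint and really do separate $\mathcal{C}$ from $P_{|P|}$ — but this is precisely the same machinery already exercised in Lemma~\ref{lem:watershed0}, and Figure~\ref{fig:visibleBefore} captures the geometry. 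The symmetric south-glue case follows by reflecting the roles of "above" and "below," using that $P_{|P|}$ highest still obstructs the path from crossing back.
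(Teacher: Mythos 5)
Your proof follows the same cut-based strategy as the paper (visibility rays $l_0$, $l_1$ together with $P_{[i,j]}$ separating the plane, and the highest-point condition forcing an escape that violates visibility), and your explicit handling of the $y_{P_i}=y_{P_j}$ case is a welcome addition the paper leaves implicit. But there are two linked errors that break the argument.

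First, the contradiction hypothesis is reversed. The negation of $y_{P_i}<y_{P_j}$ is $y_{P_i}\geq y_{P_j}$, i.e.\ $P_j$ is below or level with $P_i$; you assume $y_{P_j}\geq y_{P_i}$, which is (up to the boundary case) the conclusion, not its negation. The paper's own text says ``assume for contradiction that $P_j$ is above $P_i$,'' which appears to be the same slip, but the paper's subsequent reasoning is only coherent when $P_j$ is the \emph{lower} of the two: it anchors the trapped part of the path at $P_{j+1}$, whose row $y_{P_j}+1$ then lies strictly between the two rays (since $y_{P_j}+1\leq y_{P_i}$), inside a component $\mathcal{C}$ capped from above by $l_0$ at row $y_{P_i}$; since $P_{i+1}$ is already at row $y_{P_i}+1$, the highest point of $P$ lies outside $\mathcal{C}$, forcing the escape.

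Second, and independently of the sign, your anchor is wrong: you place $P_{i+1}$ inside $\mathcal{C}$, but since $i<j$ we have $i+1\leq j$, so $P_{i+1}$ is a vertex of $P_{[i,j]}$ and hence on the boundary of the cut, not in its interior. The index that works is $P_{j+1}$ (it is a fresh vertex, $j+1>j$), and it only lands inside $\mathcal{C}$ when $P_j$ is the lower tile. Taken together, with your hypothesis $y_{P_j}>y_{P_i}$, the tail $P_{[j+1,|P|]}$ starts at row $y_{P_j}+1$, \emph{above} the region $\mathcal{C}$, so the path never enters $\mathcal{C}$ and no contradiction is obtained. Flipping the hypothesis to $y_{P_j}<y_{P_i}$ and anchoring at $P_{j+1}$ repairs the proof and recovers the paper's argument.
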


\subsection{Dominating tiles and dominating rays}

A convenient notion that we use is that of \emph{dominating tiles}. A tile $P_i$ of a path $P$ is said to be \emph{dominating for a vector $\vec v\in\mathbb{Z}^2$} if $P$ does not intersect a ray of vector $\vec v$ from $P_{i}$, called the \emph{dominating ray of $P_i$}.

\section{Roadmap of this paper}
\label{sec:roadmap}
\subsection{An algorithm to build ``stake paths'', and break or pump U-turns}
\label{roadmap-algo}
The first step of our proof is an algorithm that tries to pump or block our path $P$. Unfortunately, this does not always succeed, but when it fails, this algorithm builds what we call a \emph{stake path}, which is a path that can ``hold'' a suffix of $P$, and can grow in two different translations.

This algorithm is therefore used for two different duties:
\begin{itemize}
\item Prove that paths containing a special structure called a \emph{nice U-turn} are pumpable or fragile. This is one of our first results, proven in Lemma~\ref{lem:uturn}. A nice U-turn is a prefix of $P$ such that:

  \begin{enumerate}
  \item There are two indices $i<j$ of $P$ whose output side is the north, and whose north glue is visible from the west.
  \item There is an index $k>j$ such that $P_k$ is visible \emph{from the east} (i.e. the other side) on $\sigma\cup P_{[1,k]}$.\label{cond:vis}
  \item $P_k+\vect{P_jP_i}$ is below, or on the same row as, the lowest tile of $P_{[i,k]}$.\label{cond:nobacktrack}
  \end{enumerate}

\item Build stake paths, that will be useful at the end of the proof (in several lemmas of Section~\ref{sec:conc}) to ``move'' the position of conflicts, using the following general argument: try to pump a segment $P_{[u,v]}$ of $P$, held by the stake path (i.e. without growing $P$ up to this segment), and if you fail at pumping, we can prove that there is a conflict between the pumping and a tile of $P$ (because the stake path contains only tiles of $P$, and translations of tiles of $P$).

  Then, grow a translation of the pumping of $P_{[u,v]}$, using a different translation of the stake path, to move the position of the conflict one step forward in the pumping: $P$ will not be able to grow to the position of the conflict from the resulting assembly.
\end{itemize}

This part of our proof is possibly the trickiest, geometrically speaking. To avoid as many ambiguities as possible, we only show invariants on a \emph{program}, described completely in Section~\ref{sec:algo}, and implemented in an online version\footnote{\href{http://users.ics.aalto.fi/meunier/pompe}{http://users.ics.aalto.fi/meunier/pompe}}. Feel free to try and break or pump your own paths!

\subsection{Initial conditions}
\label{roadmap-init}
An essential condition to the success of the next steps of our proof is to find visible north glues, close enough to the seed.
In Section~\ref{sec:initial}, we prove (in Lemma~\ref{lem:init:cond}) that we can find two tiles $P_i$ and $P_j$ of the same type, both with north output side, and north glues visible from the west on $P$, such that $P_i$ and $P_j$ are both on the shortest prefix of $P$ that escapes a rectangle $\rdanger$ of height $2|T|+|\sigma|$ (and width to be defined) vertically centered on $\sigma$.

That lemma basically solves the problem that the visibility of a tile on $P$ and on a prefix $P_{[1,k]}$ of $P$ might be different: indeed, $P_{[k+1,|P|]}$ might ``hide'' tiles visible on $P_{[1,k]}$. However, we show that if this happens, we can find a nice U-turn, and use Lemma~\ref{lem:uturn} to break or pump $P$.

Section~\ref{sec:initial} also shows that we can assume, without loss of generality, that $P$ never goes away from the seed by more than $2|T|$ rows to the south (of course, $P$ can grow arbitrarily far to the north, and arbitrarily far to the east and west).

That Section yields the first bound~$\seedbound=2|T|+|\sigma|$, on the \emph{height} of the ``danger zone'' $\rdanger$.

\subsection{Reset lemma}
\label{roadmap-reset}
We then introduce a notion of \emph{resources} for a path: an adversary trying to build a large non-fragile non-pumpable path, could try to defeat our pumping attempts by drawing a path that comes back to rows close to the seed, i.e. in rectangle $\rdanger$, very often. Then, any segment we try to pump would not even be producible past the first iteration, since that iteration would immediately conflict with these initial parts of the assembly.

In Section~\ref{sec:reset}, we introduce two powerful tools to avoid this:

\begin{enumerate}
\item We first show that if $P$ starts by going far enough to the east or the west before growing $\seedbound$ rows above $\sigma$, then $P$ is pumpable or fragile. This result, shown in Lemma~\ref{lem:band}, uses the Window Movie Lemma from~\cite{Meunier-2014}. This defines the width of rectangle~$\rdanger$.
  \label{arg1}

\item Then, we show in Lemma~\ref{lem:banana}, that there is a height $\sbound$ such that if $P$ goes back to $\rdanger$ after reaching height $\sbound$, $P$ must have a nice U-turn. This proof uses the following argument: if $P$ goes back to $\rdanger$ after height $\sbound$, $P$ must have a U-turn of depth $\sbound$. The problem is, this U-turn might not be nice. However, if it is not nice, this means that another part of $P$ either builds large U-turns to the south (to defeat condition~\ref{cond:nobacktrack} of niceness, as defined above in Section~\ref{roadmap-algo}) or else build large U-turns to the east of our main U-turn (to defeat condition~\ref{cond:vis} of niceness, as defined above in Section~\ref{roadmap-algo}).

  Moreover, we will show that each of these U-turns must get back to $\rdanger$, which can only be done a constant number of time by our argument~\ref{arg1} above.

  We call Lemma~\ref{lem:banana} the \emph{Reset Lemma}, because it shows that $P$ cannot ``access'' its initial part anymore after some height. This is an important lemma in our proof, and we will actually need more ``resets'' later (to show that $P$ cannot access the stake path built by our algorithm).

\end{enumerate}

\subsection{Finding a stake path}
\label{roadstake}
After the reset lemma, we are therefore left with a (quite high) path that does not enter $\rdanger$ anymore.

We then use the last of our ``main weapons'': stake paths. In Section~\ref{sec:algo}, we prove the following invariant on our algorithm: at each step, the algorithm has a ``current'' stake path, which is a path $S$ made of segments of $P$ and translations by $\vect{P_iP_j}$ of segments of $P$, such that both $S+\vect{P_jP_i}$ and $S$ can grow without crossing $P$ (possibly not at the same time, i.e. $S$ and $S+\vect{P_jP_i}$ might conflict).

We also prove that each step of the algorithm advances the index on $P$ reached by the last point of the current stake path $S$ by at least one unit, except possibly in the case that a suffix $P_{[c,|P|]}$ of $P$ can grow without intersecting its translation $P_{[c,|P|]}+\vect{P_jP_i}$.
There are two cases:

\begin{itemize}
\item Either the current stake path built by the algorithm grows above $\sbound$, but stays entirely below row
$\bbound_1$ (to be defined),
in which case we stop the algorithm immediately after building the first such stake path.

In this case, we are ready to proceed to the last step of our proof.

\item Or there is no such step, which means that a suffix $P_{[c,|P|]}$ of $P$ can grow without intersecting its translation $P_{[c,|P|]}+\vect{P_jP_i}$. We call paths where this happens \emph{cage-free paths}, and their case is treated separately by Lemma~\ref{lem:cagefree}, which shows that any cage-free path of a sufficiently large height is pumpable or fragile (this part uses the window movie lemma from~\cite{Meunier-2014}).
\end{itemize}

After this step, we are therefore left with a stake path intersecting $P$ for the last time within the first $\bbound_1$ rows above $\sigma$ (see Figure~\ref{fig:stakezone}), where $\bbound_1$ is another bound defined by Lemma~\ref{lem:cagefree}, also a constant in $|\dom\sigma|$ and $|T|$.

\subsection{Concluding the proof}
\label{sum:concl}
If we have made it to this step, we are left with a stake path $S$ starting inside $\rdanger$ from $P_j$, and reaching a tile $P_c$ of $P$ at some constant height (depending only on $|\dom \sigma|$ and $|T|$), above $\sbound$. Moreover, $S$ does not grow above a constant bound~$\bbound$. We then use the reset lemma another time, and get a bound $\ebound_1$, such that $P$ cannot go back to $S$ after reaching a height $\ebound_1$. More precisely, we define a new danger zone $\rstake$ around $S$, of height $\sbound$ and width defined by the Window Movie Lemma (or in Section~\ref{sec:reset}) and use it as a new seed.

Then, if $P_{[c,|P|]}$ does not intersect $P_{[c,|P|]}+\vect{P_jP_i}$ above $\ebound_1$, we can use the cage-free argument of Section~\ref{roadstake} again. Else, $P_{[c,|P|]}$ and $P_{[c,|P|]}+\vect{P_jP_i}$ intersect above $\ebound_1$: either this intersection is a conflict, in which case we can break $P$ by first growing $P_{[c,|P|]}+\vect{P_jP_i}$, or this intersection is not a conflict, in which case we can start to pump a segment.

In this last case, there are two cases for how the intersection can happen

\begin{enumerate}
\item Either this intersection has the same orientation as $\vect{P_iP_j}$, i.e. the intersection is in such a way that $P_d=P_e+\vect{P_jP_i}$, with $e>d$, in which case we can immediately try to pump it (and move on to the last step of our proof).\label{good}

  If this pumping can grow infinite, we are done. Else, we will show that there must be a conflict between the pumping an another part of the assembly. We then use the stake path $S$ produced by the algorithm to translate the position of conflicts, allowing the pumping attempt to grow one extra iteration and break $P$.

\item Or this intersection is in the opposite orientation, in which case the pumping goes to the south, and could still conflict with parts of the assembly below $\bbound$. However, we will show that if we keep running the algorithm, then after a constant number of pumping attempts, at least one of the pumping attempts will be in case~\ref{good}.
  An important point here, is that these iterations do not change the stake path produced first, but are only used to find new candidates for pumping.
\end{enumerate}

Since we need to find an intersection at each iteration, or else apply the cage-free argument, we are going to iterate the cage-free bound $|\rstake|$ times, getting a sequence $\ebound_1, \ebound_2,\ldots,\ebound_{|\rstake|}$. The final bound is that last one, $\ebound_{|\rstake|}$.

\section{The whole proof in two drawings}

This proof goes by several step, each proving a new constraint on what $P$ can do without being pumpable or fragile. In Sections~\ref{roadmap-algo}, \ref{roadmap-init}, \ref{roadmap-reset}, and~\ref{roadstake}, we prove the constraints shown on Figure~\ref{fig:zone1}.

\begin{figure}[!ht]
\begin{center}
\includegraphics[scale=0.7]{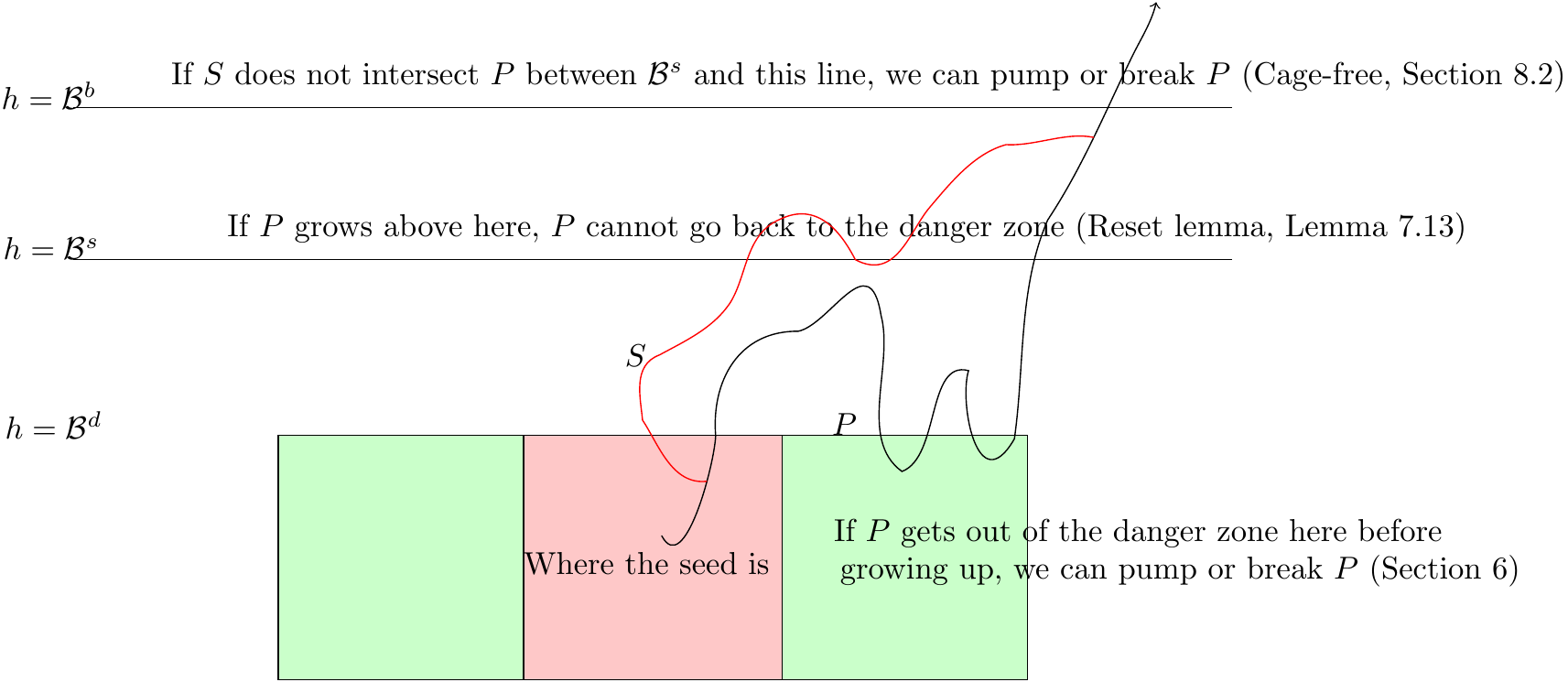}
\end{center}
\caption{First constraints on $P$: if $P$ reaches a certain height, depending only on $|T|$ and $|\sigma|$, $P$ cannot go back to the green zone (called the \emph{danger zone}), and we get a \emph{stake path}, in red, not growing over the top line before intersecting $P$ between the two lines.}
\label{fig:zone1}
\end{figure}

Then, we apply Section~\ref{roadmap-init} and~\ref{roadmap-reset} again, to show that the stake paths (in red) built by the algorithm stay within a bounded ``stake zone'', that the stake cannot escape (the stake zone is in blue on Figure~\ref{fig:stakezone}).

\begin{figure}[!ht]
\begin{center}
\includegraphics[scale=0.7]{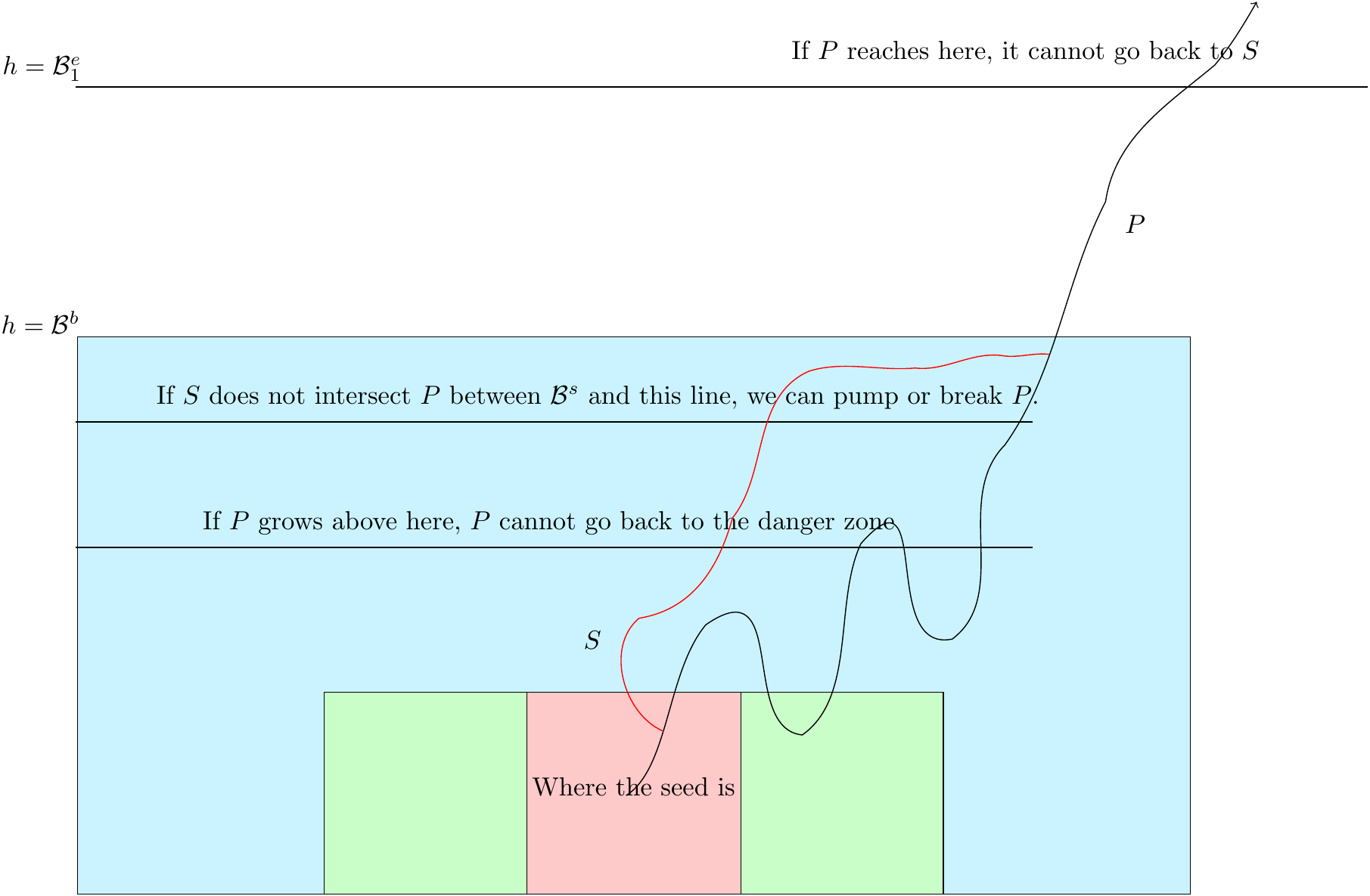}
\end{center}
\caption{Applying Sections~\ref{roadmap-init} and~\ref{roadmap-reset} again, we show that if $P$ grows to a high enough, yet constant, height, then $P$ cannot go back to $S$: if we try to pump $P$ there, that pumping cannot possibly intersect $S$. If $P$ is not pumpable, any attempt to pump $P$ will result in a conflict between the pumping and the end of $P$ (i.e. a part of $P$ above the top line), and we can thus use $S$ to translate conflicts and break $P$. There is a minor complication at this step, in case the pumping is orientated to the south. See Section~\ref{sum:concl} above for a summary, or else Section~\ref{sec:conc}.}
\label{fig:stakezone}
\end{figure}

\clearpage
 \section{Our pumping algorithm}
\label{sec:uturns-lemmas}
In this section, we describe an algorithm to either pump a long enough path $P$, or else construct a ``stake path'', a concatenation of translations of parts of $P$, with special properties that we exploit in the rest of the paper.

\subsection{Our pumping algorithm}
\label{sec:algo}

\paragraph{The goal of this algorithm} is to start from a path $P$ and two visible north glues $P_i$ and $P_j$ of $P$ (with $i<j$) as its input, and to construct a ``stake path'', which is a path $S$, made of parts of $P$ and translations by $\vect{P_iP_j}$ of parts of $P$, and such that $S$ can grow alongside $P$, and so can $S+\vect{P_jP_i}$, without ever crossing $P$.

In the section, we will prove that the pumping always succeeds on paths containing a certain shape called a \emph{nice U-turn}.

The basic functioning of the algorithm is to grow \emph{branches}, which are translation of suffixes of $P$, together with $P$. If a branch shares some positions with $P$, then either the branch and $P$ agree on tile types at these common positions, or they disagree. If they disagree, we can grow the branch first, and prevent $P$ from placing the tile type it wanted to place, which means that $P$ is fragile. If they agree, we can use their agreement to get stronger properties on $P$, allowing us either to try and pump $P$, or to get another branch that can grow longer.

\paragraph{There are several possible conclusions} to this algorithm: if we can (1) pump $P$ infinitely or (2) produce an assembly that prevents $P$ from growing, we are done. However, two other things can happen: if (3) a branch can grow without intersecting $P$, we will prove in Section~\ref{sec:cagefree} that $P$ must be, in some sense, ``straight enough'' to avoid this branch, and therefore must be pumpable (we call this the \emph{cage-free} argument). %

\subsubsection{Variables used by the algorithm}
First, $i$ and $j$ are two fixed parameters, chosen before starting the algorithm, such that $i<j$, the north glues of $P_i$ and $P_j$ are both visible from the west, and of the same type (which happens in particular if $\type(P_i)=\type(P_j)$). At every step, we maintain a \emph{candidate segment}, given by two indices $u$ and $v$ (with no particular order between them), and a \emph{stake path} $S$ such that neither $S$ nor $S+\vect{P_jP_i}$ conflicts with $P$, and they both start and end on the left-hand side of $P$, and never cross $P$.

\subsubsection{The algorithm itself}

Our algorithm has two modes: a \emph{forward} mode, in which we grow translations by $\vect{P_iP_j}$ of segments of $P$, and a \emph{backward} mode, in which we grow translations by $\vect{P_jP_i}$ of segments of $P$.

Intuitively, we keep adding segments (from $P$ and $P+\vect{P_iP_j}$) to $S$, checking every time that the segments do not cross $P$, although they can of course intersect $P$ and stay on the same side (the left-hand side of $P$).

Initially, we start in forward mode with parameters $(i,j,\emptyset)$. Then, we do the following:
\begin{itemize}
\item In forward mode, we first grow $\alpha=\sigma\cup P_{[1,j]}\cup S$, and then try to grow the pumping of $P_{[u,v]}$ (without growing $P_{[j,u]}$). If the pumping of $P_{[u,v]}$ can grow infinite without conflicting with $\alpha$, {\bf we stop the algorithm}: indeed, either $P$ can still grow (which means $P$ is pumpable), or $P$ cannot grow anymore (which means that $P$ is fragile).

  Else, we grow the longest prefix $R$ of $P_{[u,|P|-1]}+\vect{P_iP_j}$ that can grow from $\alpha$, without turning right from $P$ (since $S$ always ends at $P_v$, $R$ starts at $P_v$). In other words, we stop $R$ at the first intersection where $P$ turns left from $P_{[u,|P|-1]}+\vect{P_iP_j}$.

  If $P_{[u,|P|-1]}+\vect{P_iP_j}$ does not intersect $P$, or does not turn right from $P$, {\bf we stop the algorithm}. Else, there are two integers $u'$ and $v'$ such that $P_{u'}=P_{v'}+\vect{P_iP_j}$, in which case we go to backwards mode with parameters $(u',v',S\cup R)$.

\item In backwards mode, we grow $\alpha=\sigma\cup P_{[1,i]}\cup (S+\vect{P_jP_i})$, and then the longest prefix $R$ of $P_{[v,|P|-1]}+\vect{P_jP_i}$ that can grow without turning right from $P$. In other words, we stop $R$ at the first intersection with $P$ where $P$ turns left from $P_{[v,|P|-1]}+\vect{P_jP_i}$.

  If $P_{[v,|P|-1]}+\vect{P_jP_i}$ does not intersect $P$, or does not turn right from $P$, {\bf we stop the algorithm}. Else, there are two integers $u'$ and $v'$ such that $P_{u'}=P_{v'}+\vect{P_jP_i}$, in which case we go to forward mode with parameters $(u',v',S\cup R)$.
\end{itemize}

\subsubsection{Remark on indexing}

In this algorithm and its proof in Section~\ref{algo-proof}, we use the same indices for $P$ and the branches, because branches are suffixes of $P$. For $S$, we use a different indexing, starting from $1$ like any other sequence.

This can also be observed in our implementation, where branches are represented by the index on $P$ of their first tile. In our implementation, $S$ is represented by its domain, i.e. a set of positions (neither the tile types nor the origin of components of $S$ are specified).

\subsection{Paths with nice U-turns are fragile or pumpable}
\label{algo-proof}

In this section, we prove our first breaking/pumping result: in the case that $P$ contains a \emph{nice U-turn}, we can pump or break it. Our proof works by proving invariants on the algorithm in Lemma~\ref{lem:ih}, and then exploiting them in Lemma~\ref{lem:uturn}.

\begin{lemma}[Induction hypothesis]
  \label{lem:ih}
  Let $P$ be a path assembly and $i<j$ two integers such that the output sides of $P_i$ and $P_j$ are the north, and are visible from the west.
  At any step of the algorithm, the following is the case:
  \begin{enumerate}
  \item $S$ is only made of two kinds of parts: segments of $P$, and translations by $\vect{P_iP_j}$ of segments of $P$.\label{inv:twokinds}
  \item $S$ does not cross the visibility ray of $P_i$, and does not cross the visibility of $P_j$ (this implies that $S+\vect{P_jP_i}$ does not cross the visibility ray of $P_i$, although $S+\vect{P_jP_i}$ might cross the visibility ray of $P_j$).\label{inv:vis}
  \item Neither $S$ nor $S+\vect{P_jP_i}$ cross $P$, although both may intersect $P$.
    Both $S$ and $S+\vect{P_jP_i}$ start and end on the left-hand side of $P$.\label{inv:notrespassing}

  \end{enumerate}
\end{lemma}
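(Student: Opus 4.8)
The plan is to argue by induction on the number of steps of the algorithm, checking that each of the three invariants is preserved when we pass from forward to backward mode (or vice versa) by appending the newly grown prefix $R$ to the current stake path $S$. The base case is immediate: initially $S=\emptyset$, which trivially consists of segments of $P$ (invariant~\ref{inv:twokinds}), crosses nothing (invariant~\ref{inv:vis}), and crosses $P$ nowhere while ``starting and ending on the left-hand side of $P$'' vacuously (invariant~\ref{inv:notrespassing}). For the inductive step, suppose the invariants hold at the start of a forward step with parameters $(u,v,S)$; I will show they hold for the new parameters $(u',v',S\cup R)$, where $R$ is the longest prefix of $P_{[u,|P|-1]}+\vect{P_iP_j}$ that can grow from $\alpha=\sigma\cup P_{[1,j]}\cup S$ without turning right from $P$ (the backward step is symmetric, with $\vect{P_iP_j}$ replaced by $\vect{P_jP_i}$ and the roles of $i,j$ swapped).

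For invariant~\ref{inv:twokinds}: by the inductive hypothesis $S$ is already of the required form, and $R$ is by definition a prefix of $P_{[u,|P|-1]}+\vect{P_iP_j}$, i.e. a translation by $\vect{P_iP_j}$ of a segment of $P$ (in the backward step $R$ is literally a segment of $P$), so $S\cup R$ is again a union of segments of $P$ and of $\vect{P_iP_j}$-translates of segments of $P$. For invariant~\ref{inv:vis}: $S$ already avoids the visibility rays of $P_i$ and $P_j$ by hypothesis, so it suffices to show $R$ does too. Here I would use that $R$ is a translate of a suffix of $P$ grown inside $\alpha$; since $R$ is forced to stop the first time $P$ turns left from it, $R$ stays on one side of $P$, and the visibility rays of $P_i$ and $P_j$ were defined precisely so that $P$ itself does not re-cross them past $P_i$, $P_j$ — combined with the niceness of the U-turn this pins $R$ to the correct half-plane. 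The parenthetical remark about $S+\vect{P_jP_i}$ then follows by translating the whole picture by $\vect{P_jP_i}$: the visibility ray of $P_i$ is mapped to the visibility ray of $P_i$ shifted, which is still not crossed, whereas the ray of $P_j$ gets shifted onto a line that $S$ originally need not have respected, explaining why only the weaker conclusion holds there. For invariant~\ref{inv:notrespassing}: $R$ is grown with the explicit rule that it never turns right from $P$ and is cut at the first left turn, which is exactly the condition that it intersects $P$ only tangentially and always resumes on the left-hand side; so $S\cup R$ still does not cross $P$. To handle $S+\vect{P_jP_i}$ one translates the entire configuration by $\vect{P_jP_i}$ and observes that in backward mode we grow against $\sigma\cup P_{[1,i]}\cup(S+\vect{P_jP_i})$, so the same non-crossing argument applied in the shifted frame gives that $(S\cup R)+\vect{P_jP_i}$ does not cross $P$ either; the ``starts and ends on the left-hand side'' clause is maintained because $P_v$ (resp.\ its translate) lies on the left of $P$ by the inductive hypothesis and $R$ terminates at the intersection point $P_{u'}=P_{v'}+\vect{P_iP_j}$, which is again a point of $P$ reached from the left.

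The main obstacle I expect is invariant~\ref{inv:vis}, specifically controlling which side of the visibility rays of $P_i$ and $P_j$ the freshly grown branch $R$ lands on. This is where the combinatorial content of ``visible from the west'' and of the definition of the cut induced by $P_{[i,j]}$ together with the two horizontal rays really enters: one has to rule out that $R$, being a long translate of a suffix of $P$, wraps around and pokes through a visibility ray, and the only tool available is that such a crossing would force $P$ (or its translate) to do something incompatible with the visibility of $P_i$ or $P_j$, in the spirit of Lemmas~\ref{lem:watershed0} and~\ref{lem:order}. The asymmetry between $P_i$ and $P_j$ under the translation by $\vect{P_jP_i}$ — flagged in the statement's parenthetical — is the subtle point to get exactly right, and it is what dictates that the algorithm must alternate between growing against a prefix ending at $P_j$ in forward mode and against a prefix ending at $P_i$ in backward mode.
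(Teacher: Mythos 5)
Your proof takes essentially the same approach as the paper, which dispatches items~\ref{inv:twokinds} and~\ref{inv:notrespassing} with ``this can be checked directly in the algorithm'' and gives only a brief justification for item~\ref{inv:vis}; you make the implicit induction on algorithm steps explicit, which is a reasonable way to unpack that claim. Two imprecisions to fix, however. First, for item~\ref{inv:vis} you appeal to ``the niceness of the U-turn,'' but that is not a hypothesis of this lemma: the statement assumes only that $P_i$ and $P_j$ have north output glues visible from the west, and the lemma is meant to hold for the algorithm run on any such input, not just on paths that contain a nice U-turn. The paper's own justification here is that since $P_i$ and $P_j$ have visible north glues, a new branch $R$ that stays on the left-hand side of $P$ and is stopped as soon as $P$ turns left from it cannot poke through either ray; niceness is never used. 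Second, your explanation of the parenthetical in item~\ref{inv:vis} is mixed up: translating by $\vect{P_jP_i}$ sends the visibility ray of $P_j$ onto the visibility ray of $P_i$ (since $P_j+\vect{P_jP_i}=P_i$ and both glues are north glues), so non-crossing of the ray of $P_j$ by $S$ transfers to non-crossing of the ray of $P_i$ by $S+\vect{P_jP_i}$; no analogous conclusion holds for the ray of $P_j$ because it would require $S$ not to cross the ray from $P_j+\vect{P_iP_j}$, which is not among the two rays controlled by the invariant. These are local corrections rather than a change of method.
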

\begin{proof}
  We prove each item:
  \begin{enumerate}
  \item This can be directly checked in the algorithm.
  \item Indeed, we add to $S$ only parts of $P$, and translations by $\vect{P_iP_j}$ of segments of $P$, but only until they intersect $P$ again: hence, $S$ cannot cross $P$. And since both $P_i$ and $P_j$ have visible north glues, $S$ cannot break the visibility of $P_j$.

  \item This can also be checked in the algorithm: indeed, we change mode every time $S$ intersects $P$.
  \end{enumerate}
\end{proof}

Using this induction hypothesis, we proceed to the main result of this section, showing that paths containing \emph{nice U-turns} are pumpable or fragile.

\begin{lemma}
\label{lem:uturn}
  Let $P$ be a path producible by some tile assembly system $\mathcal{T}=(T,\sigma,1)$.

  If $P$ is such that:
  \begin{enumerate}

  \item There are two indices $i<j$ whose output side is the north, and whose north glue is visible from the west.
  \item There is an index $k>j$ such that $P_k$ has its south glue visible \emph{from the east} (i.e. the other side) on $\sigma\cup P_{[1,k]}$.\label{visible}
  \item $P_k+\vect{P_jP_i}$ is below, or on the same row as, the lowest tile of $P_{[i,k]}$.\label{nobacktrack}
  \end{enumerate}

  Then $P$ is pumpable or fragile.

  (we call the conjunction of these conditions a \emph{nice U-turn}. Section~\ref{sec:reset} deals with more general U-turns)

\end{lemma}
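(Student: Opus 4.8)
The plan is to run the pumping algorithm of Section~\ref{sec:algo} starting from the two visible north glues $P_i$ and $P_j$ guaranteed by condition~1, and to track the invariants from Lemma~\ref{lem:ih} along the way. The algorithm has finitely many possible conclusions: it either (a) succeeds in pumping $P$ infinitely (in which case $P$ is pumpable or fragile, depending on whether the original $P$ can still grow), or (b) produces a branch that can grow without intersecting $P$, or (c) at some step the longest-prefix construction $R$ reaches the last tile of $P_{[u,|P|-1]}+\vect{P_iP_j}$ (resp.\ $P_{[v,|P|-1]}+\vect{P_jP_i}$) without $P$ turning right from it. First I would argue that, since each step of the algorithm advances the index on $P$ reached by the last point of the current stake path $S$ by at least one unit (as claimed in Section~\ref{roadstake} for the non-cage-free case), the algorithm must terminate in one of these conclusions after finitely many steps, unless it gets stuck in the cage-free situation.

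Next I would handle conclusion~(b), the cage-free case. Here I want to use condition~3, the no-backtrack condition $P_k+\vect{P_jP_i}$ being below or on the row of the lowest tile of $P_{[i,k]}$: this says that the translate of the relevant suffix cannot escape ``downward'' past the bottom of the U-turn, so a branch growing freely alongside $P$ would have to either cross $P$ (contradicting Lemma~\ref{lem:ih}, item~\ref{inv:notrespassing}) or violate the east-visibility of $P_k$ from condition~2 — because $P_k$'s south glue being visible from the east means the east side of $P_k$ is ``open'', and a freely-growing translate would have to fill that open region. So conclusion~(b) cannot actually occur under the three hypotheses; the U-turn being \emph{nice} is precisely what rules out the cage-free escape. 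I would make this precise by combining the visibility ray of $P_k$ with the cut defined by $P_{[i,k]}$ together with that ray, in the spirit of the cut arguments in Lemmas~\ref{lem:watershed0} and~\ref{lem:order}.

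Then conclusion~(c): when the branch $R$ manages to grow all the way to the end of the translated suffix, I would argue that $R$ together with $S$ (or its appropriate translate) now ``holds'' the tile $P_k$, or rather holds a segment around $P_k$, from the left, while the other side is visible from the east by condition~2. At this point the stake path, in one of its two translations, surrounds the region that $P$ must traverse to reach its last tile (using again condition~3 to prevent $P$ from sneaking below), so that we can pump a segment $P_{[u,v]}$ held by $S$: if the pumping grows to infinity we are done; otherwise the pumping conflicts with $P$, and by growing the other translation of $S$ first we place a tile where $P$ wanted to grow, so $P$ is fragile. This is exactly the two-duty use of the algorithm described in Section~\ref{roadmap-algo}.

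The main obstacle I expect is gluing together the visibility bookkeeping: condition~2 is stated with respect to $\sigma\cup P_{[1,k]}$, not with respect to all of $P$, so I must be careful that the stake path $S$ — which is built from segments of $P_{[1,|P|]}$ and their translates — does not secretly block the east-visibility ray of $P_k$ that we are relying on. The invariant in Lemma~\ref{lem:ih}, item~\ref{inv:vis}, controls the visibility rays of $P_i$ and $P_j$ but not that of $P_k$; establishing the analogous statement for $P_k$, and checking that the translation $\vect{P_jP_i}$ moves things in the harmless direction (leftward, away from the east-visibility ray), will be the delicate part of the argument, and condition~3 is the hypothesis that makes it go through.
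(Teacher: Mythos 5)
Your plan correctly identifies the algorithm's three halt conditions and correctly senses that conditions~2 and~3 must be what rule out the ``no intersection found'' stop, but it is missing the central mechanism of the paper's proof and leaves the key progress argument unproven.

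The paper's proof is organized entirely around \emph{dominating tiles} and their rays, a notion your proposal never invokes. Concretely: the paper first reduces, without loss of generality, to the case where $P_k$ is the \emph{last} tile of $P$ (since if $P_{[1,k]}$ is pumpable or fragile so is $P$) --- this single reduction dissolves the bookkeeping obstacle you raise at the end, because ``visible from the east on $\sigma\cup P_{[1,k]}$'' then becomes unconditional visibility. The paper then shows that $P_{[i,k]}$ has a dominating tile $P_d$ (its highest tile), and proves by induction that every segment $P_{[u,v]}$ the algorithm visits contains a dominating tile, using cuts of $\mathbb{Z}^2$ formed by the visibility ray $l_i$ (or $l_j$), the subpath $P_{[i,d]}$ (or $P_{[j,d]}$), and the dominating ray $l_d$. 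Condition~3 is used exactly to show that the translated suffix $P_{[d,|P|]}+\vect{P_jP_i}$ ends on the far side of this cut while starting on the near side, forcing an intersection; in forward mode, condition~2 contributes the visibility ray $l_k$ as the third wall of the cut, confining the pumping and forcing it either to grow infinitely (pumpable) or to collide with $P_{[v,|P|]}$. This dominating-tile induction is precisely what establishes the ``each step makes progress'' claim, which you instead cite from the roadmap in Section~\ref{roadstake} as if it were already available; it is not --- it is established \emph{here}, by this induction, and your appeal to it is circular.

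Two smaller inaccuracies. First, you assert that the ``branch grows without intersecting $P$'' outcome (your conclusion~(b)) ``cannot actually occur under the three hypotheses.'' The paper does not claim this; instead it shows that if $P_{[u,|P|]}+\vect{P_iP_j}$ never intersects $P_{[i,|P|]}$, then by domination of $P_d$ and east-visibility of $P_k$ the pumping of $P_{[u,v]}$ can grow indefinitely, so $P$ is pumpable --- a constructive conclusion, not a contradiction. Second, your handling of conclusion~(c) imagines $P$ still needing to ``traverse'' past the stake, but after the WLOG truncation $P$ ends at $P_k$, so there is nothing left to block; the paper's induction sidesteps this case entirely because, under a nice U-turn, the algorithm always finds either a growing pumping or a fresh intersection with a fresh dominating tile, and so never reaches the end-of-path halt. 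I would encourage you to read the definition of dominating tiles and dominating rays at the end of Section~\ref{def:main} and restructure the argument around the induction on them; without that scaffolding, I do not see how to complete the termination argument.
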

\begin{proof}

  First, we can assume that $P_k$ is the last tile of $P$: indeed, if $P_{[1,k]}$ is pumpable or fragile, then so is $P$.

  We first claim that $P_{[i,k]}$ has at least one dominating tile (see Figure~\ref{fig:dom}): indeed, let $h$ be the index of the highest tile of $P_{[i,k]}$. Since $P_{[1,k]}$ intersects $l_h$ only in $P_h$, $P_h$ is a dominating tile ($P$ might have other dominating tiles before $P_h$).

  Moreover, we claim that the first dominating tile $P_d$ of $P$ is on at least one translated branch during the algorithm. Indeed, if $P$ is neither pumpable nor fragile, the algorithm cannot stop before reaching $P_d$, since the only halt case is then the lack of an intersection between a translated branch and $P$.
  But this cannot happen: indeed, let $l_d$ be the dominating ray of $P_d$ (i.e. a ray of vector $\vect{P_iP_j}$ from $P_d$), and let $l_j$ be the visibility ray of $P_j$. Then, let $\mathcal C$ be a cut of $\mathbb{Z}^2$ made of $l_j$, $P_{[j,d]}$ and $l_d$. Since $S$ starts by turning left from $P$, $S$ is on the left-hand side of that cut, and at any step, neither $P_{[u,|P|]}+\vect{P_iP_j}$ (in forward mode) nor $P_{[u,|P|-1]}+\vect{P_jP_i}$ (in backwards mode) can cross $l_j$ nor $l_d$, by visibility of $P_j$ and domination of $P_d$, respectively.

  \begin{figure}[ht]
    \begin{center}
      \includegraphics{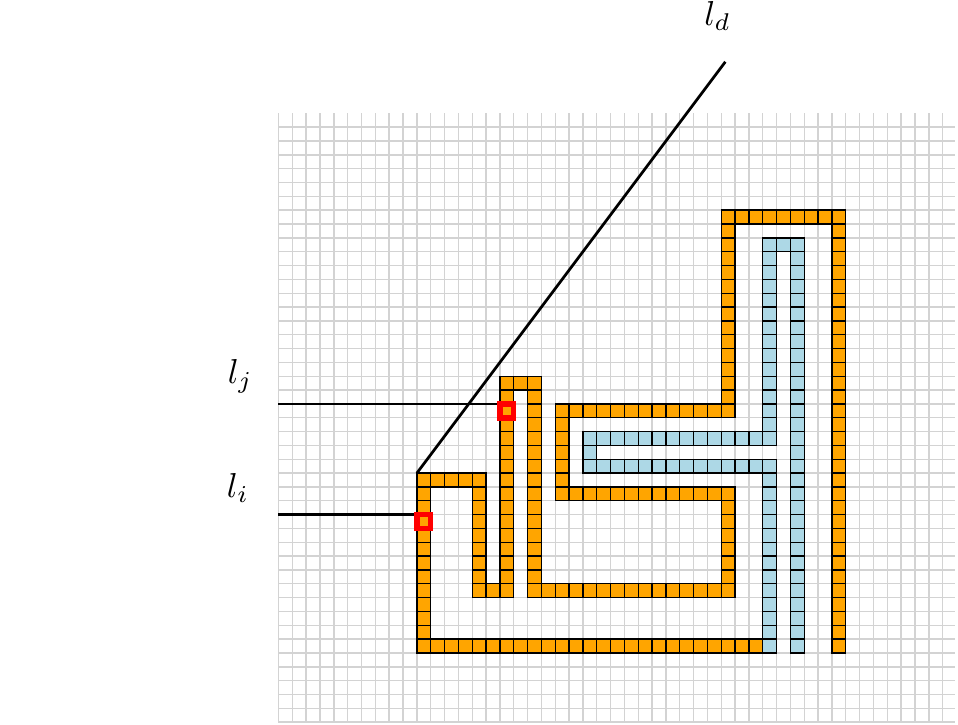}
    \end{center}
    \caption{On this drawing, the seed is in blue and the path in orange. Any path with a nice U-turn has at least one dominating tile, which is not its last one: in this case, the first dominating tile is between $P_i$ and $P_j$, but there are other dominating tiles.}
    \label{fig:dom}
  \end{figure}

  Then, we prove by induction that at each step of the algorithm, if $P_{[u,v]}$ contains a dominating tile, then either $P_{[u,v]}$ is pumpable, or $P$ is fragile, or else $P_{[v+1,|P|]}$ has another dominating tile, and another intersection. The goal is to show that the algorithm cannot halt by not finding an intersection, and therefore, that the only halting cases are if $P$ is fragile or $P$ is pumpable.

  Once the first dominating tile has been reached, each time $P_{[u,v]}$ contains a dominating tile $P_d$, there are two cases:
  \begin{itemize}
  \item Either this happens in backwards mode (see Figure~\ref{fig:backwards}), in which case let $\mathcal C$ be the cut of $\mathbb{Z}^2$ made of the visibility ray of $P_i$, $P_{[i,d]}$ and then the dominating ray $l_d$ of $P_{d}$.

    If $P_{[u,d]}+\vect{P_jP_i}$ cannot grow without crossing $P$, the algorithm moves to backwards mode, with a segment also containing $d$.

    Else, $P_{[u,d]}+\vect{P_jP_i}$ can grow without crossing $P$, and thus $P_{[d,|P|]}+\vect{P_jP_i}$ starts on the left-hand side of this cut, and yet ends on the right-hand side (by hypothesis~\ref{nobacktrack}). However, $P_{[d,|P|]}+\vect{P_JP_i}$ can cross neither $l_i$ (because $P_i$ and $P_j$ are visible from the west), nor $l_d$ (because $P_d$ is dominating).

    Hence, $P_{[d,|P|]}+\vect{P_jP_i}$ must intersect $P_{[i,d]}$, which causes the algorithm to move to forward mode, with a segment containing $P_d$, i.e. containing a dominating tile.

  \begin{figure}[ht]
    \begin{center}
      \includegraphics{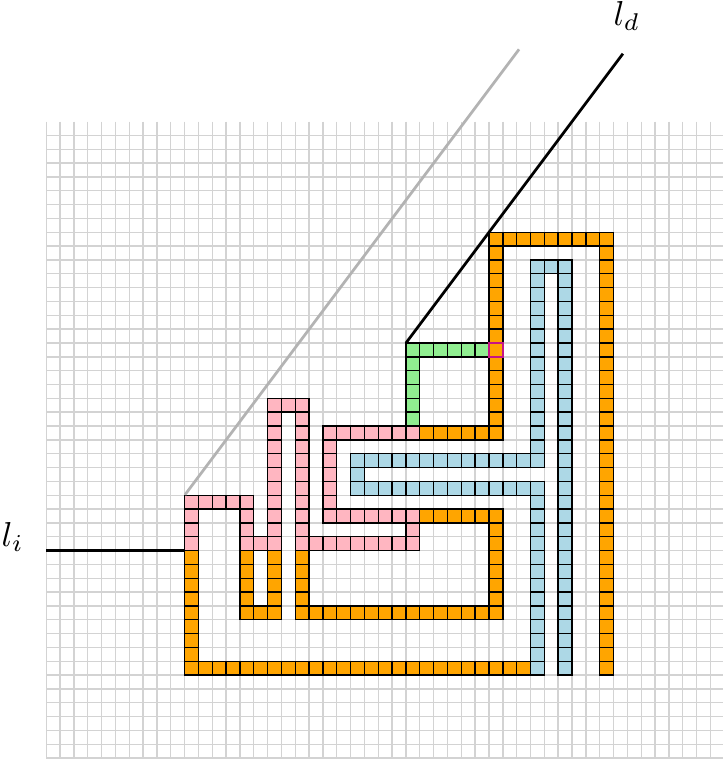}
    \end{center}
    \caption{On this picture, the current stake $S$ is in pink, the seed is in blue, $P$ is in orange. The translation $P_{[v,|P|]}+\vect{P_jP_i}$ (in green), when containing a dominating ray, cannot grow completely without intersecting $P$: indeed, that translation starts by turning left from $P$ (and thus to the left of the cut $\mathcal C$ defined by $l_i$, $P_{[i,d]}$ and $l_d$), but its last point is below all these points, i.e. in the right right of the cut.
    Remark that $P_d$ need not be the first dominating tile: the ray in gray on this picture is from an earlier dominating tile.}
    \label{fig:backwards}
  \end{figure}

  \item Or this happens in forward mode, in which case the algorithm first tries to pump $P_{[u,v]}$ (see Figure~\ref{fig:pumping}).

    First remark that if this fails, the pumping necessarily intersects $P_{[v,|P|-1]}$: indeed, the pumping of $P_{[u,v]}$ grows in a connected component of $\mathbb{Z}^2$ above a cut made of $l_d$, the dominating ray of $P_d$, $P_{[d,k]}$, and $l_k$, the visibility ray of $P_k$ (from $P_k$ to the east).

    And since $P_k+\vect{P_jP_i}$ is at least as low as any tile of $P_{[i,k]}$ (by the nice U-turn hypothesis), no tile of the pumping of $P_{[u,v]}$ can cross $l_k$. Moreover, since $P_d$ is dominating, no translation of $P_{[u,v]}$ by $n\vect{P_iP_j}$ for $n>0$ can cross $l_d$ (although these translations intersect $l_d$ in $P_d+n\vect{P_iP_j}$).

    Therefore, if $P$ is not pumpable, the pumping of $P_{[u,v]}$ must intersect $P_{[v,|P|]}$. This yields a new dominating tile on $P_{[v,|P|]}$: indeed, let $e$ be the index on $P$ of the intersection between $P$ and the pumping of $P_{[u,v]}$, and let $l_e$ be a ray of vector $\vect{P_iP_j}$ from $P_e$. The highest tile $P_f$ of $P_{[v,|P|]}$ on $l_e$ is dominating, since $P_d$ is dominating, and $P_{[d,|P|]}$ does not cross $l_e$ higher than $P_f$.

  \begin{figure}[ht]
    \begin{center}
      \includegraphics{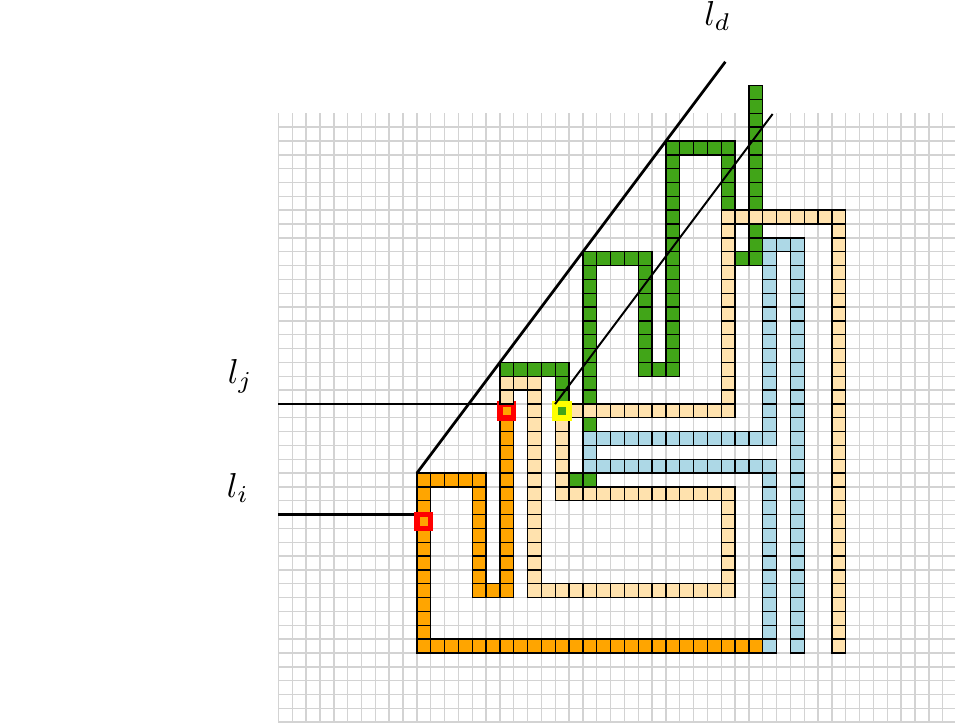}
    \end{center}
    \caption{In forward mode, our algorithm first tries to grow the pumping (in green). If this pumping conflicts with other parts of the assembly (which is the case here), we consider the first time where $P_{[j,|P|-1]}$ turns left from the pumping, and draw a ray $l_k$ of vector $\vect{P_iP_j}$ from that tile. The highest tile on $l_k$ is dominating.}
    \label{fig:pumping}
  \end{figure}

    Moreover, if $P_{[u,|P|]}+\vect{P_iP_j}$ does not intersect $P_{[i,|P|]}$, then $P$ is pumpable: indeed, if $P_{[d,k]}+\vect{P_iP_j}$ can grow completely without intersecting $P_{[i,|P|]}$, then since $P_k$ is visible from the east, further iterations of $P_{[u,v]}$ could also grow from $P_{[d,k]}+\vect{P_iP_j}$ without intersecting anything, since these iterations can cross neither $l_d$ nor $l_k$ (the visibility ray of $P_k$).

    Therefore, if $P$ is not pumpable, the algorithm moves to backward mode, with a segment of $P$ containing at least one dominating tile.

  \end{itemize}

  Finally, the only way the algorithm can stop, in the case that $P$ has a nice U-turn, is by finding a pumpable segment.

\end{proof}

\section{Initial conditions}
\label{sec:initial}
The goal of this section is to find two tiles $P_i$ and $P_j$ of $P$, with $i<j$, whose output sides are both the north side, and such that their north glue is visible from the west. Although this goal may not seem very ambitious, the main complication comes from the details of \emph{visible}: indeed, we want these glues to be visible both on $P$ and on the first prefix of $P$ that grows higher than our first bound $2|T|+|\dom\sigma|$.
The problem is, a glue visible from the west on a prefix of $P$ could become hidden by subsequent parts of $P$.

\begin{lemma}
  \label{lem:init:cond}
  Let $P$ be a path assembly producible by some tile assembly system $\mathcal T=(T,\sigma,1)$, growing at least  $\dbound=2|T|+2$ rows above $\sigma$. At least one of the following is the case:
  \begin{enumerate}
  \item $P$ is fragile
  \item $P$ is pumpable
  \item there are two indices $i<j$ such that, at the same time:\label{third}
    \begin{enumerate}
    \item $P$ has no tile more than $2|T|$ rows below $\sigma$.\label{e}
    \item $P_i$ and $P_j$ are of the same type.\label{a}
    \item Both $P_i$ and $P_j$ are within the first $3|T|+2$ rows around the seed, and $y_{P_i}<y_{P_j}$.\label{c}
    \item $P_{[1,j]}$ is entirely within the first $\dbound=|\dom\sigma|+(3|T|+|\dom\sigma|+5)|T|+1$ rows around the seed.\label{d}
    \item The output sides of $P_i$ and $P_j$ are the north, and are both visible from the west on $P$.\label{b}
    \end{enumerate}
  \end{enumerate}
\end{lemma}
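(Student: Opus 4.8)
The plan is to follow $P$ from the seed until it first escapes a sufficiently tall rectangle around $\sigma$, and then use a pigeonhole argument on the (at most $|T|$) tile types of the escaping tiles to produce the two repeated visible north glues. First I would reduce to the case where $P$ never descends more than $2|T|$ rows below $\sigma$: if it does, then among the lowest $2|T|+1$ rows below $\sigma$ there must be a tile type repeated with, say, the south output side visible from some direction, and a symmetric (downward) version of Lemma~\ref{lem:uturn} or a direct window-movie/pumping argument shows $P$ is pumpable or fragile. (More carefully: a path going deep below $\sigma$ and then coming back up has a U-turn to the south; if it never comes back up, the deepest descent is itself a long monotone-ish segment with a repeated type, hence pumpable; if it does come back up, we get the repeated south-visible glues and break it.) After this reduction, condition~\eqref{e} holds, so all of $P$'s action below $\sigma$ is confined to $2|T|$ rows.

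Next I would look at the shortest prefix $P_{[1,m]}$ of $P$ that reaches $\dbound = 2|T|+2$ rows above $\sigma$ — equivalently, the first time $P$ leaves the band of $3|T|+|\dom\sigma|$-ish rows centered on $\sigma$. The key point is that on this shortest prefix, the "highest point is the last point" hypothesis of Lemmas~\ref{lem:watershed0} and~\ref{lem:order} is available (or nearly so — one may need to truncate at the highest tile of the prefix). By Lemma~\ref{lem:watershed0}, above the watershed row $y_0$ all east-visible glues of this prefix are north glues; and symmetrically all west-visible glues in the upper rows are... — here one has to be careful about which side, but the idea is that in the top few rows of the escaping prefix there are west-visible north output glues on essentially every row crossed on the way up, since the prefix is minimal and so cannot re-enter those rows from the correct side without destroying visibility. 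That gives at least $2|T|+1$ rows each carrying a west-visible north output glue of some tile of $P_{[1,m]}$; by pigeonhole two of them, $P_i$ and $P_j$ with $y_{P_i}<y_{P_j}$, have the same tile type, giving \eqref{a}, \eqref{c}, \eqref{b}.

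Then I would argue the width/containment bound~\eqref{d}: because $P$ stays within $2|T|$ rows below $\sigma$ and the prefix $P_{[1,j]}$ only needs to reach the $(3|T|+|\dom\sigma|+\text{const})$-th row above $\sigma$, and because each horizontal "excursion" that returns to a given band of rows would (by a window-movie argument, i.e. the mechanism behind Lemma~\ref{lem:band}) make $P$ pumpable or fragile, the prefix can visit each of the $O(|T|+|\dom\sigma|)$ relevant rows only $O(|T|)$ times before escaping; multiplying gives the stated $|\dom\sigma|+(3|T|+|\dom\sigma|+5)|T|+1$ row bound. Finally, I would double-check that the visibility of $P_i,P_j$ claimed in~\eqref{b} is genuinely visibility \emph{on all of $P$}, not merely on the prefix: this is exactly the subtle point flagged in the roadmap (Section~\ref{roadmap-init}) — if a later part of $P$ hides one of these glues, then that later part together with $P_{[1,j]}$ forms a nice U-turn (the hiding tile is east-visible, giving condition~\eqref{cond:vis}, and the no-backtrack condition~\eqref{cond:nobacktrack} follows from~\eqref{e}), and Lemma~\ref{lem:uturn} finishes the case.

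The main obstacle I expect is this last interplay: getting the visibility to hold on the \emph{entire} path rather than just on the escaping prefix, and in particular verifying that whenever it fails one lands precisely in the "nice U-turn" hypotheses of Lemma~\ref{lem:uturn} — the backtracking condition~\eqref{cond:nobacktrack} in particular is delicate and is exactly what forces the preliminary reduction to "$P$ stays within $2|T|$ rows below $\sigma$." The width bound~\eqref{d}, while bookkeeping-heavy, should follow routinely from iterating the window-movie argument of Lemma~\ref{lem:band} once it is in place.
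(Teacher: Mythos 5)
Your proposal follows essentially the same route as the paper: apply Lemma~\ref{lem:watershed0} (and Lemma~\ref{lem:order}) to the shortest prefix of $P$ that escapes the band around $\sigma$, pigeonhole on the visible north glues to produce $P_i,P_j$ of the same type, and fall back on Lemma~\ref{lem:uturn} whenever visibility on all of $P$ (rather than on the prefix) fails. You also correctly isolate the central subtlety, namely that a later part of $P$ can hide a glue that was visible on the prefix, and that this hiding is exactly what produces a nice U-turn. The reduction to claim~\ref{e} is likewise the paper's mechanism (a flipped nice U-turn obtained from too many east-visible south glues below the watershed row $y_0$, with $\vect{P_jP_i}$ pointing north by Lemma~\ref{lem:order}), although your parenthetical ``if it never comes back up, the deepest descent \ldots{} has a repeated type, hence pumpable'' is not sound as stated: a repeated tile type on a descending segment does not by itself make the pumping conflict-free, and in any case the case is vacuous here because the lemma's hypothesis already forces $P$ to climb $2|T|+2$ rows above $\sigma$.

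The one step I think would not go through as you sketched it is claim~\ref{d}. Iterating the window-movie mechanism of Lemma~\ref{lem:band} is the wrong tool: that lemma controls horizontal extent within a thin stripe (and gives factorial bounds), whereas~\ref{d} is a purely vertical, polynomial ($|\dom\sigma|+(3|T|+|\dom\sigma|+5)|T|+1$) bound on the rows occupied by $P_{[1,j]}$. In the paper, \ref{d} costs nothing: $k$ is defined as the \emph{first} index whose tile exceeds $\dbound$ rows above $\sigma$, and since $P_j$ is chosen among the visible glues of $P_{[1,k]}$ we get $j\le k$, so $P_{[1,j]}$ is confined above by the definition of $k$ and below by~\ref{e}. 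The size of $\dbound$ is then dictated not by a window-movie count but by the need, in the final ``visibility on all of $P$'' step, to pigeonhole out a pair of east-visible north glues at least $3|T|+|\dom\sigma|+5$ rows apart, which is what makes condition~(3) of Lemma~\ref{lem:uturn} hold when a later tile of $P$ hides $P_i$ or $P_j$. Your sketch does not supply this spacing argument, which is the actual place the constant in~\ref{d} is used.
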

\begin{proof}
  We prove that if case~\ref{third} is not the case, then $P$ has a nice U-turn.

  Let $k$ be the first index on $P$ such that $P_k$ is more than $\dbound$ rows above $\sigma$.
  By Lemma~\ref{lem:watershed0}, since $P_k$ is the last and highest tile of $P_{[1,k]}$, there is a row $y_0$ such that all glues visible on $P_{[1,k]}$ that are above $y_0$ are north glues, and all glues visible on $P_{[1,k]}$ below $y_0$ are south glues.

  Therefore, if $P_{[1,k]}$ has more than $|T|+1$ visible glues below $y_0$, then $P_{[1,k]}$ has a nice U-turn, since $P_k$ is the last and highest tile of $P_{[1,k]}$: this can be checked by considering all three conditions of a nice U-turn, flipped upside down: by the pigeonhole principle and Lemma~\ref{lem:watershed0}, condition (1) of Lemma~\ref{lem:uturn} holds. Then, condition (2) also holds: since $P_k$ is a highest tile of $P_{[1,k]}$, its north glue is visible from both the east and the west. Finally, condition (3) holds because $P_k$ is already a highest tile of $P_{[1,k]}$, hence $P_k+\vect{P_jP_i}$ is only higher (because by Lemma~\ref{lem:order}, the y-coordinates of $P_i$ and $P_j$ are in the reverse order of $i$ and $j$, hence $\vect{P_jP_i}$ is towards the north).

  This proves claim~\ref{e} of our lemma.
  Hence, $y_0$ is at most $|T|+1$ rows above $\sigma$.
  Since $P_k$ is more than $2|T|+2$ rows above $\sigma$, we consider all its visible north glues: by the pigeonhole principle, at least two of them are of the same type, let $i$ and $j$ (with $i<j$) be the indices of two such tiles.

  Up to here, if $P$ is neither pumpable nor fragile, we have therefore already proven claims~\ref{a},~\ref{c},~\ref{e} and~\ref{d} of our statement.

  We now prove claim~\ref{b} of our statement: we already know that $P_i$ and $P_j$ are visible from the west on $P_{[1,k]}$. If $P_i$ or $P_j$ is not visible on $P$, this means that $P_{[k+1,|P|]}$ has at least one glue to the west of the north glue of $P_i$ or $P_j$. But then we claim that this means that $P$ has a nice U-turn, and we can conclude with Lemma~\ref{lem:uturn}. Indeed, we can again check the three conditions of Lemma~\ref{lem:uturn}:
condition (1) of Lemma~\ref{lem:uturn} holds by the pigeonhole principle, and moreover from the size of $\dbound$ we can even choose $P_{i'}$ and $P_{j'}$, two tiles with north glues visible from the east, at least $3|T|+|\dom\sigma+5$ apart. Then, condition (2) of that lemma also holds: indeed, if $P_{[k+1,|P|]}$ has a glue hiding the visibility of $P_i$ or $P_j$, then it must place another visible glue on these rows. Let $k'$ (variable $k$ in Lemma~\ref{lem:uturn}) be the first index of such a tile on $P$. Finally, condition (3) of Lemma~\ref{lem:uturn} is also the case, by claim~\ref{e} of this proof: indeed, $P_{[1,k']}$ has no tiles more than $2|T|+1$ rows below all the tiles of $\sigma$. Therefore, since $P_j$ and $P_i$ are at least $3|T|+|\dom\sigma|+5$ rows apart, $P_{k'}+\vect{P_{j'}P_{i'}}$ is lower than all tiles of $P_{[1,k']}$.

\end{proof}

\section{The Reset Lemma}
\label{sec:reset}
The goal of this section is to define a rectangular zone $\rdanger$ in which $P$ starts, and show that if $P$ enters $\rdanger$ too many times, then $P$ is pumpable or fragile.
That zone is of height $\dbound$ (as defined in Lemma~\ref{lem:init:cond}), and of width exponential in $\dbound$ (we will get more precise in Lemma~\ref{lem:band}).

\subsection{Adapting the Window Movie Lemma to paths}

\begin{lemma}
  \label{lem:adaptwml}
  Let $P$ be a path assembly producible by some tile assembly system $\mathcal T=(T,\sigma,1)$. Let $w$ be a cut-set of $\mathbb{Z}^2$ into two connected components $A$ and $B$, and let $\vec v$ be a non-zero vector of $\mathbb{Z}^2$ such that $w$ and $w+\vec v$ do not have any edge in common.

  Moreover, let $\alpha$ and $\beta$ be the partial assemblies induced by $P$ in $A$ and $B$, respectively, and let $\alpha'$ and $\beta'$ be the partial assemblies induced by $P$ in $A+\vec v$ and $B+\vec v$, respectively.

  If $w$ is such that $\sigma$ is entirely in $A$ and entirely in $A+\vec v$, and the movies of $P$ along $w$ and along $w+\vec v$ are equal up to translation by $\vec v$, then $P$ is pumpable or fragile.

\end{lemma}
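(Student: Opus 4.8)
The plan is to adapt the Window Movie Lemma from~\cite{Meunier-2014} to the setting of a single path assembly $P$, rather than a pair of producible assemblies. The key observation is that the Window Movie Lemma is really a statement about concurrent assembly sequences crossing a cut: if two producible assemblies have the same ``movie'' (sequence of glues placed, in order, across the window, together with their positions) along two parallel cuts that differ by a translation $\vec v$, one can splice them together. Here we only have one path $P$, but $P$ induces two partial assemblies on each side of $w$ (the pieces $\alpha$ in $A$ and $\beta$ in $B$), and similarly two partial assemblies $\alpha', \beta'$ on each side of $w + \vec v$. Since $\sigma$ lies entirely in $A$ and in $A + \vec v$, the seed is on the ``$A$-side'' of both cuts, so the parts of $P$ in $B$ and in $B + \vec v$ are grown ``outward'' from the seed across $w$ and $w+\vec v$ respectively.

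First I would make precise the notion of a \emph{path movie}: as $P$ is traversed from $P_1$ (which is in $\sigma$, hence in $A$) to $P_{|P|}$, record the ordered list of glues of $P$ that sit on edges of $w$ — i.e., every consecutive pair $(P_t, P_{t+1})$ whose connecting edge is an edge of the cut-set $w$ — together with the position and glue type. Do the same for $w + \vec v$. The hypothesis is that these two lists agree up to translation by $\vec v$: the $n$-th crossing of $w+\vec v$ is the translate by $\vec v$ of the $n$-th crossing of $w$, with the same glue type, and the two lists have the same length. Because $w$ and $w + \vec v$ share no edge, these crossings are genuinely distinct, and in particular each segment of $P$ between two consecutive crossings lies wholly on one side of each cut.

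Next I would construct the pumped (or blocking) object by splicing. Decompose $P$ according to its crossings of $w$ into maximal arcs that alternate between being in $A$ and being in $B$. The first arc is in $A$ (it contains $\sigma$); call the arcs $A_0, B_1, A_1, B_2, \dots$ Analogously decompose $P$ by its crossings of $w + \vec v$ into $A'_0, B'_1, A'_1, \dots$. The matching of movies means: for each $n$, the crossing of $w$ entering $B_n$ corresponds, after translating by $\vec v$, to the crossing of $w+\vec v$ entering $B'_n$, and these share a glue type. Now form the path $Q$ that follows $P$ inside $A$ until it first crosses $w$, but then — because that crossing glue equals (up to $\vec v$) the glue with which $P$ crosses $w + \vec v$ into $B'_1$ — continue by appending the translate by $-\vec v$ of the $B'_1$-portion, re-entering the $A$-region, and iterate. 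The standard Window Movie argument shows this can be repeated: either the splice closes up into an infinite ultimately periodic path (giving a pumpable $P$, since a pumpable prefix makes $P$ pumpable, per the definition in Section~\ref{def:main}), or at some point the spliced path collides with a tile of $P$ at a position where the tile types differ, and by growing the spliced assembly first we obtain a terminal assembly of which $P$ is not a subassembly — i.e., $P$ is fragile. One has to check that the spliced object is always a valid path assembly (positions follow a path in the binding graph) and is producible: producibility follows because at temperature $1$ any path in the binding graph of a producible assembly that starts in $\sigma$ is itself producible (the fact recalled in Section~\ref{def:main}), and the binding-graph/connectivity condition is preserved because we only ever glue segments of $P$ and translates of segments of $P$ at matching glues.

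The main obstacle I expect is bookkeeping the geometry of the splice so that the resulting object is genuinely a \emph{path} and not merely an assembly: one must argue that after cutting $P$ at a $w$-crossing and grafting in a $(-\vec v)$-translate of a $B'_n$-arc, the endpoints actually match up as vertices of the grid graph (not just that the glues match), and that distinct grafted pieces do not interfere with each other in a way that breaks the path structure before the intended collision. This is where the hypothesis that $w$ and $w + \vec v$ have no common edge is used crucially — it guarantees the two families of crossings are disjoint and the arcs nest properly — and it is essentially the same delicate argument as in~\cite{Meunier-2014}, transported from the assembly setting to the path setting; the novelty here is only that we track a single path's movie and exploit the seed being on a fixed side of both windows. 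I would conclude by invoking the concurrency/conflict vocabulary of Section~\ref{def:main}: if any spliced iteration conflicts with $P$ we get fragility, and otherwise the splice iterates forever and exhibits a pumping of $P$, so in all cases $P$ is pumpable or fragile.
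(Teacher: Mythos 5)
Your high-level plan — adapt the Window Movie Lemma, splice across the two matching windows, and conclude pumpability or fragility — is the same plan the paper follows, but you skip over the one subtlety that the paper considers the crux of the lemma. You write that if the splice closes up into an infinite ultimately periodic path then $P$ is pumpable ``since a pumpable prefix makes $P$ pumpable.'' That inference is not justified: the ultimately periodic assembly $\gamma$ obtained by iterating the Window Movie Lemma across $w$ and $w+\vec v$ is built by repeating \emph{everything} between the two windows, which can include arcs of $P$ that are not visited contiguously. The result need not be the pumping of any single segment $P_{[i,j]}$ in the paper's formal sense, so producibility of $\gamma$ does not by itself certify that $P$ is pumpable. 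The paper makes exactly this point and even illustrates a counterexample (Figure~\ref{fig:notapumping}) to the naive version of what you are asserting.

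The paper's fix, which your proposal is missing, is to \emph{separately} construct the object used to pump and the object used to break. It first iterates the Window Movie Lemma to get $\gamma$, then identifies a specific candidate segment: take the shortest prefix $P_{[1,k]}$ whose movies on $w$ and $w+\vec v$ already agree, and let $u$ and $v$ be the indices of the \emph{last} tiles of $P_{[1,k]}$ with a glue on $w$ and on $w+\vec v$, respectively. The pumping attempt is on $P_{[u,v]}$ (a single contiguous segment, translated by $\vect{P_uP_v}=\vec v$). If that pumping $Q$ is producible, $P$ is pumpable — now a genuine pumping in the paper's sense. If not, the choice of $u$ forces $Q_{[u,v]}$ to lie entirely in $B$, so the first conflict is between some $Q_s=P_s$ with $s\in[u,v]$ and some later $Q_t$ lying in $B'$; and that $Q_t$ is also a tile of $\gamma$. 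Growing $\gamma$ first therefore blocks $P_s$, which yields fragility. Your sketch has the two halves of this argument entangled into one ``splice'' and does not verify either that the splice is a pumping or that the conflicting tile is producible before $P_s$; filling in those two points essentially reproduces the paper's choice of $u,v$ and the role of $\gamma$.

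A secondary, smaller issue: your description of the graft (``appending the translate by $-\vec v$ of the $B'_1$-portion, re-entering the $A$-region'') is geometrically off — $B'_1-\vec v$ lies in $B$, not $A$ — which suggests the bookkeeping of which arcs sit on which side has not been worked out. This is harmless in spirit, but it matters precisely because the lemma's difficulty is that bookkeeping.
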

\begin{proof}
  The hypotheses of this lemma are already stronger than those of the Window Movie Lemma~\cite{Meunier-2014} (in particular, $w$ and $w+\vec v$ here are required to be disjoint), so we could start by using that lemma. Moreover, since $w$ and $w+\vec v$ do not intersect, we can even use it iteratively, which results in a ``ultimately periodic assembly'' $\gamma$. However, although the ``pumped'' parts all come from $P$, this does not yet mean that $P$ is pumpable, since we have not yet identified any pumpable segment of $P$. And indeed, in some cases, as shown on~Figure~\ref{fig:notapumping}, this might not result in a pumping of $P$.

  \begin{figure}[ht]
    \begin{center}
      \includegraphics{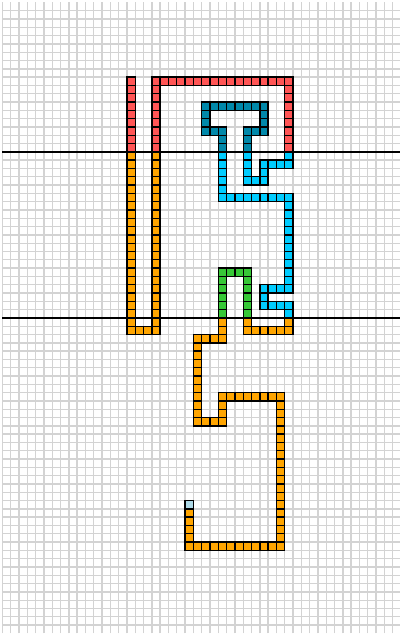}\hfill
      \includegraphics{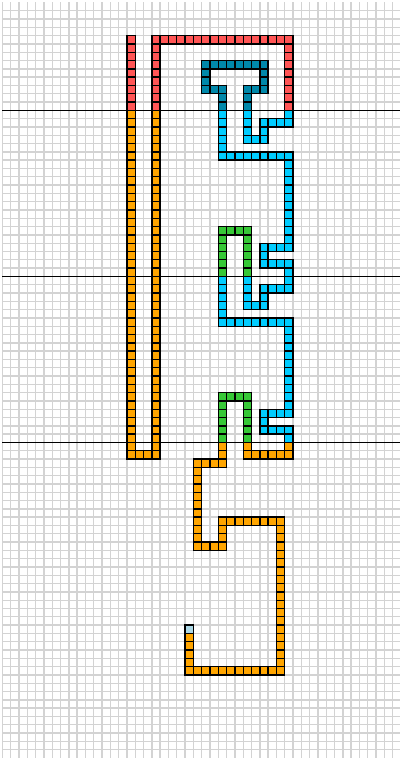}\hfill
      \includegraphics{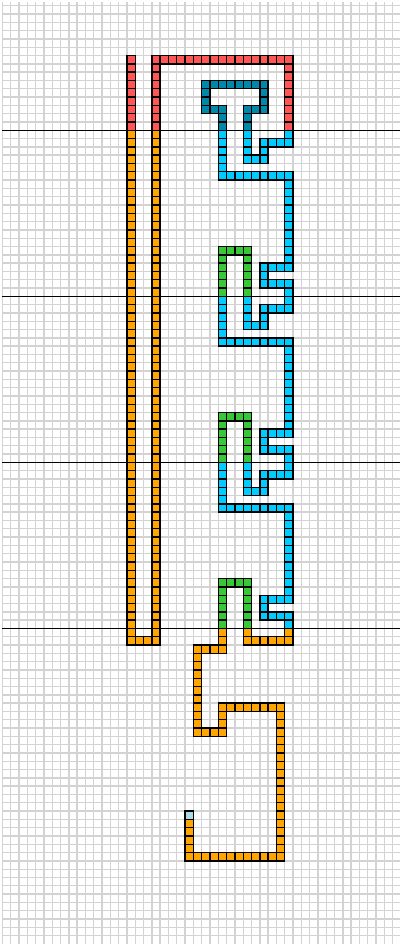}
    \end{center}
    \caption{An example of pumping the assembly induced by a path $P$, in which the resulting assembly is not a pumping of $P$. For this to be the case, the green part, along with both blue parts (light and dark blue) should be repeated between the windows, but here, only the green and light blue are. The seed is just one tile (the lowest blue tile), and the windows are shown as bolder horizontal lines. Other colors are only used to make the parts of $P$ more recognizable.}
    \label{fig:notapumping}
  \end{figure}

  We first identify a candidate segment of $P$ for pumping: let $k$ be the first integer such that the assembly induced by $P_{[1,k]}$ has the same movies on $w$ and $w+\vec v$ (such a prefix of $P$ always exists, since $P$ has the same movies on $w$ and $w+\vec v$), and let then $u$ and $v$ be the indices of the last tiles of $P_{[1,k]}$ that have a glue on $w$ and $w+\vec v$, respectively.

  We will try to pump $P_{[u,v]}$. More precisely let $Q$ be the sequence of points and tile types (not necessarily a path assembly) defined for all $n<u$ by $Q_n=P_n$, and for all $n\geq u$ by $Q_n=P_{(n-u)\mod(v-u)}+\frac{n-u}{v-u}\vect{P_uP_v}$.

  If $Q$ is a path assembly, i.e. if for all $n$, $Q_n$ conflicts neither with $\sigma$ nor with any tile of $Q_{[1,n-1]}$, then $Q$ can grow infinite, and therefore $P$ is pumpable.

  Else, there is a conflict. Since $u$ was chosen to be the last tile of $P$ with a glue on $w$, and $w+\vec v$ is in $B$, $P_{[u,v]}$ is entirely in $B$, and hence, $Q$ cannot possibly conflict with $\sigma\cup Q_{[1,u]}$. The first conflict is therefore between two tiles $Q_s$ and $Q_t$, with $s\in[u,v]$ and $t>v$. However, since $Q_{[u,v]}$ has no tile in $A$ (by choice of $u$), this means that $Q_t$ is in $B'$ (because $Q_t$ is a translation by some $n\vect{P_uP_v}$, with $n>0$, of some tile of $Q_{[u,v]}$). Therefore, this conflict is in $B'$, but $Q_t$ is also in $\gamma$: hence, if we first grow $\gamma$, $Q_s=P_s$ cannot grow anymore, which means that $P$ is fragile.

  \begin{figure}[ht]
    \begin{center}
      \includegraphics{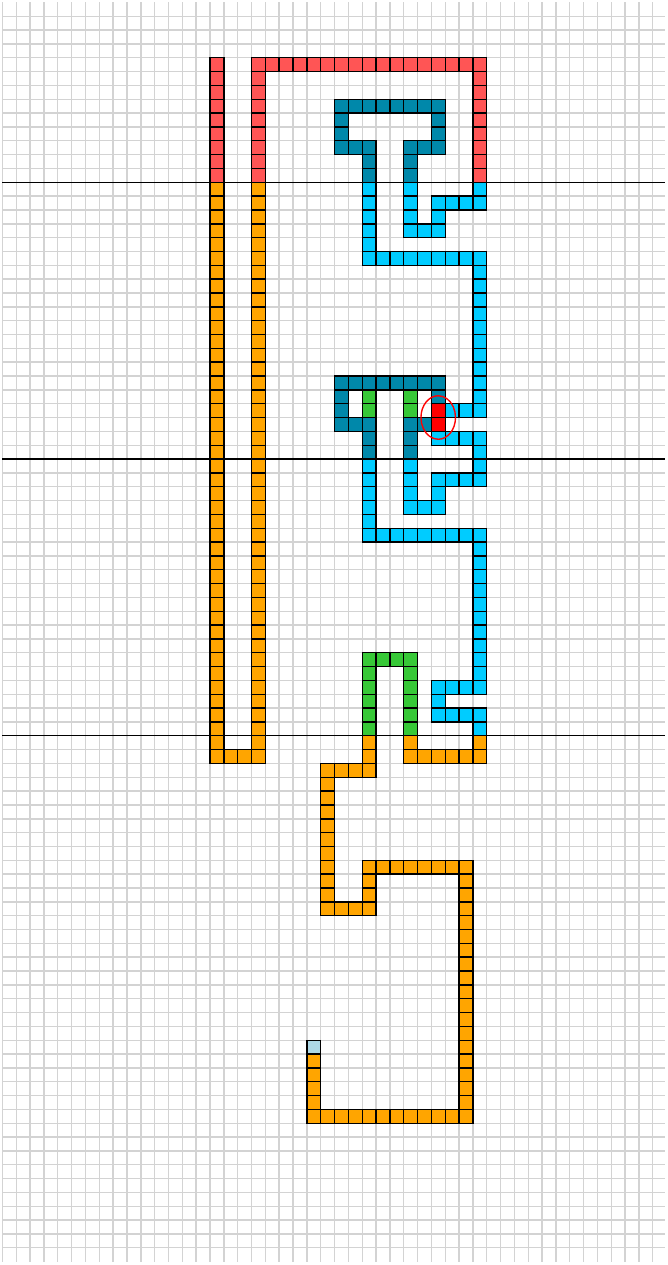}
    \end{center}
    \caption{In the case that $\gamma$ (the assembly ``pumped'' with the window movie lemma) and $Q$ (the pumping of $P$) conflict, as is the case here on the red tiles, this conflict necessarily involves the first iteration $P_{[u,v]}$ of the pumping, which is also in $P$. In this case, our strategy to break $P$ is to simply grow $\gamma$ first: $P$ cannot regrow anymore after that.}
    \label{fig:cuts}
  \end{figure}

\end{proof}

\subsection{Diet paths}

We first show that if a path assembly $P$ grows really long inside a thin stripe of height $\dbound$, then we can pump or block $P$:

\begin{lemma}
  \label{lem:band}
  Let $P$ be a path assembly producible by some tile assembly system $\mathcal T=(T,\sigma,1)$, and let $\rdanger$ be a rectangle of height $\dbound$ and width $2\fband(\dbound)+|\dom\sigma|$, centered around $\sigma$, where for all integer $n>0$, $\fband(n)=((|T|+1)^n)!+1$ is the bound given by the Window Movie Lemma~\cite{Meunier-2014}.

  If $P$ first grows out of $\rdanger$ on the east or west side of $\rdanger$, then $P$ is pumpable or fragile.
\end{lemma}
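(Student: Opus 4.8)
The plan is to reduce to the adapted Window Movie Lemma (Lemma~\ref{lem:adaptwml}) by a pigeonhole argument on vertical cut-sets placed between columns. Suppose, without loss of generality, that $P$ first leaves $\rdanger$ through its east side (the west case is symmetric). For each integer column index $x$ with $0 \le x \le \fband(\dbound)$ measured from the appropriate edge of $\sigma$, let $w_x$ be the vertical cut-set of $\mathbb{Z}^2$ separating the half-plane of columns $\le x$ (call it $A_x$) from the half-plane of columns $> x$. Since $\rdanger$ has height $\dbound$, and since (by the hypothesis that $P$ first exits through the east) the portion of $P$ that grows before exiting stays within these $\dbound$ rows, the window movie of $P$ along $w_x$ is a sequence of at most $\dbound$ glues, each carrying one of $\le |T|$ possible glue labels on a side crossing $w_x$, together with the order in which they are placed. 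Crucially, the \emph{relative order and relative vertical positions} of the glues crossing $w_x$ take at most $(|T|+1)^{\dbound}$ values once we record the movie up to translation, so among the $\fband(\dbound) = ((|T|+1)^{\dbound})! + 1$ candidate columns, by the pigeonhole principle two of them, say $w_x$ and $w_{x'}$ with $x < x'$, have equal movies up to translation by $\vec v = (x'-x, 0)$.

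Next I would check that these two cut-sets satisfy the hypotheses of Lemma~\ref{lem:adaptwml}: they are vertical lines at distinct integer columns, hence $w_x$ and $w_x + \vec v = w_{x'}$ share no edge; and since $\sigma \subseteq \rdanger$, which has width $2\fband(\dbound) + |\dom\sigma|$ centered on $\sigma$, the seed lies entirely to the west of column $x$ — that is, $\sigma$ is contained in $A_x$, and also in $A_{x'} = A_x + \vec v$ (translating further east only keeps $\sigma$ inside). The condition "the movies of $P$ along $w$ and $w+\vec v$ are equal up to translation by $\vec v$" is exactly what the pigeonhole step delivered. Therefore Lemma~\ref{lem:adaptwml} applies and concludes that $P$ is pumpable or fragile.

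The main obstacle, and the step requiring the most care, is making the counting honest: I need to argue that the number of possible window movies along a vertical line of height $\dbound$, \emph{considered up to vertical translation}, is at most $(|T|+1)^{\dbound}$ — or whatever the exact bound from~\cite{Meunier-2014} is that makes $\fband(n) = ((|T|+1)^n)!+1$ the right threshold. A window movie records not just the multiset of glues but the temporal order of placements and their positions along the window; however, because $P$ is a single path of bounded vertical extent, at any moment it has crossed the window at a bounded set of rows, and the movie up to translation is determined by the sequence of (row-offset, glue-label) pairs in order of placement — and since each row can be crossed a bounded number of times and there are $\dbound$ rows, this is a bounded-length word over a bounded alphabet, giving the stated count. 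I would also need to double-check the width bookkeeping: $2\fband(\dbound)+|\dom\sigma|$ must be large enough that all $\fband(\dbound)+1$ column indices $0,\ldots,\fband(\dbound)$ lie strictly inside $\rdanger$ on the exit side while $\sigma$ stays on the other side, which is where the factor $2$ and the $|\dom\sigma|$ term come from. Once these two bookkeeping points are pinned down, the rest is a direct invocation of Lemma~\ref{lem:adaptwml}.
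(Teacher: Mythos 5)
Your approach is essentially the same as the paper's: vertical windows, pigeonhole on window movies, invoke the adapted Window Movie Lemma (Lemma~\ref{lem:adaptwml}). The bookkeeping on width and on the location of $\sigma$ is handled correctly, and your worry about the exact movie count is exactly the right thing to be uneasy about (the paper uses $((|T|+1)^{\dbound})!$ as the number of possible movies of height $\dbound$, the factorial accounting for placement order, and picks $\fband(\dbound)=((|T|+1)^{\dbound})!+1$ windows to force a repeat).

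There is, however, one genuine gap. You write ``the window movie of $P$ along $w_x$ is a sequence of at most $\dbound$ glues'' and justify this by noting that the portion of $P$ that grows \emph{before exiting} stays in $\dbound$ rows. But the window movie of $P$ — the full path — along $w_x$ records \emph{every} crossing of $w_x$ by $P$, including the suffix $P_{[k+1,|P|]}$ after $P$ has left $\rdanger$. That suffix is unconstrained in height: it can loop back and cross $w_x$ at arbitrary rows, so the movie of $P$ along $w_x$ need not fit in $\dbound$ rows and your pigeonhole count fails. The fix, which the paper states explicitly, is to restrict attention to the \emph{prefix} $P_{[1,k]}$, where $k$ is the first index with $P_k\notin\rdanger$: the movies of $P_{[1,k]}$ along all vertical windows are confined to $\dbound$ rows, so the counting goes through, and Lemma~\ref{lem:adaptwml} applied to $P_{[1,k]}$ shows that $P_{[1,k]}$ is pumpable or fragile. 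You then need the (short but non-vacuous) step that a path with a pumpable or fragile prefix is itself pumpable or fragile, which the paper has set up earlier. Without the restriction to the prefix, the pigeonhole step as you wrote it does not hold.
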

\begin{proof}
  Let $k$ be the smallest integer such that $P_k$ is outside of $\rdanger$.
  We consider the movies recorded along all possible vertical windows, during the growth of $P_{[1,k]}$.
  Since $\sigma\cup P_{[1,k]}$ stays inside $\rdanger$, the glues on these windows are only on the first $\dbound$ rows around the central row at $y=0$.
  Of course, $P_{[k+1,|P|]}$ may have other glues on these windows, including inside $\rdanger$, but these tiles are not counted in these movies, since we are only trying to pump a \emph{prefix} $P_{[1,k]}$ of $P$.

  Now, assume, without loss of generality, that $P_k$ is to the east of $\rdanger$. Since $P_{[1,k]}$ crosses all the windows from the first one to the east of $\sigma$, up to the easternmost column of $\rdanger$, $P_{[1,k]}$ must have crossed at least $\fband(\dbound)$ different windows, producing at least $\fband(\dbound)$ non-empty movies.
  Hence, at least two of these movies are identical, since there are at most $((|T|+1)^{\dbound})!$ possible movies on a window of height $\dbound$. We can then apply Lemma~\ref{lem:adaptwml}, because the two windows do not intersect.

\end{proof}

\subsection{The Reset Lemma}

We can finally conclude this section by showing that non-diet paths cannot ``remember'' or ``read'' the contents of the danger zone (i.e. initial parts of the assembly, including the seed and early parts of the path) arbitrarily many times, without using a nice U-turn to do so, which will in turn allow us to conclude using Lemma~\ref{lem:uturn}.

This lemma will be useful when we try to pump parts of $P$: indeed, if the pumping of a segment cannot conflict with early parts of the assembly, then the parts with which that pumping conflicts are easy to work with. More precisely, we will build tools (stake paths) to translate the position of conflicts, in order to block $P$.

\begin{lemma}
  \label{lem:banana}
  Let $P$ be a path assembly producible by some tile assembly system $\mathcal T=(T,\sigma,1)$, and let $\rdanger$ be the associated ``danger zone'', of width $w$ and height $h$ (where the exact values of $w$ and $h$ are defined in Lemma~\ref{lem:band}).

  If $P$ grows at least to a height $\sbound=(2|T|)^{w+h}$ rows to the north above $\rdanger$, then either $P$ is pumpable or fragile, or else $P$ cannot have a tile in $\rdanger$ after a tile at height $\sbound$.
\end{lemma}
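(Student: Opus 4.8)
The goal is to show that if $P$ climbs $\sbound=(2|T|)^{w+h}$ rows above the danger zone $\rdanger$, then (absent pumpability or fragility) $P$ can never put a tile back inside $\rdanger$. Suppose for contradiction that $P$ is neither pumpable nor fragile, yet $P$ reaches height $\sbound$ above $\rdanger$ and then comes back to a tile inside $\rdanger$. Then $P$ contains a ``deep U-turn'': there is a prefix of $P$ that rises to height $\sbound$ above $\rdanger$ and later descends all the way back into $\rdanger$. The plan is to extract from this deep U-turn a \emph{nice} U-turn in the sense of Lemma~\ref{lem:uturn}, and then invoke that lemma to contradict the assumption.

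The first step is to produce the ingredients of a U-turn (condition (1)): by Lemma~\ref{lem:init:cond} (or directly by Lemma~\ref{lem:watershed0} together with the pigeonhole principle applied to the $\sbound$-many rows the path crosses on its way up) we find two tiles $P_i,P_j$ with $i<j$, of the same type, whose output side is north and whose north glue is visible from the west, sitting low (within $O(|T|)$ rows of $\rdanger$). The subtle part is conditions (2) and (3) of niceness: we need an index $k>j$ whose south glue is visible \emph{from the east} on $\sigma\cup P_{[1,k]}$, and we need $P_k+\vect{P_jP_i}$ to lie below the lowest tile of $P_{[i,k]}$. Since the path comes back down into $\rdanger$ after having been at height $\sbound$, a natural candidate for $P_k$ is the first tile after the descent that lies inside (or just below) $\rdanger$; because $\vect{P_jP_i}$ points north and the descent is deep, condition (3) will hold provided the descent is deep enough relative to the vertical spread accumulated so far, which is why $\sbound$ must dominate the relevant quantities.

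The main obstacle is condition (2): the candidate tile $P_k$ may fail to be east-visible because later portions of $P$ (between the descent and the true end of $P$) ``hide'' it from the east --- the path might wrap around and block the eastward ray. This is exactly the kind of difficulty handled in the proof of Lemma~\ref{lem:init:cond}: if the natural candidate is hidden, then the hiding itself forces the path to place \emph{another} east-visible glue on that row, and one can instead use the first such hiding tile as $P_k$. But every time the path hides such a glue it must itself come back toward $\rdanger$ (to do the hiding on a row inside or near $\rdanger$), and by Lemma~\ref{lem:band} the path can only escape $\rdanger$ on its east or west side a bounded number of times before being pumpable or fragile --- more precisely, each ``hiding excursion'' either re-enters $\rdanger$ through its east/west boundary (triggering Lemma~\ref{lem:band}) or stays within the $\dbound$ rows, in which case a counting/pigeonhole argument over the at most $2|T|$ choices per row and the $w+h$ ``budget'' exhausts the possibilities after at most $(2|T|)^{w+h}=\sbound$ such events. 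Hence after height $\sbound$ no further hiding is possible, so the natural candidate $P_k$ really is east-visible, all three niceness conditions hold, and Lemma~\ref{lem:uturn} yields that $P$ is pumpable or fragile, contradicting our assumption. Therefore $P$ has no tile in $\rdanger$ after reaching height $\sbound$.
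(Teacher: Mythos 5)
Your overall strategy is the same as the paper's: assume $P$ is neither pumpable nor fragile, suppose it reaches height $\sbound$ and then returns to $\rdanger$, extract a nice U-turn, and conclude by Lemma~\ref{lem:uturn}. The paper, however, executes this as an \emph{induction on the number of times a prefix of $P$ escapes $\rdanger$}. The inductive claim is sharp: a prefix that escapes $\rdanger$ exactly $n$ times either stays below height $(2|T|)^n$, or else $P$ has a nice U-turn. The inductive step is where the real work happens: if the segment $P_{[k,l]}$ between the $n$-th re-entry and the $(n+1)$-th return climbs above row $2|T|h_n$ (with $h_n=(2|T|)^n$), then (1) it crosses at least $(2|T|-1)h_n$ rows above $h_n$, so pigeonhole gives $P_i,P_j$ with same-type visible north glues, (2) the rightmost (or leftmost) glue of $P_{[k,l]}$ on row $h_n$ is automatically east- (or west-) visible \emph{because the inductive hypothesis guarantees that $P_{[1,k-1]}$ never reaches row $h_n$}, and (3) since $P_i,P_j$ are far enough apart vertically, the translation $\vect{P_jP_i}$ pushes below everything. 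That second point is precisely what makes the ``hiding'' problem disappear: visibility on row $h_n$ is a consequence of the height bound on earlier prefixes, not of any separate counting of hiding events. The bound $\sbound=(2|T|)^{w+h}$ then comes from the fact that each escape permanently occupies border positions of $\rdanger$, so the induction terminates after at most $w+h$ (roughly) escapes.

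Your treatment of condition (2) has a concrete gap. First, you invoke Lemma~\ref{lem:band} for ``hiding excursions'' that ``re-enter $\rdanger$ through its east/west boundary'', but Lemma~\ref{lem:band} applies only to the \emph{first} exit of $P$ from $\rdanger$; it says nothing about re-entries, nor about the second or later exits, so it does not give you a bound on the number of excursions. Second, your remaining ``counting/pigeonhole argument over the at most $2|T|$ choices per row and the $w+h$ budget'' is not articulated: $\sbound=(2|T|)^{w+h}$ is a \emph{height} in the statement, not a count of hiding events, and you give no mechanism that converts an escape count (bounded by $w+h$) into a height bound. The paper's inductive structure supplies exactly that mechanism (each escape buys a factor $2|T|$ in height before a nice U-turn is forced). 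You would need to recover that geometric-growth argument, or an equivalent, to close the gap. As a minor point, you also write that ``$\vect{P_jP_i}$ points north''; by Lemma~\ref{lem:order}, with $i<j$ and both north glues visible, $y_{P_i}<y_{P_j}$, so $\vect{P_jP_i}$ points \emph{south} --- which is in fact what is needed for condition (3), since the translated endpoint must land below everything.
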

\begin{proof}
  We prove this by induction on the number of ``free'' positions, i.e. positions not occupied by successive prefixes of $P$, on the border of $\rdanger$. At each step, we look for a nice U-turn in order to conclude that $P$ is pumpable or fragile, and ``consume'' at least one possibility to enter $\rdanger$.

  More precisely, we show, by induction on $n$, that any prefix of $P$ escaping $\rdanger$ $n$ times either grows to at most $(2|T|)^n$ rows to the north of $\rdanger$, or else $P$ has a nice U-turn.
  This clearly holds for $n=0$, since $P$ starts inside $\rdanger$: since $\rdanger$ is a connected component of the grid graph of $\mathbb{Z}^2$, any prefix of $P$ not escaping $\rdanger$ must remain inside $\rdanger$.

  Then, let us assume that for any prefix $P_{[1,k]}$ of $P$ escaping $\rdanger$ exactly $n\geq 0$ times, $P_{[1,k]}$ does not reach height $h_n=(2|T|)^n$. Let then $P_{[1,l]}$ be a prefix of $P$ escaping $\rdanger$ $n+1$ times, i.e. $l>k$, and $P$ enters and escapes $\rdanger$ once after $P_k$.

  If $P_{[k,l]}$ stays below row $2|T|h_n$ before entering $\rdanger$ again, we can conclude that our induction hypothesis still holds for $n+1$.
  Else, $P_{[k,l]}$ grows to row $2|T|h_n$ before entering $\rdanger$ again. We claim that $P_{[1,l]}$ must then have a nice U-turn, between the first and last time $P_{[k,l]}$ grows above row $h_n$: indeed, consider the leftmost and rightmost glues that $P_{[k,l]}$ places on row $h_n$. Since we have proven in Lemma~\ref{lem:init:cond} that $P$ cannot grow below $\rdanger$, we claim that $P_{[k,l]}$ has a nice U-turn: indeed, we check all three conditions of Lemma~\ref{lem:uturn}:
  \begin{enumerate}
  \item First, we can clearly pick two integers $i<j$, by the pigeonhole principle, such that $P_i$ and $P_j$ have their north glue visible from the west (\resp the east, depending on the orientation of $P_{[k,l]}$ on $h_n$). Moreover, we can choose $P_i$ and $P_j$ to be at most $h_n$ rows apart, by the induction hypothesis.
  \item Then, the rightmost (\resp leftmost) glue of $P_{[k,l]}$ on $h_n$ is visible from the east (\resp the west), by definition of visibility, and because $P_{[1,k-1]}$ has no tile on $h_n$.
  \item And finally, since $P_i$ and $P_j$ are so many rows apart, $P_l+\vect{P_jP_i}$ is lower than all tiles of $\sigma\cup P$.
  \end{enumerate}
  Therefore, by Lemma~\ref{lem:uturn}, $P$ is pumpable or fragile.
\end{proof}

\section{Finding a stake}
\label{sec:cage}
Up to here, we know (1) how to break or pump paths with nice U-turns and (2) that any path reaching a certain height, and coming back to a region of constant height and width around the seed has a nice U-turn, and is therefore also pumpable or fragile.

The last case that we need to handle is the case of paths that do not go back to that ``dangerous region'' around the seed, after reaching a certain height. Our general strategy is to let the algorithm go until it finds either a pumpable segment, or else a ``stake path''.

The main problem in this plan is that the algorithm may also halt because it does not find intersections between a branch and $P$. We solve this problem in this section, by bounding for all index $u$, the maximal height that $P_{[u,|P|]}+\vec v$ (where $\vec v=\vect{P_jP_i}$ or $\vec v=\vect{P_iP_j}$ depending on the mode of the algorithm) can reach without intersecting $P_{[u,|P|]}$.

More precisely, we show that if such an intersection happens does not happen early enough, then $P$ is pumpable or fragile for other reasons.

\subsection{Cage free path}
\label{sec:cagefree}

\begin{lemma}
\label{lem:cagefree}
  Let $\vec v$ be a vector of $\mathbb{Z}^2$, such that $y_{\vec v}>0$, and let $P$ be a path assembly producible by some tile assembly system $\mathcal T=(T,\sigma,1)$, whose last point is its highest point.

  If there is an integer $n$ such that $P_{[n,|P|-1]}+\vec v$ does not intersect $P$, and moreover $P$ grows at least $N=n+\|\vec v\|_1(|T|^{\|\vec v\|_1^2}!+1)$ rows north of all the tiles of $\sigma$, then $P$ is pumpable.

\end{lemma}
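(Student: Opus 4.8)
The plan is to prove directly that $P$ is pumpable, by extracting a periodic structure from the cage-free hypothesis and feeding it to the adapted Window Movie Lemma (Lemma~\ref{lem:adaptwml}). Write $P':=P_{[n,|P|]}$ and $S:=P_{[n,|P|-1]}$, so that by hypothesis $S+\vec v$ is disjoint from $P$. A preliminary count: since $P_1\in\dom{\sigma}$ and a path assembly advances by one orthogonal step at a time, $P_n$ lies at most $n-1$ rows above the topmost row of $\sigma$, while $P_{|P|}$, being the highest tile of $P$ and $P$ reaching $N$ rows north of $\sigma$, lies at least $N=n+\|\vec v\|_1(|T|^{\|\vec v\|_1^2}!+1)$ rows above the top of $\sigma$. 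Hence $P'$ (and $S$) spans at least $\|\vec v\|_1(|T|^{\|\vec v\|_1^2}!+1)$ distinct rows.

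The crux is to turn ``$S+\vec v$ misses $P$'' into near-monotonicity of $P'$ in the $\vec v$-direction. Using this disjointness together with $y_{\vec v}>0$ and the fact that $P_{|P|}$ is a highest tile, I would argue --- by an enclosure argument in the spirit of the visible-glue and dominating-glue lemmas (Lemmas~\ref{lem:watershed0} and~\ref{lem:order}) --- that $P'$ never retreats against $\vec v$ by more than $\|\vec v\|_1$: a deeper retreat, or a wide detour, of $P'$ would place a point of $P'$ so that it encloses a later point of $P'$ in the region ``between $P'$ and $P'+\vec v$'', and $P'$ would then have to cross $S+\vec v$ (or $S$) on its way out, a contradiction. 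A consequence is that the translates $S+k\vec v$, $k\ge 0$, are pairwise disjoint and stack monotonically in the $\vec v$-direction; in particular $P$ (which is $P_{[1,n-1]}$ followed by $P'$) crosses any horizontal cut-set of $\mathbb{Z}^2$ only a number of times bounded in terms of $\|\vec v\|_1$, and each such crossing uses only boundedly many consecutive edges of the cut-set.

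Then comes a window-movie pigeonhole. Put $h:=y_{\vec v}\le\|\vec v\|_1$ and choose $|T|^{\|\vec v\|_1^2}!+1$ pairwise-disjoint horizontal cut-sets $w_0,w_1,\dots$, where $w_t$ separates $\{y\le r_0+th\}$ from $\{y\ge r_0+th+1\}$, with $r_0$ just above $\sigma\cup P_{[1,n-1]}$; there is room for all of them inside the vertical span of $P'$ because that span is at least $\|\vec v\|_1$ times the number of cut-sets and consecutive cut-sets are $h\le\|\vec v\|_1$ rows apart. A horizontal cut-set is unchanged by horizontal shifts, so $w_{t+1}=w_t+\vec v$. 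By the previous paragraph, the movie of $P$ along $w_t$, recorded relative to the place where $P'$ first crosses $w_t$, is one of at most $|T|^{\|\vec v\|_1^2}!$ possibilities (this is where the bound in $N$ originates), so the relative movies along two of them, say $w_s$ and $w_{s'}$ with $s<s'$, coincide. Letting $\vec v'$ be the integer translation carrying the crossing point of $w_s$ to that of $w_{s'}$ (necessarily $y_{\vec v'}=(s'-s)h>0$), the movies of $P$ along $w_s$ and along $w_s+\vec v'=w_{s'}$ are equal up to translation by $\vec v'$, the two cut-sets are disjoint, and $\sigma$ lies on the south side of both and of their $\vec v'$-translates; so Lemma~\ref{lem:adaptwml} applies and gives that $P$ is pumpable or fragile. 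In the cage-free situation the fragile alternative does not occur: the pumpable segment $P_{[u,v]}$ produced inside the proof of Lemma~\ref{lem:adaptwml} is iterated along a vector with positive $y$-component, and by the stacking structure of the middle step its iterates climb into a region disjoint from $\sigma\cup P_{[1,u]}$ and from one another, so the pumping grows forever and $P$ is pumpable.

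The hard part is the middle step. Extracting the quantitative near-monotonicity of $P'$ from the bare hypothesis that a single translate of $S$ misses $P$ is exactly the kind of low-level planar reasoning that the visible-glue and domination machinery of Sections~\ref{sec:uturns-lemmas}--\ref{sec:reset} was built to make rigorous; the bound it yields on how $P$ can cross a horizontal cut-set is what pins the window-movie alphabet to size $|T|^{\|\vec v\|_1^2}!$ and hence fixes the value of $N$ in the statement.
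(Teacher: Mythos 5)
Your approach is genuinely different from the paper's, and the ``middle step'' you flag as the hard part is exactly where the paper's construction does essential work that yours does not replicate. The paper does not slice the plane with horizontal cut-sets; it builds, for each index $i$, a bi-infinite separator $C^i$ obtained by concatenating all translates by $k\vec v$ ($k\in\mathbb Z$) of one shortest path $B$ from $P_n$ to $P_n+\vec v$, shifted by $\vect{P_nP_i}$. The decisive property is that each $C^i$ is \emph{invariant} under translation by $\vec v$. That invariance is what powers the enclosure argument bounding intersections: if $P_{[n,|P|-1]}$ met $C^i$ at two consecutive (along $P$) crossing points more than one period of $C^i$ apart, the translate by $\vec v$ of the subpath between them would start inside the region enclosed by that subpath and the piece of $C^i$ between the crossings, could not cross $C^i$ (because $C^i+\vec v=C^i$ and $P$ doesn't cross $C^i$ between those crossings), and so would have to cross the subpath itself, contradicting the disjointness of $P_{[n,|P|-1]}$ and $P_{[n,|P|-1]}+\vec v$. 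This yields the bound of $\|\vec v\|_1^2$ positions on each $C^i$, which is where the exponent in $N$ comes from. Your horizontal cut-sets $w_t$ satisfy $w_t+\vec v=w_{t+1}\neq w_t$ (since $y_{\vec v}>0$), so they are not $\vec v$-invariant, and this enclosure argument has no analogue for them. The claim that $P$ crosses a fixed row only $O(\|\vec v\|_1)$ times, each crossing confined to a bounded window, is not established by anything in Lemmas~\ref{lem:watershed0}--\ref{lem:order} (those control visible glues, not the number of re-crossings of a row), and you give no argument for it. Without it, the pigeonhole bound $|T|^{\|\vec v\|_1^2}!$ is unjustified and Lemma~\ref{lem:adaptwml} cannot be invoked. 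The paper's choice of $\vec v$-periodic separators is precisely what sidesteps this difficulty.

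A secondary issue: your final paragraph attempts to strengthen ``pumpable or fragile'' (the conclusion of Lemma~\ref{lem:adaptwml}) to ``pumpable'' by asserting that the pumped iterates climb into a region disjoint from everything else. The paper makes no such claim in its proof of this lemma, and Corollary~\ref{cor:cagefree} states the conclusion as ``pumpable or fragile.'' Your argument does not address the one obstruction that actually matters in Lemma~\ref{lem:adaptwml}: a conflict between the pumping and the later tail $P_{[v,|P|]}$, which lies inside the half-plane into which the pumping grows. So this upgrade is unsupported as written.
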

\begin{proof}
  The goal is to use the lack of intersection to show the existence of a cut $\mathcal C$ of $\mathbb{Z}^2$, such that for all but a ``small'' number of translation vectors $\vec u$, $P$ stays within a bounded central region of $\mathcal C+\vec u$.
  Then, we will conclude using the window movie lemma. Of course, in the bigger picture of the whole paper, $\vec v=\vect{P_iP_j}$, with $P_i$ and $P_j$ the two tiles with visible north glues highlighted previously.

  \paragraph{We first define a family of cuts of $\mathbb{Z}^2$, and show that $P_{[n,|P|-1]}$ intersects them only in a small bounded region.}

  Let first $B$ be a shortest path between $P_n$ and $P_n+\vec v$, and for all $i\geq n$, let $C^i$ be the bi-infinite periodic path made of all translations of $B$ by integer multiples of $\vec v$, i.e.
  for all $k\in\mathbb Z$, let $$C_k^i=B_{k \mod |B|}+\lfloor \frac k {|B|}\rfloor \vec v + \vect{P_nP_i}$$

  For all $i\geq n$, $C^i$ is a vertex separator of the grid graph of $\mathbb{Z}^2$, into two connected components, one to the left and one to the right (indeed, remember that $y_{\vec v}>0$). Let $L^i$ and $R^i$ be the connected components on the left-hand side and right-hand side respectively
  (indeed, since we chose $B$ to be a shortest path between $P_n$ and $P_n+\vec v$, $C^i$ is simple, and therefore ${\mathcal C}^i$ is well-defined and cuts the plane into exactly two connected components).
  Then, let ${\mathcal C}^i$ be the cut of $\mathbb{Z}^2$ defined by ${\mathcal C}^i=(L^i,R^i\cup C^i)$. See Figure~\ref{fig:cagecut} for an example.

  \begin{figure}[ht]
    \begin{center}
      \includegraphics{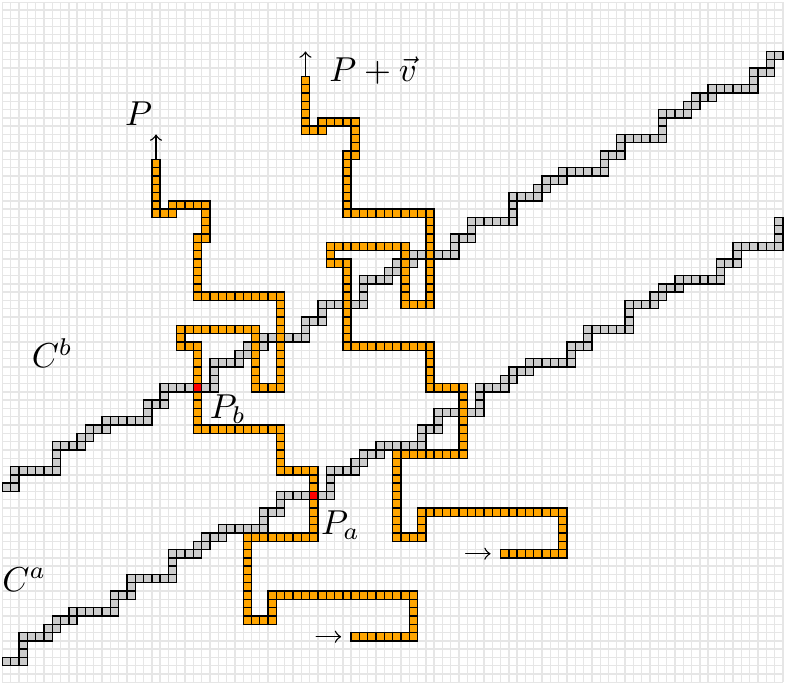}
    \end{center}
    \caption{Two examples of the construction of the vertex separator we use in this proof, for an example $P$ and its translation $P+\vec v$. $P_a$ and $P_b$ are marked in red, and the corresponding vertex separators $C^a$ and $C^b$ are in gray.}
    \label{fig:cagecut}
  \end{figure}

  Until the second step of our proof below (applying the window movie lemma), we consider only intersections between $P_{[n,|P|-1]}$ and $C^i$, and not between the whole path $P$.

  First, let $i_0<i_1<\ldots<i_{m-1}$ be the integers such that $P_{[n,|P|-1]}$ intersects $C^i$ exactly at $P_{i_0}$, $P_{i_1}$, \ldots, $P_{i_{m-1}}$, and $j_0,j_1,\ldots,j_{m-1}$ be the corresponding integers for $C^i$, i.e. for all $k$, $P_{i_k}$ is at the position of $C^i_{j_k}$. Note that $j_k$ might be smaller or larger than $j_{k+1}$ (i.e. the order between $P$ and $C^i$ is not necessarily the same).

  Now, remark that for all $k\geq 0$, if $j_k$ and $j_{k+1}$ are more than one period of $S$ away from each other, then $P_{[n,|P|-1]}$ and $P_{[n,|P|-1]}+\vec{v}$ must intersect: indeed, $P_{[i_k,i_{k+1}]}+\vec v$ starts inside the connected component enclosed by $P_{[i_k,i_{k+1}]}$ and $C^i_{[j_k,j_{k+1}]}$, but its last point $P_{i_{k+1}}$ is outside this connected component. Moreover, by definition of $i_k$ and $i_{k+1}$ as consecutive intersections on $P$, $P_{[i_k,i_{k+1}]}+\vec v$ does not cross $C^i$: therefore, $P_{[i_k,i_{k+1}]}+\vec v$ crosses $P_{[i_k,i_{k+1}]}$. See Figure~\ref{fig:basiccage}.

  \begin{figure}[ht]
    \begin{center}
      \includegraphics{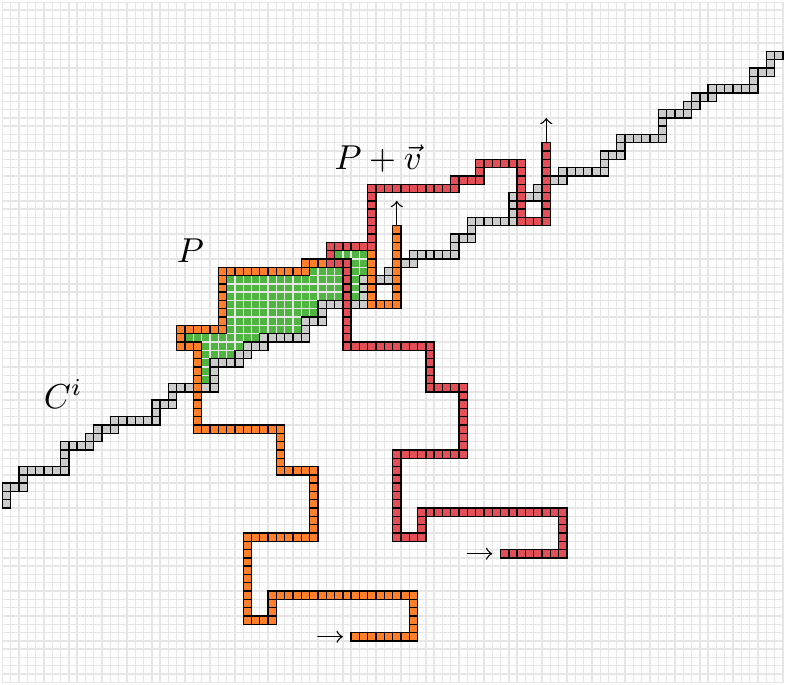}
    \end{center}
    \caption{Whenever two intersections consecutive in $P$, are spaced by more than one period of $S$, they define a connected component (in green) of $\mathbb{Z}^2$, in which $P+\vec v$ starts, but ends outside.}
    \label{fig:basiccage}
  \end{figure}

  To conclude our proof, we only need to apply this observation to two parts of $P_{[n,|P|-1]}$: let $i_l$ be the index of the first intersection between $P_{[n,|P|-1]}$ and $C^i$, in the order of $C^i$ (i.e. $j_l=\min\{j_k | 0\leq k<m\}$).
  \begin{itemize}

  \item First, the intersections between $P_{[i_0,i_l]}+\vec v$ and $C^i$ are either within the first period of $C^i$ from $C^i_{j_0}+\vec v$ and $C^i_{j_l}+\vec v$, or else they are in other periods, which $P_{[i_0,i_l]}$ also intersects (by translation).

    Now, by intersecting $C^i$, $P_{[i_0,i_l]}$ creates ``bumps'' on both sides of $C^i$. One difficulty is, $P$ could alternate between the left and right sides.

    On period $p$ (starting with period 0 between $P_{i_0}+\vec v$ and $P_{i_0}$), let $l_p$ be the minimum between the width of the widest bump on the left hand side, and the width of the widest bump on the right-hand side, where ``width'' of bump $P_{[i_k,i_{k+1}]}$ is $|j_{k+1}-j_k|$.

    If $P_{[i_0,i_l]}+\vec v$ intersects $C^i$ below such a bump $P_{[i_k,i_{k+1}]}$, $P_{[i_0,i_l]}+\vec v$ needs to ``jump over'' that bump, therefore creating another bump, one period below, wider by at least two units.

    Let $P_{[i_a,i_{a+1}]}$ be a segment such that at least $j_a$ or $j_{a+1}$ (or both) is not in the first period of $C^i$ from $C^i_{j_0}+\vec v$. As remarked above $j_a$ and $j_{a+1}$ cannot be more than one period apart from each other (i.e. $|j_{a+1}-j_a|<\|\vec v\|_1$).

    Therefore, the widest bump on each side of $C^i$ keeps increasing by at least two units, on at least one side by period. Therefore, since $C^i$ has two sides, and by our remark above that bumps cannot be wider than $\|v\|_1$, $P_{[i_0,i_l]}$ can only intersect $\|v\|_1$ consecutive periods of $C^i$, i.e. at most $\|\vec v\|_1^2$ different positions of $C^i$.

  \item Then, we can apply the exact same argument to $P_{[i_l,i_{m-1}]}$, with the roles of $P$ and $P+\vec v$ switched.
    Since $P_{i_l}$ is the first intersection (in the order of $C^i$) between $P$ and $C^i$, the maximal span on $C^i$ of the intersections with $P_{[n,|P|-1]}$ is at most $\|\vec v\|_1^2$.
  \end{itemize}

  \paragraph{We then apply the window movie lemma to our family of cuts.}

  This part of the proof is relatively straightforward, although there are two major problems that need to be solved before using the window movie lemma directly (or rather, its adaptation to paths proven in Lemma~\ref{lem:adaptwml}): first, we have too weak hypotheses, since we have only considered intersections of $P_{[n,|P|-1]}$ and $C^i$ (for all $i$). Other parts of the assembly, i.e. $\sigma\cup P_{[0,n-1]}$, could also intersect $C^i$ and change the movies.
  Then, the other problem is, even if we find two different cuts ${\mathcal C}^a$, ${\mathcal C}^b$ with the same movie, these cuts might be close enough that they intersect (i.e. the cuts themselves share common edges), and even though the window movie lemma can be applied to interchange the contents of connected components, this does not necessarily allow us to iterate this operation and pump $P$.

  We first solve the second problem by considering cuts that are far enough from each other, by dividing the plane into ``stripes'' wide enough to contain a cut completely. This means that we consider a new cut every $y_{\vec v}$ rows.

  Then, the first problem can be solved with these cuts: indeed, since $\sigma\cup P_{\sigma\cup [0,n-1]}$ has $|\dom\sigma|+n$ tiles, at most $|\dom\sigma|+n$ cuts are concerned by this problem\footnote{A more subtle argument would take the spacing between stripes into account and get a smaller upper bound.}.

  Then finally, if we see $(|T|^{\|\vec v\|_1^2})! +1$ different windows, we can apply the window movie lemma. In order to see this many windows that $\sigma\cup P_{[0,n-1]}$ does not intersect, we need $P$ to grow to a height at least
  $|\dom\sigma+n|+y_{\vec v}((|T|^{\|\vec v\|_1^2})!+1)$, hence our claim.

\end{proof}

The formulation of Lemma~\ref{lem:cagefree} is very general (the proof technique works for any path in the plane not intersected by its translation), and might be a little obscure for the paths in this paper. Here is a corollary that puts it back to the context:

\begin{corollary}
\label{cor:cagefree}
Let $P$ be a path assembly whose last tile is a highest tile, and is at least $\sbound+\|\vect{P_iP_j}\|_1(|T|^{\|\vect{P_iP_j}\|_1^2}!+1)$ rows above all the tiles of $\sigma$, and such that $P_{[\sbound,|P|]}$ does not intersect $P_{[\sbound,|P|]}+\vect{P_jP_i}$ or $P_{[\sbound,|P|]}+\vect{P_iP_j}$, then $P$ is pumpable or fragile.

Let $\sbound+\|\vect{P_iP_j}\|_1(|T|^{\|\vect{P_iP_j}\|_1^2}!+1)$, which is also a constant in $|T|$ and $|\dom\sigma|$ by Lemma~\ref{lem:band}: indeed, $\|\vect{P_iP_j}\|_1$ is upper-bounded (using the window movie lemma, so that bound, although constant, is quite large in $|T|$ and $|\dom\sigma|$).
\end{corollary}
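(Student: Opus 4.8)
The plan is to obtain this corollary as a single instance of Lemma~\ref{lem:cagefree}, with the substitution $\vec v:=\vect{P_iP_j}$ and $n:=\sbound$ (identifying the suffix $P_{[\sbound,|P|]}$ of the statement with $P_{[n,|P|]}$). Two hypotheses of Lemma~\ref{lem:cagefree} have to be checked. First, $y_{\vec v}>0$: the tiles $P_i$ and $P_j$ were chosen with $i<j$ and $y_{P_i}<y_{P_j}$ (Lemma~\ref{lem:init:cond}), so $\vect{P_iP_j}$ points strictly north. Second, the non-intersection hypothesis: since intersection of two sets of positions is a symmetric relation and $\vect{P_jP_i}=-\vec v$, the two clauses ``$P_{[\sbound,|P|]}$ does not intersect $P_{[\sbound,|P|]}+\vect{P_jP_i}$'' and ``$P_{[\sbound,|P|]}$ does not intersect $P_{[\sbound,|P|]}+\vect{P_iP_j}$'' express the very same condition --- namely that $P_{[n,|P|-1]}$ and $P_{[n,|P|-1]}+\vec v$ are disjoint, which is exactly the hypothesis consumed in the proof of Lemma~\ref{lem:cagefree}.

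It then remains to match the quantitative requirement: Lemma~\ref{lem:cagefree} asks that $P$ grow at least $N=n+\|\vec v\|_1(|T|^{\|\vec v\|_1^2}!+1)$ rows above every tile of $\sigma$, and with $n=\sbound$ and $\vec v=\vect{P_iP_j}$ this is literally the bound $\sbound+\|\vect{P_iP_j}\|_1(|T|^{\|\vect{P_iP_j}\|_1^2}!+1)$ written in the corollary. Lemma~\ref{lem:cagefree} then concludes that $P$ is pumpable, hence in particular pumpable or fragile. To see that this bound is a constant in $|T|$ and $|\dom\sigma|$, I would bound $\|\vect{P_iP_j}\|_1$: vertically, $y_{P_j}-y_{P_i}\le 3|T|+2$ by Lemma~\ref{lem:init:cond}; horizontally, $P_i$ and $P_j$ both lie in the danger zone $\rdanger$ of Lemma~\ref{lem:band} --- vertically by Lemma~\ref{lem:init:cond}, and horizontally because a path that leaves $\rdanger$ through its east or west side before climbing $\dbound$ rows is already pumpable or fragile by Lemma~\ref{lem:band} --- and $\rdanger$ has width the Window Movie Lemma bound $2\fband(\dbound)+|\dom\sigma|$. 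Hence $\|\vect{P_iP_j}\|_1$, and therefore the whole expression, is a constant --- a very large one, resting on the Window Movie Lemma bound.

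Since the corollary is really just Lemma~\ref{lem:cagefree} reread in context, there is no genuine obstacle here; the only point requiring a little care is the bookkeeping behind the notation $P_{[\sbound,|P|]}$. One should fix $n$ to be the \emph{first} index at which $P$ reaches row $\sbound$, and observe, via the Reset Lemma (Lemma~\ref{lem:banana}), that $P$ never re-enters $\rdanger$ afterwards, so this suffix is cleanly separated from the early part of the assembly and from $\sigma$; with that pinned down, everything reduces to the substitution above.
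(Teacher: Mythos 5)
Your proof is correct and matches the paper's intent: the corollary is presented there, without a separate argument, as a direct re-reading of Lemma~\ref{lem:cagefree}, and you perform exactly that substitution ($\vec v = \vect{P_iP_j}$, $n$ the first index at which $P$ reaches height $\sbound$), including the sharp observations that the two non-intersection clauses collapse to one condition by translation-invariance, and that the lemma's proof actually consumes only the suffix-self-disjointness hypothesis the corollary supplies (since the contradiction derived there is that $P_{[n,|P|-1]}$ and $P_{[n,|P|-1]}+\vec v$ would intersect, and $\sigma\cup P_{[1,n-1]}$ is handled by a separate counting step). The quantitative bookkeeping on $\|\vect{P_iP_j}\|_1$ via $\rdanger$ and Lemmas~\ref{lem:init:cond} and~\ref{lem:band} likewise tracks the paper's brief remark.
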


\subsection{Conclusion of this section}

We can finally conclude that the algorithm finds a suitable stake path, or $P$ is pumpable or fragile (or both).

\begin{lemma}
\label{lem:stake:exist}
Let $P$ be a path assembly producible by some tile assembly system $\mathcal T=(T,\sigma,1)$, and let $i<j$ be the indices of two tiles with visible north glues, as constructed in Section~\ref{sec:initial}.

If the last tile of $P$ is a highest tile of $P$, and is at least $\sbound$ rows above all the tiles of $\sigma$, then at least one of the following is the case:

\begin{enumerate}
\item The algorithm reaches a step with parameters $(u,v,S)$ such that $P_u$ is at least $\sbound$ rows above all the tiles of $\sigma$. Moreover, all the following claims hold:\label{reaches}
\begin{enumerate}
\item $S$ is entirely within the first $\cbound$ rows from the seed.\label{enti}
\item If the algorithm is in forward mode at that step, $S\cup (P_{[u,|P|]}+\vect{P_iP_j})$ is producible from $\sigma\cup P_{[1,j]}$.\label{forw}
\item If the algorithm is in backwards mode at that step, $(S\cup P_{[u,|P|]})+\vect{P_jP_i}$ is producible from $\sigma\cup P_{[1,i]}$.\label{back}
\end{enumerate}
\item $P$ is pumpable\label{pump}
\item $P$ is fragile\label{fra}
\end{enumerate}

Remark that the statements of cases~\ref{forw} and~\ref{back} do not imply that the branches $P_{[u,|P|]}+\vec v$ (where $\vec v=\vect{P_iP_j}$ or $\vec v=\vect{P_jP_i}$ depending on the mode) intersect or don't intersect $P_{[j,|P|]}$.

\end{lemma}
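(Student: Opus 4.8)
The plan is to run the pumping algorithm of Section~\ref{sec:algo} starting from the two visible north glues $P_i$ and $P_j$ produced by Lemma~\ref{lem:init:cond}, and to track the index on $P$ reached by the last point of the current stake path $S$. By the induction hypothesis Lemma~\ref{lem:ih}, at every step $S$ is made of segments of $P$ and translations by $\vect{P_iP_j}$ of segments of $P$, neither $S$ nor $S+\vect{P_jP_i}$ crosses $P$, both start and end on the left-hand side of $P$, and $S$ never crosses the visibility rays of $P_i$ or $P_j$. The first observation to make is that each step of the algorithm that does not halt strictly increases the index $v$ of the last tile of the candidate segment: this is visible in the algorithm, since a non-halting step grows a nonempty prefix $R$ of a translated branch up to its first ``turn-left'' intersection with $P$, producing strictly larger indices $u',v'$. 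Hence, unless the algorithm halts, after finitely many steps the candidate index $v$ climbs past the prefix of $P$ lying within the first $\sbound$ rows of $\sigma$, which is the content of case~\ref{reaches}.

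Next I would handle the halting cases. By Lemma~\ref{lem:uturn} (applied exactly as in its proof: the algorithm cannot stop by failing to find an intersection as long as a dominating tile lies on a translated branch), the only way the algorithm halts \emph{before} $P_u$ reaches height $\sbound$ — other than pumping or breaking $P$ — is that there is no dominating tile forcing the required intersection, i.e. that a suffix $P_{[c,|P|]}+\vect{P_jP_i}$ (or $+\vect{P_iP_j}$) grows without intersecting $P$. But in that situation the last tile of $P$ is a highest tile and $P$ grows at least $\sbound+\|\vect{P_iP_j}\|_1(|T|^{\|\vect{P_iP_j}\|_1^2}!+1)$ rows above $\sigma$, so Corollary~\ref{cor:cagefree} applies and $P$ is pumpable or fragile. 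Thus every halting branch of the algorithm lands in case~\ref{pump} or case~\ref{fra}; and if the algorithm never halts with a cage-free suffix, it must at some step reach parameters $(u,v,S)$ with $P_u$ at least $\sbound$ rows above $\sigma$, giving case~\ref{reaches}.

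It remains to verify the three sub-claims of case~\ref{reaches}. For~\ref{enti}: the stake path $S$ is assembled from pieces of $P$ and translates of pieces of $P$ that were grown only up to their first intersection with $P$; by the Reset Lemma (Lemma~\ref{lem:banana}), once $P$ has passed height $\sbound$ it cannot re-enter $\rdanger$, hence after that point the branches that feed $S$ return to $P$ only above $\rdanger$, and a bound $\cbound$ — obtained by applying Lemma~\ref{lem:band} and Lemma~\ref{lem:banana} once more, now treating $S$ and its $\sbound$-neighbourhood as a new seed — confines all of $S$ to the first $\cbound$ rows of $\sigma$. For~\ref{forw} and~\ref{back}: in forward mode the algorithm, by construction, has just grown $\alpha=\sigma\cup P_{[1,j]}\cup S$ and then a prefix of $P_{[u,|P|-1]}+\vect{P_iP_j}$; since at the recorded step the branch does not turn right from $P$ before reaching the top, the whole branch $P_{[u,|P|]}+\vect{P_iP_j}$ is producible from $\alpha$, which gives~\ref{forw} (and symmetrically~\ref{back} in backwards mode, using $\alpha=\sigma\cup P_{[1,i]}\cup(S+\vect{P_jP_i})$). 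The remark that these statements say nothing about whether the branch intersects $P_{[j,|P|]}$ is automatic, since we only asserted producibility of the branch from $\alpha$, not disjointness from $P$.

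The main obstacle I expect is the bookkeeping behind claim~\ref{enti}: pinning down $\cbound$ requires re-running the machinery of Section~\ref{sec:reset} with $S$ (rather than $\sigma$) playing the role of the object $P$ must not revisit, and checking that the U-turn / window-movie arguments still go through when the ``seed'' is a thin stake path of constant height rather than the original connected seed. The other steps are essentially a careful reading-off of the algorithm's halting structure combined with the already-proven Lemmas~\ref{lem:uturn}, \ref{lem:banana} and Corollary~\ref{cor:cagefree}.
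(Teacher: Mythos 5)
Your overall decomposition—run the algorithm, dispatch the halting cases via Lemma~\ref{lem:uturn} and Corollary~\ref{cor:cagefree}, then verify the three sub-claims—matches the paper's structure, and your handling of the halting cases is essentially correct. However, there are two places where your argument for case~\ref{reaches} diverges from or falls short of what the paper does.

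For claims~\ref{forw} and~\ref{back}, your justification ``by construction, \ldots the whole branch $P_{[u,|P|]}+\vect{P_iP_j}$ is producible from $\alpha$'' is a genuine gap. The algorithm only grows the longest \emph{prefix} of the translated branch that can grow from $\alpha$; a priori the branch could be stopped by a conflict with $\sigma$, with $P_{[1,j]}$, or with $S$, long before reaching the top. The paper closes this gap by invoking Lemma~\ref{lem:ih} to establish that $S$ (resp.\ $S+\vect{P_jP_i}$) is producible from $\sigma\cup P_{[1,j]}$ (resp.\ $\sigma\cup P_{[1,i]}$), and then the Reset Lemma (Lemma~\ref{lem:banana}) to show that neither $P_{[u,|P|]}$ nor $P_{[u,|P|]}+\vect{P_jP_i}$ has a tile inside $\rdanger$, where the early part of the assembly lives; this is what rules out the conflicts you did not address. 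Your phrase ``does not turn right from $P$ before reaching the top'' also appears to assume the branch reaches the top without intersecting $P$, which is precisely what the lemma's concluding remark says we must not assume.

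For claim~\ref{enti}, your plan to re-run the Section~\ref{sec:reset} machinery with $S$ in the role of a new seed is considerably heavier than what is needed, and you yourself flag it as the main obstacle. The paper's argument is far more direct: if $S$ rises above $\cbound$ before intersecting the first tile of $P$ above $\sbound$, then $S$ already contains a tall segment of $P+\vect{P_iP_j}$ not intersecting $P$ (or a tall segment of $P$ not intersecting $P+\vect{P_jP_i}$), and Corollary~\ref{cor:cagefree} applies immediately—no new ``seed'' or Window Movie re-derivation is needed. You should replace your treatment of~\ref{enti} with this one-line reduction, and add the Reset-Lemma step to your treatment of~\ref{forw} and~\ref{back}.
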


\begin{proof}
  This claim is actually a summary of all previous lemmas:
  First, if the algorithm stops before reaching the claimed step $(u,v,S)$ with $P_u$ at least $\cbound$ rows above all the tiles of $\sigma$, this can only be in two cases (this can be directly read from the algorithm):
  \begin{itemize}
  \item We have successfully blocked or pumped $P$ by first growing the pumping of some segment of $P$.
  \item The algorithm has stopped before that step because it found no intersection, either in forward or backwards mode. We can apply Lemma~\ref{lem:cagefree} and conclude that $P$ is pumpable or fragile.
  \end{itemize}
  This shows that our three cases~\ref{reaches},~\ref{pump} and~\ref{fra} cover all possibilities.
  We now prove our claim for case~\ref{reaches}:
  First, if $S$ grows above $\cbound$ before intersecting the first tile of $P$ above $\sbound$, we can apply Lemma~\ref{lem:cagefree}, since this means that $S$ contains a high segment of $P+\vect{P_iP_j}$ that does not intersect $P$, or a high segment of $P$ that does not intersect $P+\vect{P_jP_i}$.
  This shows if our claim~\ref{enti} is not the case, then $P$ is pumpable or fragile.

  Moreover, the two other claims (claims~\ref{forw} and~\ref{back}) come from Lemmas~\ref{lem:ih} and~\ref{lem:banana}: indeed, we know from Lemma~\ref{lem:ih} that $S$ and $S+\vect{P_jP_i}$ can grow from $\sigma\cup P_{[1,j]}$ and $\sigma\cup P_{[1,i]}$, respectively, and reach $P_v$ and $P_u$, respectively.
  Then, by Lemma~\ref{lem:banana}, we know that neither $P_{[u,|P|]}$ nor $P_{[u,|P|]}+\vect{P_jP_i}$ have tiles in the initial rectangle $\rdanger$, where $\sigma\cup P_{[1,j]}$ is. This concludes our proof, since the claimed translated suffixes of $P$ cannot possibly conflict with any part of the assembly, be it in forward or backwards mode.

\end{proof}

\section{Conclusion of the proof}
\label{sec:conc}

At this stage of the paper, we already have enough weaponry to either block a high enough path assembly $P$, or else pump a segment $P_{[u,v]}$ of $P$ that is such that $\vect{P_uP_v}=\vect{P_iP_j}$, where $i<j$ are the indices of the two tiles with visible north glues given as input to the algorithm. Intuitively, the proof of this claim follows these steps:

\begin{itemize}
\item Using Lemma~\ref{lem:stake:exist}, halt the algorithm at a step where parameters $(u,v,S)$ satisfy the hypotheses of that lemma.
\item Use the algorithm a bit more until finding another ``candidate segment'' $P_{[u',v']}$ for pumping, i.e. another intersection between $P_{[u,|P|]}$ and $P_{[u,|P|]}+\vect{P_jP_i}$.
\item Try to grow the pumping of $P_{[u',v']}$ from $\sigma\cup P_{[1,i]}\cup (S+\vect{P_jP_i})\cup P_{[u,u']}$.
\begin{itemize}
\item If this succeeds, we are done: $P$ is either pumpable if $P$ can regrow after the infinite pumping has grown, or else $P$ is fragile if $P$ cannot regrow.
\item If this fails, there is a conflict between the pumping of $P_{[u',v']}$ and some part of the assembly.
  If moreover $\vect{P_{u'}P_{v'}}=\vect{P_iP_j}$ (this is \emph{not} a general hypothesis, which is why we need this section), then this conflict cannot possibly be with $\sigma\cup P_{[1,j]}\cup S$, which are all to the south of all tiles of $P_{[u',v']}$, and $\vect{P_{u'}P_{v'}}$ is towards the north.

  Therefore, that conflict is between $P_{[u,u']}$ and the pumping. Therefore, by growing instead $S\cup (P_{[u,u']}+\vect{P_iP_j}$, or in other words the translation by $\vect{P_iP_j}$ of all the parts of the assembly above $\sigma\cup P_{[1,i]}$, we can block $P$, by growing the tile placed by the previous iteration of the pumping of $P_{[u',v']}$ at the position of the conflict, instead of the tile placed by $P$ at that position.
\end{itemize}
\end{itemize}

The previous sections have indeed made this plan relatively simple. However, the only other case, where $\vect{P_{u'}P_{v'}}=\vect{P_jP_i}$ is more cumbersome.

\begin{lemma}
  \label{lem:trous}
  Let $P$ be a path producible by some tile assembly system $\mathcal T=(T,\sigma,1)$. If the pumping of two different segments $P_{[a,b]}$ and $P_{[c,d]}$ of $P$ intersect, and are such that $b<c$ and $\vect{P_aP_b}=\vect{P_cP_d}=\vect{P_jP_i}$ (i.e. both pumpings are towards the south), then the algorithm in Section~\ref{sec:algo} also tries to pump a third segment $P_{[e,f]}$ such that $\vect{P_eP_f}=\vect{P_iP_j}$ (i.e in the other direction, towards the north).
\end{lemma}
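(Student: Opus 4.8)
The plan is to use the strict alternation of the algorithm between its \emph{forward} and \emph{backward} modes, together with a dictionary between the orientation of a pumped segment and the way its two indices sit inside $P$. First I would record the following invariant, readable off the algorithm directly: every candidate $(u,v)$ that the algorithm tries to pump occurs in forward mode and satisfies $P_u=P_v+\vect{P_jP_i}$ (it holds for the initial candidate $(i,j)$, and a backward-to-forward transition produces exactly such a pair). Hence, viewed as positions, $\vect{P_uP_v}=\vect{P_iP_j}$ always, so the pumping of that candidate runs towards $\vect{P_iP_j}$ (north) exactly when $u<v$, i.e.\ when the candidate is traversed in increasing order along $P$, and towards $\vect{P_jP_i}$ (south) exactly when $u>v$. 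In algorithmic terms, a south candidate $(u,v)$ ($u>v$) arises when the backward branch $P_{[v_B,|P|-1]}+\vect{P_jP_i}$ grown in the preceding backward step hits $P$ at a point whose index is \emph{larger} than the index $v$ of the tile it translates, whereas a north candidate arises when that hit occurs at a point $P$ had already visited, i.e.\ ``from ahead''. Thus the hypothesis $\vect{P_aP_b}=\vect{P_cP_d}=\vect{P_jP_i}$ just says that both $P_{[a,b]}$ and $P_{[c,d]}$ were produced by such ``from behind'' hits.

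Next I would turn ``both pumpings descend south and intersect'' into a confinement statement, using the cut machinery introduced in Section~\ref{def:main}. From a shortest path between $P_a$ and $P_a+\vect{P_jP_i}$ (and likewise from $P_c$) together with rays in direction $\vect{P_jP_i}$, exactly as in the proofs of Lemma~\ref{lem:adaptwml} and Lemma~\ref{lem:cagefree}, I would build a cut $\mathcal C$ of $\mathbb Z^2$, periodic in direction $\vect{P_jP_i}$, for which the assumed intersection of the two south pumpings forces the portion $P_{[b,c]}$ of $P$, and more generally every forward branch $P_{[s,|P|]}+\vect{P_iP_j}$ and every backward branch $P_{[s,|P|]}+\vect{P_jP_i}$ grown after the step that handles $P_{[a,b]}$, to stay on one bounded side of $\mathcal C$ until it re-crosses $P$. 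Then, following the algorithm's run starting from the step that tries to pump $P_{[a,b]}$, I would use this confinement together with the ``stop at the first left turn of $P$'' stopping rule to show that a backward branch grown before reaching $P_{[c,d]}$ is trapped between the two descending pumpings and $P_{[b,c]}$, and therefore must hit $P$ from ahead, i.e.\ at a point $P$ had already visited. By the dictionary of the first paragraph, the forward step immediately following that backward step has a candidate $(e,f)$ with $e<f$, whose pumping is oriented towards $\vect{P_iP_j}$; and since the algorithm did reach the step that tries $P_{[c,d]}$, it did not halt earlier, so it did reach and try this north candidate $P_{[e,f]}$ (or an earlier one with the same orientation).

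Concretely, the steps in order would be: (i) prove the orientation dictionary from the algorithm's definition; (ii) normalize the hypothesis by taking the \emph{first} intersection of the two south pumpings in the order along $\mathcal C$, and dispose of the degenerate case in which the two south pumpings are translates of one another by the ``bump-widening'' argument of Lemma~\ref{lem:cagefree} (which already forbids two disjoint-index south translates of $P$ from coexisting forever without colliding); (iii) build $\mathcal C$ and establish the confinement; (iv) walk through the mode transitions from the step producing $P_{[a,b]}$ to the step producing $P_{[c,d]}$, and locate the backward branch that is trapped, hence the forward step with an increasing-index candidate; (v) conclude. The step I expect to be the main obstacle is (iv): matching the purely combinatorial ``first intersection / $P$ turns left / $P$ turns right'' bookkeeping of the algorithm with the geometric confinement of (iii), and in particular certifying that the trapping forces a ``from ahead'' hit at the relevant transition rather than yet another ``from behind'' one. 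A secondary subtlety, handled in (ii), is the case where one of the south pumpings meets the prefix $P_{[1,c]}$ (a genuine part of $P$) rather than a proper translate of $P_{[c,d]}$; taking the earliest intersection along $\mathcal C$ and arguing on which side of $\mathcal C$ that prefix lies should reduce it to the main case.
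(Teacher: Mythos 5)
You correctly spot the mode-alternation / orientation dictionary (which side of a candidate $(u,v)$ has the larger index determines whether the pumping points towards $\vect{P_iP_j}$ or $\vect{P_jP_i}$) and you correctly sense that a cut-and-confinement argument is needed. However, the cut you propose — the periodic, $\vect{P_jP_i}$-invariant separator built from shortest paths between $P_a$ and $P_a+\vect{P_jP_i}$, as in Lemma~\ref{lem:cagefree} — is not the right object here, and the confinement you assert (``the assumed intersection of the two south pumpings forces \dots every forward/backward branch to stay on one bounded side of $\mathcal C$'') does not follow from anything you've set up. The intersection of the two south pumpings is a hypothesis about two infinite translated paths, and it does not by itself pin down where $P_{[b,c]}$ or the algorithm's branches go relative to a $\vect{P_jP_i}$-periodic cut.

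The missing ingredient is the one the paper leads with: take a \emph{dominating tile} $P_g$ of $P_{[b,c]}$ for the vector $\vect{P_iP_j}$, and build a non-periodic cut out of the visibility ray $l_i$ of $P_i$, the current stake path $S$, the segment $P_{[b,g]}$, and the dominating ray $l_g$ from $P_g$. Each piece of this cut is uncrossable for the relevant branches for its own structural reason: $l_i$ by visibility of $P_i$ (invariant~\ref{inv:vis} of Lemma~\ref{lem:ih}), $S$ and $P_{[b,g]}$ because the algorithm stops branches at intersections (invariant~\ref{inv:notrespassing}), and $l_g$ because $P_g$ is dominating. With this cut, a backward branch $P_{[g,c]}+\vect{P_jP_i}$ starts on the left-hand side of $\mathcal C$ but must end at $P_c$, which is on the right-hand side; the only face of $\mathcal C$ it can cross is $P_{[b,g]}$, and such a crossing is exactly a forward transition, i.e.\ a candidate with the increasing-index (north) orientation, before the algorithm ever reaches $P_{[c,d]}$. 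Your step (iv), which you flagged as the main obstacle, is precisely this argument; it is made to go through by the dominating ray, which your construction lacks. So the proposal as written has a genuine gap: it does not establish the confinement it relies on, and it would need the dominating-tile cut (or an equivalent uncrossable ray) to close step (iv).
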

\begin{proof}
  This is yet another application of dominating tiles: let $P_g$ be the dominating tile of $P_{[b,c]}$

  Immediately after trying to pump $P_{[a,b]}$, the algorithm will try to grow parts of $P_{[b,g]}+\vect{P_iP_j}$ and $P_{[b,g]}+\vect{P_jP_i}$, depending on the mode:
  \begin{enumerate}
  \item In backwards mode,
    let $\mathcal C$ be the cut of $\mathbb{Z}^2$ defined by $l_i$, the visibility ray of $P_i$, $S$, the current stake path at the time of growing a translation of $P_{[b,c]}$, $P_{[b,g]}$ and then $l_g$, a ray of vector $\vect{P_iP_j}$ from $P_g$.

    The parts of $P_{[g,c]}+\vect{P_jP_i}$ grown by the algorithm start from the left-hand side of $\mathcal C$ (by claim~\ref{inv:notrespassing} of Lemma~\ref{lem:ih}), but end on the right-hand side of $\mathcal C$ (at $P_c$, which is on the right-hand of $\mathcal C$ side by definition of $l_g$).
    Hence, they need to cross $\mathcal C$, and reach $P_g$, before the algorithm can first try to pump $P_{[c,d]}$. The only way to do so (by visibility of $P_i$ and the definition of dominating tiles) is if $P_{[g,c]}+\vect{P_jP_i}$ intersects $P_{[b,g]}$, which means that an intersection will be found by the algorithm, that allows us to pump along $\vect{P_iP_j}$. See Figure~\ref{fig:trous}.\label{for}

    \begin{figure}[ht]
      \begin{center}
        \includegraphics[valign=t]{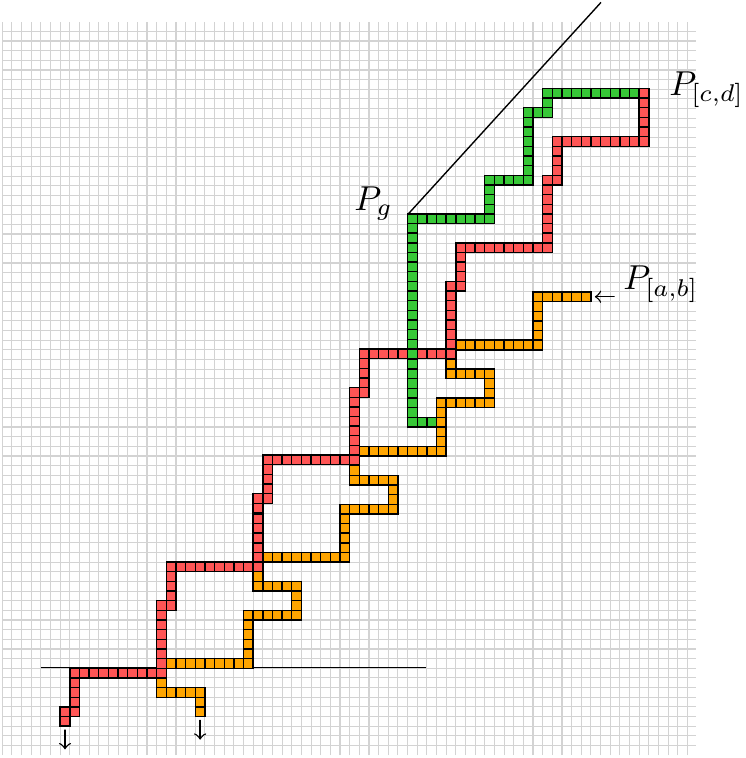}\hfill
        \includegraphics[valign=t]{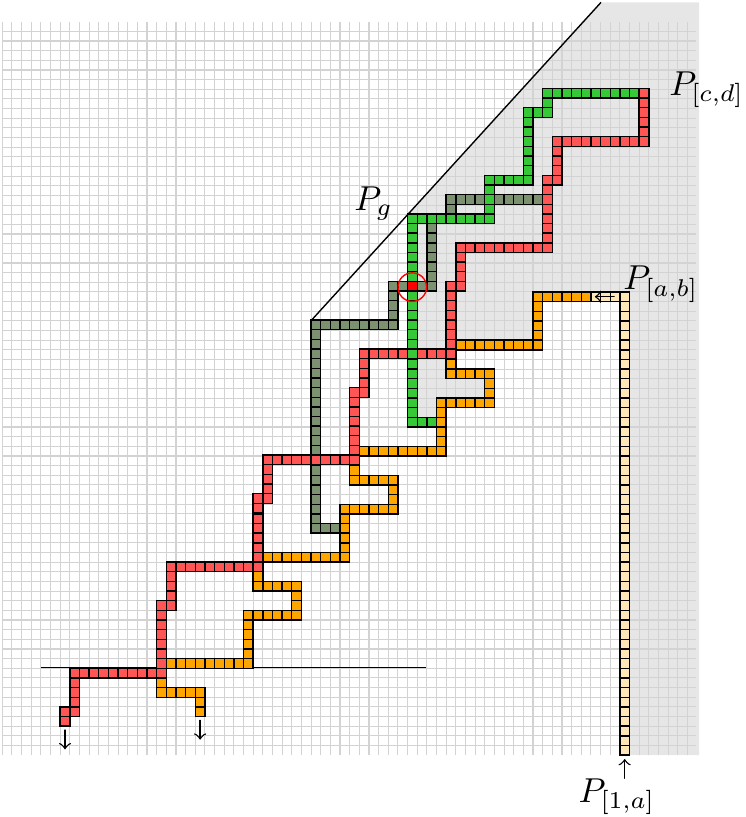}
      \end{center}
      \caption{If two pumpings of segments of $P$ (here, the first pumping, of $P_{[a,b]}$ is in dark orange, and the second pumping, of $P_{[c,d]}$, is in red) intersect, then the segment between them (in green on the drawing on the left) yields another intersection, in the ``correct'' direction, i.e. $\vect{P_iP_j}$ (as opposed to $\vect{P_jP_i}$). This is because the translation $P_g+\vect{P_jP_i}$ of the dominating tile $P_g$ of $P_{[b,c]}$ is outside the connected component enclosed by $l_i$ (the visibily ray of $P_i$), $P_{[i,g]}$ and $l_g$ (the dominating ray of $P_g$), but $P_c$ is inside that component, hence $P_{[g,c]}+\vect{P_jP_i}$ has to cross $P_{[b,c]}$.}
      \label{fig:trous}
    \end{figure}

  \item In forward mode, an intersection will also be found before reaching $P_g$, since $P_g+\vect{P_iP_j}$ (which is on the translation grown by the algorithm in forward mode) is on the right-hand side of $\mathcal C$, but $S$ starts on the left-hand side (by claim~\ref{inv:notrespassing} of Lemma~\ref{lem:ih}). The algorithm will then move on to forward mode, which is handled in case~\ref{for} above.

  \end{enumerate}
\end{proof}

\begin{theorem}
  Let $P$ be a path assembly producible by some tile assembly system $\mathcal T=(T,\sigma,1)$, such that $P$ grows at least $\fbound$ rows above all the tiles of $\sigma$. Then $P$ is pumpable or fragile.
\end{theorem}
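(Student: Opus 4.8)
The plan is to stitch together the lemmas of Sections~\ref{sec:initial}--\ref{sec:conc} along the roadmap of Section~\ref{sum:concl}, at each stage either concluding that $P$ is pumpable or fragile, or else extracting enough structure to feed the next stage. First I would define $\fbound$ to be the final bound $\ebound_{|\rstake|}$ produced at the end of the argument; since every intermediate threshold appearing below --- $\dbound$ from Lemma~\ref{lem:init:cond}, the width and height of $\rdanger$ from Lemma~\ref{lem:band}, $\sbound$ from Lemma~\ref{lem:banana}, the cage-free bound of Corollary~\ref{cor:cagefree}, and $\cbound$ from Lemma~\ref{lem:stake:exist} --- is a constant depending only on $|T|$ and $|\dom\sigma|$, so is $\fbound$. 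Using that any prefix of $P$ grows independently, I would also replace $P$ by its shortest prefix reaching height $\fbound$, so that its last tile is one of its highest tiles; this is what lets us invoke Lemmas~\ref{lem:watershed0}, \ref{lem:order}, \ref{lem:cagefree}.

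Next come the three ``preparation'' steps. By Lemma~\ref{lem:init:cond}, either $P$ is pumpable or fragile, or we obtain indices $i<j$ with $P_i,P_j$ of the same type, both with north output side visible from the west on all of $P$, lying close to $\sigma$, with $P_{[1,j]}$ confined to $\rdanger$ and $P$ never more than $2|T|$ rows below $\sigma$. Feeding $(i,j)$ to the algorithm of Section~\ref{sec:algo} and applying Lemma~\ref{lem:banana} (first reset) together with Lemma~\ref{lem:stake:exist}: either $P$ is pumpable or fragile (via Corollary~\ref{cor:cagefree} when the algorithm halts for lack of an intersection, or directly when it blocks or pumps a segment), or the algorithm reaches a step $(u,v,S)$ with $P_u$ above $\sbound$, with $S$ confined to the first $\cbound$ rows, with $S$ (resp.\ $S+\vect{P_jP_i}$) growable from $\sigma\cup P_{[1,j]}$ (resp.\ $\sigma\cup P_{[1,i]}$) all the way to $P_v$ (resp.\ $P_u$), and with neither of these reaching $\rdanger$ above $\sbound$. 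Then I would apply Lemma~\ref{lem:banana} a second time, now taking $\rstake$ (a band of height $\sbound$ around $S$) as a new seed, to obtain a height $\ebound_1$ above which $P$ can never return to $S$.

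With this in hand the endgame follows Section~\ref{sum:concl}. If $P_{[u,|P|]}$ avoids both $P_{[u,|P|]}+\vect{P_iP_j}$ and $P_{[u,|P|]}+\vect{P_jP_i}$ above $\ebound_1$, Corollary~\ref{cor:cagefree} concludes. Otherwise there is an intersection above $\ebound_1$, giving a candidate segment $P_{[u',v']}$ to pump, and crucially its pumping cannot conflict with anything in $\sigma\cup P_{[1,j]}\cup S$, since that subassembly lies entirely to the south and below. If the candidate is oriented like $\vect{P_iP_j}$, I would grow its pumping from $\sigma\cup P_{[1,i]}\cup(S+\vect{P_jP_i})\cup P_{[u,u']}$: an infinite pumping makes $P$ pumpable or fragile, while a conflict must then lie in $P_{[u,u']}$, so growing instead $S\cup(P_{[u,u']}+\vect{P_iP_j})$ puts, at the conflicting position, the tile of the previous pumping iteration rather than the tile $P$ wants --- hence $P$ is fragile. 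If the candidate is oriented like $\vect{P_jP_i}$, the pumping goes south and could still hit parts of the assembly below $\bbound$, so I would keep running the algorithm: by Lemma~\ref{lem:trous}, two successive south-oriented candidate pumpings that intersect force a third, north-oriented candidate, and each new candidate requires an intersection above the current threshold (else Corollary~\ref{cor:cagefree} applies and advances the index). Iterating the cage-free bound at most $|\rstake|$ times produces $\ebound_1,\ebound_2,\ldots,\ebound_{|\rstake|}$ and lands us in the north-oriented case, with $\fbound=\ebound_{|\rstake|}$.

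I expect the main obstacle to be the bookkeeping of this last iteration together with the second application of the reset lemma: one must re-run Lemma~\ref{lem:banana} with $\rstake$ as the seed while leaving the stake path $S$ of Lemma~\ref{lem:stake:exist} untouched, and one must verify that every extra pumping attempt produced by Lemma~\ref{lem:trous} genuinely lives above the successive thresholds $\ebound_m$ --- otherwise the bound $|\rstake|$ on the number of iterations, and hence the finiteness of $\fbound$, is not justified. The geometric core (dominating tiles, visibility rays, and the cuts that enclose the pumpings) has already been isolated in Lemmas~\ref{lem:uturn}, \ref{lem:cagefree} and~\ref{lem:trous}, so what remains is organizational rather than conceptual.
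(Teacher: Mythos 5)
Your proposal follows essentially the same route as the paper: initial conditions (Lemma~\ref{lem:init:cond}), a first reset (Lemma~\ref{lem:banana}), extraction of a stake path (Lemma~\ref{lem:stake:exist}), a second reset with $\rstake$ as a new seed, and the final pumping-or-blocking argument with translated conflicts, using Lemma~\ref{lem:trous} to bound the number of south-oriented candidate pumpings. The one place your account is looser than the paper's is the south-oriented case: the paper splits it into (a) the pumping enters $\rstake$, which is the iteration step bounded by $|\rstake|$ via Lemma~\ref{lem:trous}, and (b) the pumping does not enter $\rstake$, which concludes directly by the same translate-the-conflict argument as the north case (the conflict can then only be with $P_{[v,v']}$). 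Your plan to ``keep running the algorithm'' in the south case covers (a) but elides (b); it is worth noting that (b) is what keeps the iteration count finite when $\rstake$ is avoided, and that the direct conclusion there is essentially identical to the north-oriented case.
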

\begin{proof}
  This is an assembly of the previous lemmas, along with a formalization of the plan presented in the beginning of this section.

  Let $(u,v,S)$ be the parameters reached by the algorithm with input $i$ and $j$ ($i<j$, the two indices of tiles with visible north glues defined in Section~\ref{sec:initial}), on the first step such that $P_u$ is at least $\sbound$ above all the tiles of $\sigma$.

  We then use the Reset Lemma (Lemma~\ref{lem:banana}) once again to find a first index $P_c$ of $P$, such that $P_{[c,|P|]}$ is completely above $\bbound_1$ (as defined in Corollary~\ref{cor:cagefree}).
  If this is not possible, then by Lemma~\ref{lem:banana}, $P$ is pumpable or fragile.

  After that, we continue running the algorithm until either it halts without finding an intersection (and concluding with Lemma~\ref{lem:cagefree}), or else trying to pump again.
  Each time we try to pump a new segment $P_{[u',v']}$ of $P$, two different cases can occur in the algorithm:

  \begin{enumerate}
  \item The pumping is in the same direction as $\vect{P_iP_j}$, i.e. $\vect{P_{u'}P_{v'}}=\vect{P_iP_j}$.
    In this case, we grow $\sigma\cup P_{[1,i]}\cup (S+\vect{P_jP_i})\cup P_{[u,u']}$, and then the maximal prefix of the pumping of $P_{[u',v']}$ that can grow.

    If that prefix is infinite, we are done: $P$ is fragile or pumpable, depending on whether or not $P_{[i,u]}$ can still grow from the resulting assembly. This is actually found by the algorithm.

    Else, there is a conflict with some part of the assembly. Since we chose $P_{[u',v']}$ to be completely above $S$, and $\vect{P_{u'}P_{v'}}$ is oriented towards the north, the pumping of $P_{[u',v']}$ cannot possibly conflict with $\sigma\cup P_{[1,i]}\cup (S+\vect{P_jP_i})$. Therefore, the only possible conflict is with $P_{[u,u']}$, at some position $P_k$.

    However, we can now translate that large part of the assembly and prevent $P$ from growing $P_k$: indeed, we can grow $\sigma\cup P_{[1,j]}\cup S$, and then $P_{[u,u']}+\vect{P_iP_j}$, and finally the pumping of $P_{[u',v']}$, that can now grow one iteration further, and in particular grow at the position of $P_k$, which means that $P$ is fragile.\label{goodcase}

  \item The pumping is in the other direction, i.e. $\vect{P_{u'}P_{v'}}=\vect{P_jP_i}$.
    There are two cases:
    \begin{enumerate}
    \item Either the pumping of $P_{[u',v']}$ enters $\rstake$, in which case we proceed to the next step of the algorithm. Note, however, that this cannot happen more than $|\rstake|$ times: indeed, by Lemma~\ref{lem:trous}, if this happens at least $|\rstake|$ times, then case~\ref{goodcase} above also happens before, and $P$ is pumpable or fragile.
    \item Or the pumping does not enter $\rstake$, in which case we use a strategy similar to case~\ref{goodcase} above:
      we start by growing $\alpha=\sigma\cup P_{[1,j]}\cup S\cup P_{[v,v']}$, and from there the maximal prefix of the pumping of $P_{[u',v']}$ that can grow. If this prefix is infinite, we are done, by the same argument as before: either $P_{[j,v]}$ can still grow from $\alpha$, which means that $P$ is pumpable, or $P_{[j,v]}$ cannot grow from $\alpha$, which means that $P$ is fragile.

      If this prefix is not infinite, there is a conflict. But since the pumping does not enter $\rstake$, that conflict can only be with $P_{[v,v']}$, i.e. with $P_k$ for some $k\in\{v,v+1,\ldots,v'\}$. But we can grow a different assembly: $\sigma\cup P_{[1,i]}\cup ((S\cup P_{[v,v']})+\vect{P_jP_i})$, and from there the pumping of $P_{[u',v']}$ until the position of the conflict. That pumping will be able to grow one iteration further, and in particular break $P_k$.

    \end{enumerate}

    In total, we need to iterate the bound given by Lemma~\ref{lem:cagefree} a constant number of times (at most $|\rstake|$ times), yielding bounds $\ebound_1$, $\ebound_2$,\ldots, $\ebound_{|\rstake|}$, which still yields a constant (yet very large) bound.
  \end{enumerate}
\end{proof}

\end{document}